	\newtheorem{theorem}{Theorem}[section]
	\newtheorem{proposition}[theorem]{Proposition}
	\newtheorem{corollary}[theorem]{Corollary}
	\newtheorem{definition}[theorem]{Definition}
	\newtheorem{example}[theorem]{Example}
	\renewcommand{\footnotetextcopyrightpermission}[1]{} %
	\renewcommand{\footnotetextauthorsaddresses}[1]{} %
	\def\@ACM@checkaffil{%
		\if@ACM@instpresent\else
		\ClassWarningNoLine{\@classname}{No institution present for an affiliation}%
		\fi
		\if@ACM@citypresent\else
		\ClassWarningNoLine{\@classname}{No city present for an affiliation}%
		\fi
		\if@ACM@countrypresent\else
		\ClassWarningNoLine{\@classname}{No country present for an affiliation}%
		\fi
	}
\newcommand{\best}{\textsf{best}}
\newcommand{\supp}{\textsf{supp}}
\newcommand{\rep}{\textsf{score}}
\newcommand{\share}{\textsf{share}}
\newcommand{\score}{\textsf{score}}
\newcommand{\DO}{\textup{DO}}
\newcommand{\GP}{\textup{GP}}
\newcommand{\STV}{\textup{STV}}
\newcommand{\cmark}{%
	\tikz[scale=0.23,draw=green!50!black] {
		\draw[line width=0.7,line cap=round] (0.25,0) to [bend left=10] (1,1);
		\draw[line width=0.8,line cap=round] (0,0.35) to [bend right=1] (0.23,0);
}}
\newcommand{\xmark}{%
	\tikz[scale=0.23,draw=black!50!red] {
		\draw[line width=0.7,line cap=round] (0,0) to [bend left=6] (1,1);
		\draw[line width=0.7,line cap=round] (0.2,0.95) to [bend right=3] (0.8,0.05);
}}
\title{Reallocating Wasted Votes in Proportional Parliamentary Elections with Thresholds}
	\author{Théo Delemazure \\ CNRS, LAMSADE, Université Paris Dauphine - PSL
		\and
		Rupert Freeman \\ Darden School of Business
		\and
		Jérôme Lang \\ CNRS, LAMSADE, Université Paris Dauphine - PSL
		\and
		Jean-François Laslier \\ CNRS, Paris School of Economics
		\and
		Dominik Peters \\ CNRS, LAMSADE, Université Paris Dauphine - PSL
	}
	\author{Théo Delemazure}
	\email{theo.delemazure@dauphine.eu}
	\affiliation{\institution{CNRS, LAMSADE, Université Paris Dauphine - PSL} \country{France}}
	\author{Rupert Freeman}
	\email{FreemanR@darden.virginia.edu}
	\affiliation{\institution{Darden School of Business} \country{USA}}
	\author{Jérôme Lang}
	\email{jerome.lang@lamsade.dauphine.fr}
	\affiliation{\institution{CNRS, LAMSADE, Université Paris Dauphine - PSL} \country{France}}
	\author{Jean-François Laslier}
	\email{jean-francois.laslier@ens.fr}
	\affiliation{\institution{CNRS, Paris School of Economics} \country{France}}
	\author{Dominik Peters}
	\email{dominik.peters@lamsade.dauphine.fr}
	\affiliation{\institution{CNRS, LAMSADE, Université Paris Dauphine - PSL} \country{France}}
\begin{abstract}
	In many proportional parliamentary elections, electoral thresholds (typically 3-5\%) are used to promote stability and governability by preventing the election of parties with very small representation. However, these thresholds often result in a significant number of ``wasted votes'' cast for parties that fail to meet the threshold, which reduces representativeness. One proposal is to allow voters to specify replacement votes, by either indicating a second choice party or by ranking a subset of the parties, but there are several ways of deciding on the scores of the parties (and thus the composition of the parliament) given those votes. We introduce a formal model of party voting with thresholds, and compare a variety of party selection rules axiomatically, and experimentally using a dataset we collected during the 2024 European election in France. We identify three particularly attractive rules, called Direct Winners Only (DO), Single Transferable Vote (STV) and Greedy Plurality (GP).
\end{abstract}
\begin{document}

\addtocontents{toc}{\protect\setcounter{tocdepth}{-1}} %
\maketitle

\iflatexml\else
\vspace{1cm}
\tableofcontents
\addtocontents{toc}{\protect\setcounter{tocdepth}{2}}
\fi

\section{Introduction}
\label{sec:intro}

Many democracies elect their parliaments using proportional representation, typically implemented using party lists of candidates, with each voter voting for one party. Most countries impose an \emph{electoral threshold}, a minimum percentage of votes (usually between $0$ and $6\%$) that is necessary for a party to enter parliament \citep{Farrell2011,pukelsheim2014proportional}. Lists that do not gather the required votes get no seats, and the votes for those lists are ``lost'' or ``wasted'' and not used to distribute seats in parliament. In this paper, we will explore ways to prevent this phenomenon of wasted votes.

Not all jurisdictions that use proportional representation impose a threshold. While no votes are lost in such a system, it comes with the risk of having a fragmented parliament with many parties, which makes forming and maintaining a governing coalition difficult. Infamously, the Weimar Republic (1918--1933) did not use a threshold and saw the number of parties present in the \emph{Reichstag} steadily grow to up to 15 in 1930. This led to political chaos: over 13 years there were 16 governments (only five of which had a majority) and 8 elections. The lack of a threshold and the resulting popular dissatisfaction with the political system are widely seen as one contributing factor to the rise of National Socialism \citep{falter2020hitlers}, though the magnitude of its influence is disputed \citep{antoni1980weimar}. Citing this experience, post-war Germany instituted a 5\% threshold in 1953.

\iflatexml
\begin{figure}
	\includegraphics[width=5.8cm]{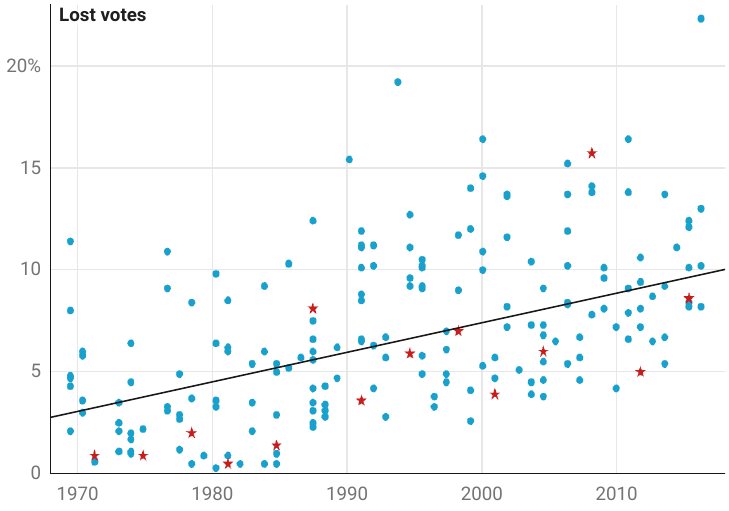}
	\vspace{-8pt}
	\caption{The fraction of lost votes in German federal and state elections (shown as red stars and blue circles, respectively) between 1970 and 2022. Reproduced from \citet{benken2023differenziert}.}
	\label{fig:germany-trend}
\end{figure}
\else
\begin{wrapstuff}[r,type=figure,width=6cm]
	\includegraphics[width=5.8cm]{figs/lost_votes_in_germany.pdf}
	\vspace{-8pt}
	\caption{The fraction of lost votes in German federal and state elections (shown as red stars and blue circles, respectively) between 1970 and 2022. Reproduced from \citet{benken2023differenziert}.}
	\label{fig:germany-trend}
\end{wrapstuff}
\fi
While thresholds limit the number of parties in parliament and improve governability, they also have drawbacks. They can lead to a significant number of wasted votes, which violates the principles of proportionality and equality of votes. \citet{benken2023differenziert} has cataloged the fraction of votes that were lost in German elections since 1970 and finds a steady upward trend (\Cref{fig:germany-trend}). His dataset includes the 2022 election in the state of Saarland, where a record of 22\% of votes were lost. Another example is the 2019 election for the French members of the European Parliament, where the 5\% threshold led to 19.8\% of wasted votes. 
An extreme example is Turkey's 2002 election that used a 10\% threshold (since reduced to 7\%) and where 46\% of votes were lost -- resulting in a party with 34\% of the votes obtaining almost a two-thirds majority in parliament \citep{ozel2003turkey}.
Besides wasting votes and thereby reducing representativeness or even creating false majorities, thresholds also discourage the formation of new parties, hinder the growth of small parties, and require voters to vote strategically \citep{decker2016}.

Thus, na\"ively, there appears to be tradeoff between the problems of a threshold and the risk of political fragmentation. However, there are promising proposals that could alleviate the problems of the threshold without taking away its advantages. In particular, we could elicit additional information from voters regarding their preferences over parties. For example, we could ask voters for a second choice of party. If their first choice of party misses the threshold, their vote is instead counted for the second choice. More generally, we could allow voters to provide a (partial) \emph{ranking} of the parties, and keep redistributing the vote until we reach a party that met the threshold.

This idea has been extensively discussed in Germany under the name ``replacement vote'' (\emph{Ersatzstimme}, sometimes translated into English as ``spare vote''). It appears in the election program of one party for the 2025 German parliament election \citep[page 17]{volt2025}, and laws implementing it have been proposed (but not adopted) in three German states in 2013--15. It is also the main subject of a recent academic edited volume in German language \citep{benken2023mehrdemokratie}. Elsewhere, the idea has been discussed by the Independent Electoral Review in New Zealand, a committee established by the Justice Minister, which noted the strong support that the proposal received during their consultation, though they recommended to instead lower the threshold to not complicate the voting process \citep[numbers 4.34 and 4.58]{nz2023report}.

Despite this broad attention to these proposals, they have never been studied from a social choice perspective (to the best of our knowledge). As we will see, there are many interesting voting theoretic questions not answered by the high-level description of how to process the voters' second choices or rankings.
To study these questions, we introduce a new framework of \emph{party selection rules}, which take as input a profile of (possibly truncated) rankings over parties and a threshold $\tau$ (an absolute number of votes). They output a subset of parties: those that will be included in the parliament. For a given selection of parties, a voter is \emph{represented} by a party $c$ if it is their most-preferred party in the selection. We require that a selection should be \emph{feasible}, in the sense that each selected party represents at least $\tau$ voters. We view such a feasible selection as fully specifying the make-up of a parliament, though in practice we will need to apply an apportionment rule (such as D'Hondt) to determine exactly how many seats each party obtains, as a function of the number of voters it represents.
Because our model allows truncated rankings, it in particular allows for applications where voters can rank at most two parties, which is the most commonly discussed variant in the threshold context.

As has been recognized in the discussion in Germany, there are at least two possible party selection rules \citep[pp. 52--62]{benken2023mehrdemokratie}. The simplest is what we call the \emph{direct winners only} (DO) rule, which selects exactly those parties who are ranked in first place by at least $\tau$ voters, and assigns voters who do not rank any of those parties in first place to their most-preferred selected party. Thus, under DO, there is just a single round of reassignments. Another option reassigns votes in multiple rounds. We call the resulting rule the \emph{single transferable vote} (STV) due to its close similarity to the rule of the same name used in systems that let voters rank candidates instead of parties \citep{Tideman95}. STV works by repeatedly identifying the party with the fewest first-place votes, and eliminates it from the profile. It repeats this until the set of remaining parties is feasible. We add a third party selection rule to the collection: the \emph{greedy plurality} (GP) rule sorts parties by the number of voters placing it in first position, then iteratively adds parties to the selection starting from those with largest score, as long as the addition keeps the selection feasible.

We compare these three rules (DO, STV, GP) using the axiomatic method, by defining a variety of properties appropriate for the party selection model with thresholds. For example, DO is the only of these rules satisfying monotonicity, and we give an axiomatic characterization of DO using a reinforcement-like consistency condition. For STV, we show that it satisfies clone-proofness \citep{tideman1987independence} and represents solid coalitions \citep{dummett1984voting}, and we characterize STV using an independence condition that is quite strong but formalizes a common normative intuition advanced in favor of STV. For GP, we note that it satisfies set-maximality which is a kind of efficiency axiom.

Parliamentary elections with thresholds typically involve a good deal of strategizing on the part of the voters \citep{Abramson}. In particular, voters who prefer a party that is unlikely to reach the threshold often strategically vote for a safer party.%
\footnote{Another kind of strategic voting in proportional representation is the so-called \emph{coalition insurance voting} \citep{gschwend2007ticket,freden2017opinion,Shikano01052009}, where voters preferring a large party may strategically vote for a smaller party at risk of not meeting the threshold, in the hope that this smaller party can form a coalition with the preferred larger party. We include a brief discussion of this phenomenon in \Cref{sec:insurance}.}
Intuitively, this strategizing is less effective when voters can rank several parties. We formalize this intuition through a sequence of strategyproofness axioms. We find that STV fails those axioms in the worst case, but are able to establish that DO and GP are strategyproof in a number of politically plausible situations.

Given the incentives for voters to misrepresent their preferences in the current systems, we would expect significant changes in voting behavior if any of our ranking-based methods were to be adopted. Since there is little empirical research about strategic voter behavior in threshold elections \citep{hellmann23warum},%
\footnote{A notable exception is a study by \citet{graeb2018ersatzstimme}, which asked $n = 828$ participants in Germany in 2017 how they would vote if they could indicate a second-choice replacement vote. Like us, they find that participants vote more frequently for smaller parties, and the the number of lost votes decreases (they applied the DO rule). However, their study does not compare different party selection rules, and it introduces two different changes to the voting system simultaneously: addition of a replacement vote but also a combination of the person-bound and party-bound votes (\emph{Erststimme} and \emph{Zweitstimme}) into a single vote, making it more difficult to estimate the impact of the replacement vote alone.}
we decided to conduct our own experiment in the context of the 2024 election of the French representatives to the European parliament, motivated by the 2019 election having had a high percentage of lost votes.
Our experiment was based on a survey inviting participants to report how they would vote if they could rank several parties in the election. We collected two datasets, one with $n = 3\,046$ participants we recruited via social media, and one with $n = 1\,000$ participants recruited by a survey research company.
By comparing the ranking data to the voting intentions of our survey participants, we are able to perform a counterfactual analysis. 
In particular, we can study the strategic behavior of the voters. For example, we find that between 5\% and 29\% of voters strategically vote for a larger party in the actual election, while choosing a smaller party as their top-ranked party when given the possibility to provide a ranking.
In addition, we can simulate the behavior of the three party selection rules that we studied. The results confirm the theoretical predictions: our party selection rules would significantly reduce the number of wasted votes without fragmenting the parliament. Our results are robust to adding noise to the data, to increasing the threshold above 5\%, and to truncating the rankings to just 3 places.

\subsubsection*{Other Applications of the Model}
Although our main motivation is to design better parliamentary election systems, our formal model of party selection rules applies more generally to any multi-item selection context where each item is required to have enough voter support, where `voter support' means being the voter's preferred item within the selection. This applies whenever we have to find a clustering of voters, each cluster being associated with some item. For example, consider a university program that needs to select which optional courses to open in a particular year, given that each student will choose their preferred course from the selection, and that a course should be opened only if is taken by a minimal number of students. Other examples are selecting a set of activities to be organized for a group on a given day, such that every participant will choose one [the \emph{group activity selection problem}, \citealp{darmann2012group,DarmannDDS22}], or ordering a set of dishes for a company lunch, where everyone will eat a portion of their preferred dish out of the selection, given that the caterer will not prepare a dish in small quantities. 
Two major differences between parliamentary elections and clustering-based settings are that the latter do not involve an apportionment step, and that voters' preferences bear only on the item they are assigned to, while in parliamentary elections they usually depend on the final composition of the parliament.

\subsubsection*{Related Work}

One can conceptualize our formal model as being about multi-winner voting based on input rankings. The literature generally studies this topic with complete input rankings (not truncated) and with an output containing a fixed number of winners \citep{FaliszewskiTrends2017}, with some exceptions studying a variable number of winners \citep{FaliszewskiST20,LangMSZ17,BrandlP19}.
However, this literature has not studied thresholds.

Party-list elections are typically studied in models where voters choose a single party (which we call \emph{uninominal} voting), in particular as part of apportionment theory \citep{FairRepresentationBook,pukelsheim2014proportional}.
Some works going beyond uninominal voting are \citet{BrillGPSW24} who allow voters to approve several parties (further studied by \citet{delemazure2023strategyproofness}), and \citet[Section 7]{AiriauACKLP23} who consider apportionment based on complete rankings of the competing parties. None of these works addresses the question of thresholds.

When thinking of our model as applying more generally to item selection contexts where voters care about the item assigned to them, some other work is relevant.
In particular, our model becomes a special case of group activity selection with group size constraints studied by \citet{DarmannDDS22} who focus on stability and efficiency notions, their mutual compatibility, and the computational difficulty of finding solutions.
The model can also be seen as a type of facility location problem where facilities may only be opened if they will serve a minimum number of users. Algorithmic questions about such problems have been studied 
\citep{svitkina2010lower,ahmadian2012improved,li2019facility}. However, \emph{upper} bounds on the number of users of a facility are much more common in the literature (see \citet{AzizCLP20}).
The Monroe multi-winner voting rule, designed for proportional representation, also shares some similarities to our model, since it involves assigning voters to candidates. However, this rule operates with a fixed number of winners, and does not necessarily assign voters to their most-preferred committee member.

Strategic voting behavior in elections using proportional representation systems with or without thresholds have been observed in several empirical studies, including in a 2003 Israel election \citep{blais2006do}, in the 2010 election in Sweden involving thresholds \citep{freden2014threshold}, and in the 2014 elections in Belgium \citep{verthe2018strategic}, as well as in laboratory studies \citep{lebon2018sincere,freden2016coalitions}.

\section{Preliminaries}
\label{sec:preliminaries}

Let $C = \{c_1, \dots, c_m\}$ be a set of \emph{parties} and $N = \{1, \dots, n\}$ be a set of \emph{voters}. A \emph{truncated ranking} $\succ$ is a ranking of a subset of parties. The non-ranked parties are considered to be less preferred than the ranked ones, and to be incomparable to each other. For set of candidates $S,T \subseteq C$, we write $S \succ T$ if for all $a \in S$ and $b \in T$, we have $a \succ b$.

A preference \emph{profile} is a collection of truncated rankings $P = (\succ_1, \dots, \succ_n)$. A \emph{full} profile is a profile in which every voter ranks all the parties (so the rankings are not truncated). A \emph{uninominal} profile is a profile in which every voter ranks exactly one party.

An \emph{outcome} is a (possibly empty) subset of parties $S \subseteq C$. Together, an outcome and a profile jointly define a unique mapping $\best_S: N \to S \cup \{ \emptyset \}$ that assigns every voter $i$ to her most-preferred party $\best_S(i)$ 
among those in $S$, and to the empty set if she does not rank any party in $S$. We say that $c = \best_S(i)$ is the \emph{representative} of voter $i$ in $S$, and that $i$ is \emph{unrepresented} in $S$ if $\best_S(i) = \emptyset$.\footnote{Thus, the `representative' of a voter is a party, not an individual candidate.} For a party $c \in S$, we define the \emph{supporters} of $c$ as the set of voters of which it is the representative in~$S$, $\supp_S(c) = \{ i \in N : \best_S(i) = c \} \subseteq N$, and the \emph{score} of $c$ is $\rep_S(c) = |\supp_S(c)|$. 

Given a profile $P$ and a \emph{threshold} $\tau \in \mathbb{N}$ with $0 \le \tau \le |N|$,  an outcome $S$ is \emph{feasible} (for $P$ and $\tau$) if every party $c \in S$ has at least $\tau$ supporters, that is, $\rep_S(c) \ge \tau$ for all $c \in S$.%
\footnote{Note that in our model, the threshold is an absolute number of voters rather than a fraction. This choice makes the notation clearer. Also, we do not consider issues of abstention, and the total number of voters is fixed.}
Clearly, the empty set is feasible, and a subset of a feasible set is feasible. 
If $P$ is a full profile, then every singleton set $S = \{c\}$ is feasible, because we will then have $\rep_S(c) = |N| \ge \tau$. When $\tau = 0$, every subset of parties is feasible. When $\tau = 1$ and we have a full profile, then $S$ is feasible if and only if $S$ does not contain two parties $c$ and $c'$ such that $c$ Pareto-dominates $c'$. 
When $\tau = n$ (or more generally $\tau > \frac{n}{2}$) and we have a full profile, then $S$ is feasible if and only if it is singleton or the empty set. Hence, single-winner voting on full profiles will be a special case of our model with $\tau = n$ if we force rules to return a feasible non-empty outcome if there exists one.

Once the party selection $S$ has been determined, the parliament is made up using an apportionment method, each party $c \in S$ being represented proportionally to $\score_S(c)$. Our setup abstracts away issues arising due to apportionment: while in practice an outcome will need to be reduced to a fixed number of available parliament seats, we will not study this ``second step,'' trusting that it won't affect our conclusions in interesting ways. 

\section{Party Selection Rules}

A \emph{party selection rule} is a function $f$ that takes as input a profile $P$ and a threshold $\tau$, and returns a feasible outcome $f(P, \tau)$. In this paper, we will focus on three specific rules:
\begin{itemize}
	\item \emph{Direct winners only (DO):} This rule selects the outcome whose support consists of all parties who are ranked in top position by at least $\tau$ voters. More formally:
	\[
		f(P, \tau) = \{ c \in C : |\{ i \in N : c \succ_i C \setminus \{c\} \}| \ge \tau \}.
	\]
	\item \emph{Single transferable vote (STV):} This rule starts with the set $S_0 = C$. Then, at each step $k \ge 0$, if $S_k$ is feasible, it returns this set. Otherwise, the rule identifies the party $c \in S_k$ who is ranked first (among parties of $S_k$) by the fewest voters and sets $S_{k+1} = S_k \setminus \{c\}$.
	\item \emph{Greedy plurality (GP):} This rule starts with the empty set and goes over each party in decreasing order of plurality score (the number of voters ranking it first), adding it to the outcome set if the outcome remains feasible; otherwise it is skipped.
\end{itemize}

Except for DO, there might be ties in the rules when two parties have the same plurality scores at some point during the execution of STV or GP. In this case, we use some fixed tie-breaking order on the parties to decide how to proceed. This assumption makes it easier to analyze the rules. However, all the results we present in this paper can be extended to the case of irresolute rules with parallel-universe tiebreaking \citep{conitzer2009preference,freeman2015general}, where all possible ways to break ties are considered to obtain the set of outcomes.

DO, STV, and GP are polynomial-time computable, as well as easy to understand. Many other interesting party selection rules can be defined. For example, one can consider rules that optimize some objective function over the set of feasible outcomes, such as maximizing the number of represented voters or the number of voters whose most-preferred party is in the outcome. However, these rules are hard to compute, as well as hard to verify and understand by voters, thus not particularly suitable to be used in parliamentary elections. However, they are interesting for some other contexts covered by our model, such as group activity selection or facility location. We define and study these rules in \Cref{app:max-rules}.

On uninominal profiles, DO, STV, and GP are equivalent. However, in the general case, they can give different outcomes, as in the following example.

\begin{example} \label{ex:example_diff}
	Let $P$ be the following profile:
	\begin{align*}
		\text{4: } & a \succ b \succ c&
		\text{3: } & b \succ c &
		\text{2: } & c \succ b \succ a &
		\text{2: } & d &
		\text{4: } & d \succ b
	\end{align*}
	with the threshold $\tau = 5$. In this profile, the only party with a plurality score higher than $5$ is $d$, thus DO returns $\{d\}$. If we run the STV rule, we will eliminate $c$ first, as it has the lowest plurality score, and then $a$ because the $c$ voters are now supporting $b$, and we obtain the outcome $\{d,b\}$. Finally, if we run the GP rule, we will add $d$ first, then we will add $a$ since it is the party with the second highest plurality score and $\{d,a\}$ is feasible. However, we will not add $b$ since $\{d,a,b\}$ is not feasible ($a$ has only $4$ supporters), and same for $c$. Thus, the outcome of GP is $\{d,a\}$. 
\end{example}

The rule used in actual elections with thresholds selects exactly those parties whose plurality score is above the threshold. Therefore, as a party selection rule, it coincides with DO. However, implicitly the associated representation and score functions $\best_S$ and $\rep_S$ are different, because the uninominal system ignores all votes that did not rank one of the selected parties on top instead of assigning them to their best selected party (one can think of it as first transforming the profile by removing everything below the first party ranked by each voter, and then running DO on the resulting top-truncated profile). This difference can have a large impact on the number of wasted votes and the final composition of the parliament. We give a detailed comparison in \Cref{app:uninominal}.

\section{Axiomatic Analysis}
\label{sec:axioms}
\setlength{\textfloatsep}{0pt plus 1.0pt minus 2.0pt}
\begin{table}
	\begin{center}
		\begin{tabular}{l c c c }
			\toprule
			 & DO & STV & GP \\
			\midrule
			Set-maximal & \xmark & \xmark& \cmark\\
			Inclusion of direct winners & \cmark\rlap{*} & \cmark\rlap{*} & \cmark \\
			Representation of solid coalitions & \xmark & \cmark& \xmark \\
			Threshold monotonicity& \cmark & \cmark & \xmark\\
			Independence of definitely losing parties  & \xmark & \cmark\rlap* &  \xmark \\
			Independence of clones & \xmark & \cmark& \xmark \\
			Reinforcement for winning parties  & \cmark\rlap* & \xmark & \xmark \\
			Monotonicity  & \cmark & \xmark & \xmark  \\
			Representative-strategyproof (one risky party)  & \xmark & \xmark & \cmark  \\
			Share-strategyproof (safe first or second) & \cmark & \xmark & \xmark \\
			Share-strategyproof (representative ranked first)  & \cmark & \xmark & \cmark \\
			\bottomrule
		\end{tabular}	
	\end{center}
	\caption{Properties satisfied by the rules. The ``*'' indicates characterization results.
	\label{tab:summary}}
\end{table}
\setlength{\textfloatsep}{15pt plus 8.0pt minus 5.0pt}
In this section, we define a set of axioms that we believe are desirable for a party selection rule in the context of proportional representation with thresholds. We then analyze the different rules with respect to these axioms. We will go over different kinds of axioms. \Cref{tab:summary} summarizes the results of the axiomatic analysis.

\subsection{Efficiency Axioms}\label{sec:efficiency}

The definition of Pareto efficiency in our model is not completely clear, as this requires reasoning about voters' preference over outcomes rather than parties. We will revisit this issue in \Cref{subsec:sp}. We can define an uncontroversial axiom of \emph{weak efficiency} that forbids rules from selecting the empty outcome unless they are forced to. A stronger axiom can be defined by interpreting voters to prefer outcomes in which their representative is more preferred. In this perspective, note that if we have two outcomes $S$ and $S'$ such that $S' \supseteq S$, then $\best_{S'}(i) \succeq_i \best_S(i)$ for every $i \in V$, and if $\tau > 0$, there exists $i$ such that $\best_{S'}(i) \succ_i \best_S(i)$. This motivates the definition of the axiom of set-maximality. 

\begin{definition}
	A party selection rule satisfies \emph{set-maximality} if for every profile $P$ and threshold $\tau$, if $S = f(P, \tau)$ and $S' \supseteq S$ is feasible, then $S' = S$. It satisfies \emph{weak efficiency} if $f(P, \tau) \neq \emptyset$ whenever there exists a non-empty feasible outcome.
\end{definition}

Note that set-maximality implies weak efficiency. Note also that while set-maximality leads to more-preferred representatives, it may also lead to outcomes with more distinct parties, which as we discussed may be undesirable.
GP satisfies both efficiency axioms, but the other rules do not.

\begin{proposition}
	GP satisfies set-maximality. DO and STV fail even weak efficiency.
\end{proposition}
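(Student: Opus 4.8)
The plan is to establish the three claims separately, each by a direct argument from the definitions.

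\medskip

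\noindent\textbf{GP satisfies set-maximality.} First I would argue that GP's output is always a maximal feasible set. Let $S = \GP(P,\tau)$ and suppose toward a contradiction that some feasible $S' \supsetneq S$ exists, so there is a party $c \in S' \setminus S$. Since $S' $ is feasible, $\rep_{S'}(c) \ge \tau$. The key monotonicity observation is that shrinking the outcome can only increase a party's score: if $c \in S \subseteq S'$, then $\supp_{S'}(c) \subseteq \supp_S(c)$, because any voter who has $c$ as their favorite within the larger set $S'$ certainly has $c$ as their favorite within the smaller set $S$. Hence for any set $T$ with $c \in T \subseteq S'$ we have $\rep_T(c) \ge \rep_{S'}(c) \ge \tau$. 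Now consider the step of the GP algorithm at which $c$ was examined: GP had built up some partial outcome $T \subseteq S$ (consisting of the higher-plurality-score parties already accepted), and it skipped $c$ only because $T \cup \{c\}$ was infeasible. But infeasibility of $T \cup \{c\}$ means some party $d \in T \cup \{c\}$ has score below $\tau$ in $T \cup \{c\}$. Since $T \cup \{c\} \subseteq S \cup \{c\} \subseteq S'$, applying the monotonicity observation to $d$ gives $\rep_{T \cup \{c\}}(d) \ge \rep_{S'}(d) \ge \tau$, contradicting infeasibility. So no such $c$ exists and $S$ is maximal, which is exactly set-maximality. The main subtlety to get right is that GP's skip decision is made relative to the \emph{partial} outcome at that moment, so I must carefully push the monotonicity inequality through from $S'$ down to $T \cup \{c\}$ rather than comparing directly against the final $S$.

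\medskip

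\noindent\textbf{DO and STV fail weak efficiency.} For these I would simply exhibit a single profile and threshold where a nonempty feasible outcome exists but the rule returns $\emptyset$. The cleanest instance reuses the structure of \Cref{ex:example_diff}: take a profile in which no party is ranked first by $\tau$ or more voters, so that DO's defining condition $|\{i : c \succ_i C \setminus \{c\}\}| \ge \tau$ fails for every party, forcing $\DO(P,\tau) = \emptyset$. For STV I want the eliminations to continue until the set becomes empty: if every party's first-place support stays below $\tau$ even as votes transfer, STV will keep removing the lowest-scoring party, and I must ensure the set is never feasible until it is empty, so STV returns $\emptyset$. Meanwhile I need a nonempty feasible set to exist — for instance, after transfers some party could accumulate enough supporters in a carefully chosen two-party outcome. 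A concrete small example (e.g., a few voters split so that every singleton and every direct-winner check falls short of $\tau$, yet some pair like $\{b,d\}$ reassigns voters so each gets at least $\tau$) suffices. I would verify by hand that (i) no party clears $\tau$ first-place votes, giving $\DO = \emptyset$; (ii) the STV elimination sequence reaches $\emptyset$ without ever hitting a feasible intermediate set; and (iii) the witnessing pair is genuinely feasible.

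\medskip

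\noindent The main obstacle is the counterexample design for STV: I must simultaneously arrange that the intermediate sets along STV's elimination path are all infeasible (so it does not stop early) while some off-path feasible nonempty set exists. Since STV only ever shrinks the current set and stops at the first feasible one, I need the transfers to never produce a feasible surviving set until nothing remains — this requires checking the score after each elimination, which is the one place routine verification could hide an error. The GP argument, by contrast, is purely structural and should go through cleanly once the support-monotonicity lemma ($c \in S \subseteq S' \Rightarrow \supp_{S'}(c) \subseteq \supp_S(c)$) is stated.
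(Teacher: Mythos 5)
Your argument for GP is correct and is essentially the paper's: the paper observes directly that a subset of a feasible set is feasible (a fact already recorded in the preliminaries) and concludes that the partial outcome $T \cup \{c\}$ at the moment $c$ was examined must have been feasible, contradicting the skip. You re-derive that subset-closure fact inline via the support-monotonicity observation $c \in S \subseteq S' \Rightarrow \supp_{S'}(c) \subseteq \supp_S(c)$, which is exactly the right reason and is carried through correctly to $T\cup\{c\}$. No issue there.

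The DO/STV half, however, has a genuine gap: the entire content of that claim is a concrete counterexample, and you never produce one — you only list the properties it should have and say you ``would verify by hand.'' You also overcomplicate the target: a nonempty feasible \emph{singleton} witness suffices, and you do not need a long STV elimination path. The paper's instance is $P = \{\, 2\colon b \succ c,\ 1\colon c \,\}$ with $\tau = 3$: no party has $3$ first-place votes, so $\DO(P,\tau)=\emptyset$; under STV, $c$ (with $1$ first-place vote) is eliminated before $b$, after which $\{b\}$ has score $2 < 3$ and is also eliminated, so $\STV(P,\tau)=\emptyset$; yet $\{c\}$ is feasible because all three voters rank $c$. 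Note the slightly counterintuitive feature your sketch glosses over — STV eliminates precisely the party that would have made a feasible outcome, because plurality score and score-within-a-singleton-outcome are different quantities. Until you write down such an instance and check it, the second half of the proposition is unproved.
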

\begin{proof}
	For GP, let $S$ be its outcome on some profile, and assume for a contradiction that there is a party $c \not\in S$ such that $S \cup \{c\}$ is feasible. Let $S' \subseteq S$ be the set of parties that GP has added to the outcome by the time it considered $c$. Then $S' \cup \{c\}$ is feasible since $S' \cup \{c\} \subseteq S \cup \{c\}$ and feasibility is preserved under taking subsets. This contradicts that GP did not select $c$.
	For DO and STV, take the profile $P = \{ 2: b \succ c , 1:c \}$ with $\tau = 3$. Both rules return $\emptyset$, even though $\{c\}$ is feasible.
\end{proof}

\subsection{Representation Axioms}

We now turn to representation axioms, which ensure in different ways that groups of voters of size at least $\tau$ must be represented. The most basic axiom requires that all parties that receive enough first-place votes must be winners. This axiom seems essential for political applications.

\begin{definition}
	A party selection rule satisfies \emph{inclusion of direct winners} if for every profile $P$ and threshold $\tau$, whenever $c$ is a party such that at least $\tau$ voters rank $c$ in top position, then $c \in f(P, \tau)$.
\end{definition}

It is easy to see that DO, STV, and GP satisfy this axiom. Moreover, since DO returns \emph{only} the direct winners, we always have $\DO(P, \tau) \subseteq \STV(P, \tau)$ and $\DO(P, \tau) \subseteq \GP(P, \tau)$. 
Thus, there are at least as many unrepresented voters under DO as under STV or GP, by the argument in \Cref{sec:efficiency}.

One can strengthen inclusion of direct winners to apply to cases where enough voters support a \emph{set} of parties. For example, suppose that there are three ``green'' parties and that at least $\tau$ voters rank them in the top three ranks, though they may disagree on their relative ordering. The following axiom requires that at least one of the green parties is included in the outcome. It is inspired by the ``proportionality for solid coalitions'' (PSC) axiom in multi-winner voting \citep{dummett1984voting}.

\begin{definition}
	A party selection rule satisfies \emph{representation of solid coalitions} if for every profile $P$ and threshold $\tau$, if $T \subseteq C$ is a set of parties that has at least $\tau$ supporters in the sense that $|\{ i \in N : T \succ_i C \setminus T \}| \ge \tau$, then $T \cap f(P, \tau) \neq \emptyset$.
\end{definition}

This property is satisfied by STV, but it is failed by DO, since it can happen that none of the parties supported by a solid coalition has enough first-place votes. It is also failed by GP.

\begin{proposition}
	STV satisfies representation of solid coalitions, but DO and GP do not.
\end{proposition}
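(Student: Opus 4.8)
My plan is to prove the three parts separately, handling the positive claim (STV) first and then exhibiting counterexamples for DO and GP.

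For the STV part, the key structural observation is that STV only terminates once the current set is feasible, and along the way it only eliminates the party with the fewest first-place votes among the surviving parties. Suppose $T \subseteq C$ is a solid coalition of size at least $\tau$, meaning at least $\tau$ voters rank all of $T$ above all of $C \setminus T$. I would argue that STV can never eliminate the last remaining party of $T$. The crucial invariant to maintain is: as long as at least one party of $T$ survives, the $\tau$ solid supporters of $T$ all place one of the surviving $T$-parties in first position (among survivors). Hence the total first-place score concentrated on the surviving $T$-parties is at least $\tau$. Now, if $T$ had exactly one surviving party $c$, then $c$ alone would have score at least $\tau$, so $c$ would itself already be feasible. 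The final step is to show STV would never pick such a $c$ for elimination: STV only eliminates a party when the current set is \emph{infeasible}, and it eliminates the party of \emph{minimum} score; but a party with score $\ge \tau$ is not the one eliminated unless every party has score $\ge \tau$, in which case the set is feasible and STV has already stopped. Therefore at termination $T$ still contains a surviving party, giving $T \cap \STV(P,\tau) \neq \emptyset$. I expect the \textbf{main obstacle} to be stating the invariant cleanly and verifying that it is genuinely preserved under a single elimination step: when a non-$T$ party is eliminated, the $T$-supporters' first choice among survivors stays inside $T$; and when a $T$-party is eliminated, I must check it was not the last one, which is exactly the minimum-score argument above. Some care is needed because elimination of parties outside $T$ can raise the scores of $T$-parties (voters shift onto them), which only helps, so the invariant is robust.

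For DO, I would reuse the spirit of \Cref{ex:example_diff}: construct a solid coalition $T$ of ``green'' parties whose $\tau$ supporters split their first-place votes so that no single party in $T$ reaches $\tau$ first-place votes, yet collectively $T$ is solidly supported by $\ge \tau$ voters. Since DO returns only direct winners, none of the $T$-parties is selected, so $T \cap \DO(P,\tau) = \emptyset$. A minimal instance suffices, e.g.\ two green parties $a,b$ each getting fewer than $\tau$ first-place votes but together ranked above everything else by $\tau$ voters.

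For GP, the counterexample is subtler, since GP does add parties greedily. I would design a profile where the parties in the solid coalition $T$ all have low plurality scores, so GP first commits to a high-scoring party outside $T$; once that party is fixed as a representative, it absorbs enough of the $T$-supporters that no remaining $T$-party can reach $\tau$ supporters, causing every $T$-party to be skipped. The design requirement is that the out-of-$T$ party be ranked just below $T$ by the solid supporters, so that adding it first ``steals'' their support and destroys the feasibility of each $T$-party when considered later. I expect constructing this GP instance to require the most bookkeeping, since I must track $\score_S$ as $S$ grows and confirm each $T$-party is infeasible at the moment GP inspects it.
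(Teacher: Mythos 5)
Your STV argument is correct and is essentially the paper's proof: the paper phrases it as ``consider the first step at which all but one party of $T$ has been eliminated; that last party then has at least $\tau$ first-place votes among the survivors and hence can never be the plurality loser of an infeasible set,'' which is exactly your invariant combined with your minimum-score observation. The DO sketch is also the paper's idea (split the solid coalition's first-place votes so that no single member reaches $\tau$); it completes immediately, e.g.\ with the profile $4{:}\,a\succ b\succ c$, \ $3{:}\,b\succ c\succ a$, \ $2{:}\,c\succ b\succ a$ and $\tau=5$, where $T=\{b,c\}$ has five solid supporters but $\DO(P,\tau)=\emptyset$.

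The genuine gap is in your GP plan. The mechanism you describe---the early-added outside party ``absorbs enough of the $T$-supporters that no remaining $T$-party can reach $\tau$ supporters''---cannot occur. Feasibility of $S\cup\{c\}$ is evaluated with $\best_{S\cup\{c\}}$ recomputed from scratch, and every solid supporter ranks all of $T$ above all of $C\setminus T$. So at the moment GP considers the first $T$-party $c$ (when $S$ contains no member of $T$, as must be the case in any violating run), all of the at least $\tau$ solid supporters satisfy $\best_{S\cup\{c\}}(i)=c$, and $c$ automatically clears the threshold. The only way $S\cup\{c\}$ can be infeasible---and hence the only mechanism a counterexample can exploit---is that adding $c$ pulls supporters away from a party \emph{already in} $S$ and drops that party below $\tau$. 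The paper's example does exactly this: in the profile above, GP first adds $a$ (feasible with all $9$ voters), and then both $\{a,b\}$ and $\{a,c\}$ leave $a$ with only $4$ supporters (indeed $\tau>n/2$, so no two-party set is feasible at all). If you carried out your bookkeeping using your stated criterion, you would find that the $T$-party under consideration always has at least $\tau$ supporters and might wrongly conclude that no counterexample of your shape exists; you need to redirect the source of infeasibility onto the already-selected party.
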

\begin{proof}
	For STV, assume that there is a profile $P$ where the axiom is violated, i.e. there exists a set $T \subseteq C$ of parties with $|\{ i \in N : T \succ_i C \setminus T \}| \ge \tau$, but $T \cap \STV(P,\tau) = \emptyset$. Consider the first step in the execution of STV at which all but one party of $T$ are eliminated. At that point, this party $c \in T$ is ranked first by at least $\tau$ voters such that $T \succ_i C \setminus T$. Thus, $c$ cannot be eliminated in the subsequent steps and must therefore be included in the outcome, which is a contradiction. 
	
	DO and GP fail the property on the profile $P = \{4:a\succ b \succ c, 3:b\succ c \succ a, 2: c\succ b \succ a \}$ with $\tau = 5$. The last five voters form a solid coalition for $\{b,c\}$. However, we have $\DO(P,\tau) = \emptyset$ since no party is ranked first by at least 5 voters, and $\GP(P,\tau) = \{a\}$ since $a$ is the party who is ranked first by the most voters and any set containing more than one party is not feasible, as $\tau > \frac{n}{2}$. 
\end{proof}

Note that representation of solid coalitions implies inclusion of direct winners (consider singleton $T$). One could strengthen this axiom further by forbidding that there is a party $c$ outside the outcome $S$ for which $\tau$ voters prefer $c$ to all parties in $S$. This is a version of the local stability axiom studied in multi-winner voting \citep{aziz2017condorcet,jiang2020approximatelystable}.

\begin{definition}
	A party selection rule satisfies \emph{local stability} if for every profile $P$ and threshold $\tau$, and $S = f(P,\tau)$, for all parties $c \notin S$, we have $|\{ i \in N :  c \succ_i S \}| < \tau$.
\end{definition}

Note that this axiom implies representation of solid coalitions, and thus inclusion of direct winners. However, this axiom cannot be satisfied for any  $\tau \notin \{1, n\}$.

\begin{proposition} \label{prop:local_stability}
	No party selection rule satisfies local stability for any $\tau \notin \{1,n \}$.
\end{proposition}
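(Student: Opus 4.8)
The plan is to exhibit, for each threshold $\tau \ge 2$, a single profile $P$ on which \emph{every} feasible outcome violates local stability; since a rule must return some feasible outcome, this defeats every rule at that threshold. I would use the Condorcet-cycle profile $P$ with three parties $a,b,c$ and $n = 3(\tau-1)$ voters, split into three equal groups: $\tau-1$ voters with $a \succ b \succ c$, $\tau-1$ with $b \succ c \succ a$, and $\tau-1$ with $c \succ a \succ b$. This profile satisfies $\tau \notin \{1,n\}$, since $\tau \ge 2$ and $n = 3\tau-3 \ne \tau$.

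The first step is to pin down which outcomes are feasible. By the cyclic symmetry of $P$ it suffices to analyse one set of each type. Every singleton is feasible, since in a full profile $\score_{\{a\}}(a) = n \ge \tau$. However, in any pair such as $\{a,b\}$ the ``losing'' party keeps only the voters ranking it above the other, e.g.\ $\score_{\{a,b\}}(b) = \tau-1 < \tau$, so all pairs are infeasible; and in the full set $\{a,b,c\}$ each party is ranked first by only $\tau-1 < \tau$ voters, so it too is infeasible. Hence the feasible outcomes are exactly $\emptyset$ and the three singletons.

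The second step is to check that each of these feasible outcomes is unstable. For $S = \emptyset$, every party is (vacuously) preferred to all of $S$ by all $n \ge \tau$ voters, so the condition fails. For a singleton such as $S = \{a\}$, the ``next'' party in the cycle, $c$, is preferred to $a$ by the two groups $b \succ c \succ a$ and $c \succ a \succ b$, i.e.\ by $2(\tau-1)$ voters; since $\tau \ge 2$ we have $2(\tau-1) \ge \tau$, so $c$ witnesses a violation. Cyclic symmetry gives the same conclusion for $\{b\}$ (witnessed by $a$) and $\{c\}$ (witnessed by $b$). As every feasible outcome is unstable, $f(P,\tau)$ cannot be locally stable for any rule $f$.

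The main obstacle to watch for is the degenerate outcome $S = C$: if a rule returns all parties there is no party outside $S$, so local stability holds vacuously and the counterexample would collapse. The construction avoids this precisely by forcing $C$ (and, more generally, every outcome of size at least two) to be infeasible, which is exactly what the group size $\tau-1$ achieves; the complementary inequality $2(\tau-1) \ge \tau$ is then what makes the surviving feasible outcomes (the singletons) unstable. This is also where the hypothesis $\tau \ge 2$ enters, consistent with the excluded case $\tau = 1$.
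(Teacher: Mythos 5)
Your construction is internally sound: with three groups of $\tau-1$ voters arranged in a Condorcet cycle, the feasible outcomes are exactly $\emptyset$ and the three singletons, and each is destabilized by the next party in the cycle. The gap is in the quantification. The proposition — as the paper's proof and the remark following it (discussing the boundary cases $\tau=1$ and $\tau=n$) make clear — asserts failure for \emph{every} pair $(n,\tau)$ with $\tau\notin\{1,n\}$, where $n$ is the (given) number of voters. You instead choose $n=3(\tau-1)$, so you only refute local stability for thresholds of the form $\tau = n/3+1$. Padding your profile with extra voters who rank only a fresh dummy party repairs this for some further values, but a three-group construction needs a group size $g$ with $g<\tau$ (so that pairs and the full set are infeasible), $2g\ge\tau$ (so that singletons are unstable), and $3g\le n$, which forces roughly $\tau\le 2n/3$. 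The regime of large thresholds is left uncovered.

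Moreover, that regime cannot be reached by any three-party profile, so the fix is not cosmetic. Take $\tau=n-1$ with $n\ge 4$ and suppose some three-party profile had only unstable feasible outcomes. A witness $y$ against a feasible singleton $\{x\}$ satisfies $|\{i: y\succ_i x\}|\ge n-1$, which already forces $y$ to be ranked by at least $\tau$ voters and hence $\{y\}$ to be feasible; so the witness map sends feasible singletons to feasible singletons and, with only three parties, must close into a $2$-cycle or a $3$-cycle. A $2$-cycle requires $2(n-1)\le n$, and a $3$-cycle requires $3(n-1)$ cyclic agreements from $n$ voters each contributing at most $2$, i.e.\ $n\le 3$ — a contradiction either way, so some feasible stable outcome exists and a rule could return it. This is precisely why the paper's proof switches to a Condorcet cycle on $\tau+1$ parties among $\tau+1$ voters (each party preferred to its successor by $\tau$ of the $\tau+1$ voters, while feasibility permits at most one cycle party to be selected), padded with $n-\tau-1$ dummy voters for a separate party $d$. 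You would need to adopt that construction, or an equivalent one with a number of parties growing with $\tau$, to cover all thresholds.
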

\begin{proof}
	For $n \ge 3$ and $\tau = n-1$, we can take a simple profile of $m=n$ parties $c_1, \dots, c_n$ with a Condorcet cycle. This corresponds to the profile $P = (\succ_1, \dots, \succ_n)$ in which $\succ_i = c_{i+1} \succ \dots \succ c_n \succ c_1 \succ \dots \succ c_{i-1}$ for all $i$.
	Because the threshold is greater than $n/2$, at most one party can be part of the outcome. However, in a Condorcet cycle every $c_i$ is better ranked than $c_{i+1}$ in $n-1$ rankings (and $c_n$ is better ranked than $c_1$ in $n-1$ rankings), thus for any feasible outcome $S = \{c_i\}$, we have $|\{ i \in V :  c_{i+1} \succ_i S \}| = n-1 \ge \tau$, and the rule fails the axiom. If $S = \emptyset$, the result is even more clear.
	
	For all other $\tau \notin \{1,n\}$, we can take a profile $P$ with $m = \tau+2$ parties $c_1,\dots,c_{\tau+1},d$, such that $P$ contains the Condorcet cycle of the example above with $m' = \tau+1$ parties and $n'=\tau+1$ voters (so with parties $c_1,\dots,c_{\tau+1}$), and add $n-n'$ dummy voters that only rank $d$. In this profile, the outcome must contain at most one of the $c_i$, as only $\tau+1 < 2\tau$ voters ranked them. However, for any $c_i \in S$, we have $|\{ i \in V :  c_{i+1} \succ_i S \}| = n'-1 = \tau$, which breaks local stability. If no $c_i$ is part of the outcome, again the result is even easier to see.
\end{proof}

Note that for $\tau = n$, the rule that selects a party with full support and which is not Pareto-dominated satisfies the axiom. Similarly, for $\tau = 1$, the rule that selects all parties that are ranked first by at least one voter satisfies the axiom.

However, we can weaken the local stability axiom by restricting the voters that can ask for a better representative to the set of unrepresented voters. In other words, no party outside the outcome should be included in the (truncated) rankings of $\tau$ \emph{unrepresented} voters. 
	Thus, a party selection rule satisfies \emph{representation of unrepresented voters} if for every profile $P$ and threshold $\tau$, there is no party $c \notin S = f(P,\tau)$ with $|\{ i \in N : \best_S(i) = \emptyset \text{ and } c\succ_i S\}| \ge \tau$.

Unfortunately, we can show that this axiom is not satisfied either by any of the three rules we consider. Still, it is possible to satisfy this axiom, and each of the three rules can be transformed into a rule satisfying it through a local search procedure. See \Cref{app:unrepresented} for details.

\subsection{Varying the Threshold}

We now discuss what should happen to the outcome when the threshold changes. The first axiom says that a losing party should stay losing if the threshold increases, and conversely a winning party should stay winning if the threshold decreases. This is a natural requirement, as a higher threshold should make it harder for a party to be selected.

\begin{definition}
	A party selection rule satisfies \emph{threshold monotonicity} if for all profiles $P$ and all thresholds $\tau \le \tau'$, we have $f(P, \tau) \supseteq f(P, \tau')$.
\end{definition}

It is easy to see that DO and STV satisfy this axiom, but GP does not. 

\begin{proposition}
	DO and STV satisfy threshold monotonicity, but GP does not.
\end{proposition}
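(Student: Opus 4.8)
The plan is to treat the three rules separately, since only the STV case requires a real argument; DO is immediate and GP requires a counterexample. For DO, I would observe that the quantity $|\{i \in N : c \succ_i C \setminus \{c\}\}|$ — the number of voters ranking $c$ in top position — does not depend on $\tau$ at all. Since $\DO(P,\tau)$ is exactly the set of parties whose value of this quantity is at least $\tau$, raising the threshold from $\tau$ to $\tau' \ge \tau$ can only make the membership condition harder to satisfy: any $c$ with top-count $\ge \tau'$ automatically has top-count $\ge \tau$. Hence $\DO(P,\tau') \subseteq \DO(P,\tau)$ with no computation.

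For STV, the key structural observation is that the sequence of eliminations is independent of the threshold. I would define the chain $C = S_0 \supsetneq S_1 \supsetneq \dots \supsetneq S_\ell = \emptyset$ obtained by repeatedly deleting the current lowest-plurality party (breaking ties by the fixed order) and ignoring the stopping rule entirely. Because the party eliminated from $S_k$ is the argmin of the plurality scores within $S_k$, which depends only on $P$ and $S_k$ and never on $\tau$, this chain is the \emph{same} for every threshold, and $\STV(P,\tau)$ is precisely the first set $S_k$ in the chain that is feasible for $\tau$ (well defined since $\emptyset$ is feasible for every $\tau$). I would then combine this with the fact that feasibility is monotone in the threshold: since $\rep_S(c)$ does not depend on $\tau$, if $S_k$ is feasible for $\tau'$ then it is feasible for every $\tau \le \tau'$. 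Therefore the set of indices at which $S_k$ is feasible for $\tau'$ is contained in the set of indices feasible for $\tau$, so the smallest feasible index for $\tau'$ is at least the smallest feasible index for $\tau$. As the chain is nested and decreasing, a larger index gives a smaller set, yielding $\STV(P,\tau') \subseteq \STV(P,\tau)$.

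For GP, I would exhibit a profile on which the axiom fails outright. The mechanism to engineer is that raising the threshold drops the high-plurality party $a$ that GP considers first (because the number of voters ranking $a$ at all falls below the new threshold), which frees those voters to be counted for a lower-plurality party $c$ considered later, so that $c$ newly \emph{enters} the outcome as the threshold grows. Concretely I would take the two-party profile with $1$ voter reporting $c$, $4$ voters reporting $a \succ c$, and $2$ voters reporting $c \succ a$. Here $a$ has plurality score $4$ and $c$ has plurality score $3$, so GP considers $a$ first; a short check of supporter counts gives $\GP(P,5) = \{a\}$ (adding $c$ would drop $a$ below $5$) while $\GP(P,7) = \{c\}$ (now $a$ is ranked by only $6$ voters and is skipped, whereas $c$ is ranked by all $7$). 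Since $5 < 7$ but $\{a\} \not\supseteq \{c\}$, threshold monotonicity is violated.

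The hard part is the STV step, and within it the one point needing care is making precise that the elimination order is threshold-independent and then marrying it to threshold-monotonicity of feasibility; once both facts are phrased in terms of the single nested chain and the ``first feasible index,'' the inclusion follows mechanically. The DO step and the verification of the GP counterexample are routine supporter-count bookkeeping.
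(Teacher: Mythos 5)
Your proposal is correct and follows essentially the same route as the paper: DO is immediate from the threshold-independence of top-counts, STV follows because the elimination chain does not depend on $\tau$ and a higher threshold only makes the rule stop later along that nested chain, and GP is refuted by a counterexample in which the high-plurality party drops below the raised threshold so that a lower-plurality party enters instead. Your GP instance ($1{:}\,c$, $4{:}\,a \succ c$, $2{:}\,c \succ a$ with thresholds $5$ and $7$) checks out and exploits exactly the same mechanism as the paper's smaller example ($3{:}\,a\succ b$, $2{:}\,b$ with thresholds $3$ and $4$).
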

\begin{proof}
	The result for DO is clear: if a party is ranked first by $\tau'$ voters, it is ranked first by $\tau \le \tau'$ voters. For STV, the order of elimination for the parties is the same, but the rule might stop earlier for a lower threshold; thus a subset of parties are eliminated.
	For GP, consider the profile $P = \{3: a \succ b, 2:b \}$. With $\tau = 3$, the outcome is $\{a\}$ and with $\tau' = 4$, the outcome is $\{b\}$.
\end{proof}

Proponents of STV often argue in its favor using the following principle of procedural fairness: once we have decided that some candidates are losing, we should continue the procedure as if that party hadn't run in the first place \citep{meek1969nouvelle}. We can formalize this principle using the following independence axiom, which says that once some parties are losing at some threshold, then for all larger thresholds, the rule should behave as if none of the losing parties had been available.

\begin{definition}
	A party selection rule satisfies \emph{independence of definitely losing parties} if for every profile $P$ and thresholds $\tau \le \tau'$, writing $S = f(P, \tau)$, we have $f(P, \tau') = f(P|_{S}, \tau')$.
\end{definition}

Here, $P|_S$ denotes the profile obtained from $P$ by deleting all parties outside $S$.
Independence of definitely losing parties implies threshold monotonicity, because it requires that $f(P, \tau') = f(P|_{S}, \tau') \subseteq S = f(P, \tau)$. The axiom is related to the ``independence at the bottom'' axiom of \citet{FBC14a}, and it encodes a key intuition behind the functioning of the STV rule. In fact, in combination with inclusion of direct winners, this axiom characterizes STV. To state the axiom formally, we say that a profile $P$ is \emph{generic} if for every $S \subseteq C$, in the profile $P|_S$ there is a unique party with the lowest plurality score $\score_S(c)$. On generic profiles, the STV rule never encounters a tie.

\begin{theorem}
	Let $f$ be a party selection rule satisfying inclusion of direct winners and independence of definitely losing parties. Then $f$ equals STV on all generic profiles.	
\end{theorem}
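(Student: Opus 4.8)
The plan is to prove the statement by induction on the threshold $\tau$, leveraging the independence axiom to reduce the problem step by step until we reach a threshold where $f$ must agree with STV by the base structure of the rules. Since $P$ is generic, STV never encounters a tie, so at each threshold there is a well-defined notion of which party has the strictly lowest plurality score in any restricted subprofile. First I would fix a generic profile $P$ and argue that for $\tau = 0$, both $f(P,0)$ and $\STV(P,0)$ equal $C$: every outcome is feasible when $\tau = 0$, so inclusion of direct winners forces all parties (each ranked first by at least $0$ voters) into $f$, and STV stops immediately at $S_0 = C$.

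For the inductive step, suppose we know $f(P, \tau) = \STV(P, \tau)$ for some threshold $\tau$, and we want to establish equality at $\tau + 1$ (or more generally, I would set up the induction so that knowing $f$ agrees with STV at all thresholds up to some value lets me climb to the next). Write $S = f(P, \tau) = \STV(P, \tau)$. Applying independence of definitely losing parties with the pair $\tau \le \tau + 1$ gives $f(P, \tau+1) = f(P|_S, \tau+1)$. The key observation is that STV itself satisfies this independence property (which we may take as established, since the axiom encodes exactly how STV functions): running STV on $P$ with threshold $\tau + 1$ first passes through the stage where it has eliminated everything outside $S$, and from that point onward it behaves identically to running STV on $P|_S$. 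Hence $\STV(P, \tau+1) = \STV(P|_S, \tau+1)$ as well. So it suffices to compare $f$ and STV on the restricted profile $P|_S$ at threshold $\tau + 1$.

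On $P|_S$, the situation is cleaner because $S = \STV(P,\tau)$ is feasible at threshold $\tau$, meaning every party in $S$ has at least $\tau$ supporters within $S$. The next step is to show that at threshold $\tau+1$ the rule must eliminate precisely the parties STV would eliminate. Here I would argue: if $S$ is already feasible at $\tau+1$, then STV returns $S$ immediately, and I must show $f(P|_S, \tau+1) = S$; since genericity and feasibility mean every party in $S$ clears the threshold and is effectively a direct winner within $P|_S$, inclusion of direct winners pins down $f$. If $S$ is \emph{not} feasible at $\tau+1$, then STV eliminates the unique minimum-score party $c^* \in S$, and I would apply independence once more with the threshold pair $\tau \le \tau+1$ but now anchored at the subprofile to descend to $P|_{S \setminus \{c^*\}}$, iterating the argument. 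The cleanest formulation is probably a double induction or a single induction on $|S|$ combined with the threshold, repeatedly invoking the axiom to strip off minimum-score parties one at a time and matching each elimination to STV's.

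The main obstacle I anticipate is the bookkeeping that makes the reduction genuinely match STV's elimination sequence rather than merely guaranteeing set inclusion. Specifically, the delicate point is showing that when $f$ is forced by independence onto a restricted profile, the party that $f$ must drop is forced to be exactly the unique STV-minimum party, using \emph{only} inclusion of direct winners plus independence — not any monotonicity or minimality of $f$ that we have not assumed. I expect this requires carefully choosing the intermediate thresholds at which we invoke independence: we want to apply the axiom at a threshold just high enough that exactly one additional party becomes infeasible (the genericity hypothesis is what guarantees this party is unique), so that $f(P|_S, \tau') $ is pinned to $S$ minus that single party. Establishing that such a threshold exists and that inclusion of direct winners then determines the outcome on the smaller profile is, I believe, the crux of the argument; everything else is a routine unwinding of the recursion.
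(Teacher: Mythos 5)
Your argument is correct, but it is organized around a different induction than the paper's: you induct on the threshold $\tau$, anchoring the independence axiom at the previous winner set $S=f(P,\tau)$, whereas the paper inducts on the number of parties $m$, anchoring independence at the critical threshold $s+1$ (where $s$ is the minimum plurality score) so that only the unique plurality loser is removed before recursing on an $(m-1)$-party profile. The core mechanism is identical in both: inclusion of direct winners forces in every party except the unique plurality loser, infeasibility of the full set forces that loser out, and independence transfers the problem to a restricted profile. Two remarks on your sketch. First, the ``double induction / iterate the stripping'' you anticipate is unnecessary: since $S=\STV(P,\tau)$ is feasible at $\tau$ and $P|_S$ is generic, if $S$ is infeasible at $\tau+1$ then its unique plurality loser $c^*$ has score exactly $\tau$ and every other party has score at least $\tau+1$; hence all of $S\setminus\{c^*\}$ are direct winners of $P|_S$ at threshold $\tau+1$, the outcome $f(P|_S,\tau+1)$ is squeezed between $S\setminus\{c^*\}$ and the infeasible $S$, and exactly one party is dropped per threshold increment---the crux you flag resolves in one line. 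Second, your route requires the additional (true, but not proved in the paper) fact that STV itself satisfies independence of definitely losing parties, in order to conclude $\STV(P,\tau+1)=\STV(P|_S,\tau+1)$; this follows because STV's elimination order does not depend on the threshold, but it is a genuine extra lemma. The paper avoids it by only ever deleting the single party STV eliminates first, for which $\STV(P|_{C\setminus\{c\}},\tau)=\STV(P,\tau)$ is immediate from the definition. Your version makes the nesting of winner sets across thresholds more transparent; the paper's keeps the auxiliary facts about STV to a minimum.
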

\begin{proof}
	Let $\tau \ge 0$ be a threshold and $P$ a generic profile. We use induction on the number of parties $m$. If there is just one party, then STV selects that party as winner if and only if more than $\tau$ voters ranked it, and $f$ must do the same by inclusion of direct winners and by feasibility. Assume that we have proven that $f(P, \tau) = \STV(P, \tau)$ for all generic profiles $P$ with $m - 1$ parties.

	Let $P$ be a generic profile with $m$ parties. Suppose that the party $c$ with the lowest plurality score has score $\score_C(c) = s$. For thresholds $\tau \le s$, note that inclusion of direct winners forces the outcome, and both $f$ and STV select the set of all parties $C$ as winners. Next, consider threshold $\tau = s + 1$. Because there exists a unique plurality loser, all parties other than $c$ have more than $\tau$ voters ranking them first. Thus, all parties except $c$ are direct winners and must therefore be selected by $f$, i.e. $C \setminus \{c\} \subseteq f(P, \tau)$. However, the set $C$ of all parties is not feasible because all voters who ranked at least one party will be represented by their first choice, and voters who ranked none will not be represented, thus $\score_C(c) = s < \tau$. Hence we must have $f(P, \tau) = C \setminus \{c\}$, which agrees with the STV outcome. Finally, consider a threshold $\tau > s + 1$. Then we have
	\begin{align*}
		f(P, \tau) 
		&= f(P|_{C \setminus \{c\}}, \tau) \tag{independence of definitely losing parties} \\
		&= \STV(P|_{C \setminus \{c\}}, \tau) \tag{inductive hypothesis} \\
		&= \STV(P, \tau), \tag{definition of STV and genericness}
	\end{align*}
	and thus again $f$ selects the same set as STV, as we wanted to show.
\end{proof}

Note that this characterization cannot be used as a starting point for an axiomatic characterization of STV as a single-winner voting rule, because it crucially depends on a variable-threshold setup.

\begin{corollary}
	DO and GP do not satisfy independence of definitely losing parties.
\end{corollary}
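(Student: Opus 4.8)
The plan is to derive the corollary immediately from the characterization theorem just proved, used in contrapositive form. That theorem states that any party selection rule satisfying both inclusion of direct winners and independence of definitely losing parties must coincide with STV on every generic profile. Since it was already observed that DO, STV, and GP all satisfy inclusion of direct winners, it suffices to exhibit a single \emph{generic} profile on which DO and GP disagree with STV: the theorem then forces each of them to fail independence of definitely losing parties.

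Concretely, I would use the full profile $P = \{4\colon a \succ b \succ c,\ 3\colon b \succ a \succ c,\ 2\colon c \succ b \succ a\}$ with $\tau = 5$ (so $n = 9$). The plurality scores in $C$ are $4$, $3$, $2$ for $a$, $b$, $c$, so DO returns $\emptyset$ (no party reaches $5$ first-place votes), giving $\DO(P,5) = \emptyset$. GP, processing parties in the order $a, b, c$, adds $a$ first (feasible, since all nine voters rank $a$), then skips $b$ because $\{a,b\}$ has scores $(4,5)$ and skips $c$ because $\{a,c\}$ has scores $(7,2)$, yielding $\GP(P,5) = \{a\}$. STV instead eliminates $c$, then $a$ (whose score stays at $4$ while $b$ rises to $5$), returning $\STV(P,5) = \{b\}$. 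In particular $\DO(P,5) \neq \STV(P,5)$ and $\GP(P,5) \neq \STV(P,5)$, so by the theorem neither DO nor GP can satisfy independence of definitely losing parties.

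The one step requiring genuine care is verifying that $P$ is generic, i.e.\ that $P|_S$ has a unique plurality loser for every $S \subseteq C$; this is the main (if modest) obstacle, since the theorem applies only to generic profiles. I would check it by running through the nonsingleton subsets: on $\{a,b,c\}$ the scores are $(4,3,2)$, on $\{a,b\}$ they are $(4,5)$, and on $\{a,c\}$ and $\{b,c\}$ they are $(7,2)$; each restriction has a strict minimum, and singletons are trivial, so $P$ is generic. An alternative, theorem-free route would construct explicit thresholds $\tau \le \tau'$ and verify $f(P,\tau') \neq f(P|_{f(P,\tau)}, \tau')$ directly for $f \in \{\DO, \GP\}$, but this requires locating the precise thresholds at which the axiom breaks and is more fiddly; routing through the characterization is cleaner and reuses work already done.
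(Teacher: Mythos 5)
Your proposal is correct and follows exactly the route the paper intends: the corollary is stated as an immediate consequence of the characterization theorem, using that DO and GP satisfy inclusion of direct winners but differ from STV on some generic profile. Your explicit generic profile and the genericity check are verified correctly (scores $(4,3,2)$ on $\{a,b,c\}$, $(4,5)$ on $\{a,b\}$, $(7,2)$ on $\{a,c\}$ and $\{b,c\}$), so this just fills in details the paper leaves implicit.
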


\subsection{Independence of Clones}

In addition to the independence of definitely losing parties, STV also satisfies an independence of clones axiom, as is often the case for similar rules in other models \citep{tideman1987independence}.

We say that two parties $c,c' \in C$ are \emph{clones} if for all $i \in N$ and all $x \in C \setminus \{c,c'\}$, $c \succ_i x \Leftrightarrow c' \succ_i x$ and $x \succ_i c \Leftrightarrow x \succ_i c'$. This implies that every voter ranks $c$ and $c'$ consecutively. The axiom says that when we have clones in a profile, the outcome should not change if we remove one clone. 

\begin{definition}
	A party selection rule satisfies \emph{independence of clones} if for every generic profile $P$ and threshold $\tau$, if $c$ and $c'$ are clones in $P$, writing $S = f(P, \tau)$ and $S' = f(P|_{C \setminus \{c'\}}, \tau)$ then
	\begin{enumerate}
		\item $\{c,c'\} \cap S \ne \emptyset$ if and only if $c \in S'$, and
		\item for all $x \in C \setminus \{c,c'\}$, $x \in S$ if and only if $x \in S'$.
	\end{enumerate}
\end{definition}

Note that this axiom is equivalent to the independence of clones axiom for single-winner irresolute voting rules. It is satisfied by STV, but not by the other rules.

\begin{proposition} \label{prop:clones}
	STV satisfies independence of clones, but DO and GP do not.
\end{proposition}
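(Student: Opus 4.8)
The plan is to prove the negative claims by explicit generic counterexamples and the positive claim for STV by coupling the STV run on $P$ with the run on $P|_{C \setminus \{c'\}}$.

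For \textbf{DO}, take the parties $\{a,c,c'\}$ with $c,c'$ the clones, threshold $\tau = 5$, and the profile
\begin{align*}
	\text{6: } & a \succ c \succ c' &
	\text{3: } & c \succ c' \succ a &
	\text{2: } & c' \succ c \succ a.
\end{align*}
One checks this is generic (every restricted subprofile has a unique plurality loser). Here the first-place counts are $6,3,2$, so $\DO(P,5) = \{a\}$ and $\{c,c'\} \cap S = \emptyset$; but after deleting $c'$ the two voters ranking $c'$ first transfer to $c$, giving $c$ a score of $5$, so $c \in \DO(P|_{\{a,c\}},5) = \{a,c\} = S'$, contradicting condition~(1). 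For \textbf{GP}, take $\{a,b,c,c'\}$, $\tau = 5$, and
\begin{align*}
	\text{6: } & a \succ b \succ c \succ c' &
	\text{3: } & c \succ c' \succ b \succ a &
	\text{2: } & c' \succ c \succ b \succ a &
	\text{4: } & b \succ a \succ c \succ c'.
\end{align*}
Again $c,c'$ are clones and $P$ is generic. Running GP (plurality order $a,b,c,c'$) one gets $\GP(P,5) = \{a,b\}$, since once $a$ and $b$ are in, adding either clone would pull $b$'s score down to $4<\tau$; so $\{c,c'\}\cap S = \emptyset$. Deleting $c'$ consolidates the clone's support onto $c$ (plurality order becomes $a,c,b$), and now $\GP(P|_{\{a,b,c\}},5) = \{a,c\}$, so $c \in S'$, again violating condition~(1). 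I would present the GP computation as a short step-by-step table of $\best$-assignments rather than in prose.

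For \textbf{STV}, the engine is the clone score identity: for any remaining set $A$ with $c,c' \in A$, since $c,c'$ are ranked consecutively with the same relation to every other party, a voter's top choice in $A$ lies in $\{c,c'\}$ exactly when it is $c$ in $A \setminus \{c'\}$, and is unchanged otherwise; hence $\score_{A \setminus \{c'\}}(c) = \score_A(c) + \score_A(c')$ and $\score_{A \setminus \{c'\}}(x) = \score_A(x)$ for all $x \notin \{c,c'\}$ (and symmetrically with $c \leftrightarrow c'$). I would then couple the STV run on $P$ with the run on $P' = P|_{C \setminus \{c'\}}$, maintaining the invariant that while both clones survive in the $P$-run with remaining set $A$, the $P'$-run is at $A \setminus \{c'\}$. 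As long as the $P$-run eliminates a non-clone $x$, genericity plus the identity make $x$ the unique plurality loser in $A \setminus \{c'\}$ as well (its score is identical there, while the merged clone's score only increases), so $P'$ eliminates the same $x$ and the invariant persists; moreover an infeasible $A$ whose loser is a non-clone forces $A \setminus \{c'\}$ infeasible, so the two runs also do not halt prematurely relative to each other during this phase. When instead the $P$-run first eliminates a clone — necessarily the smaller-scored one, say $w \in \{c,c'\}$ — its ballots transfer to the surviving clone $w'$, whose score becomes the merged value. At that instant the $P$-remaining profile, under the relabeling $w' \mapsto c$, is literally equal to the $P'$-remaining profile (both assign the merged score to the single clone-party and original scores elsewhere, and the ballots agree because $c,c'$ occupy identical positions). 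Both remaining profiles are restrictions of the generic $P$, hence themselves generic, so from this point the two runs are identical and eliminate the same further parties and halt together; this yields agreement on all non-clones and the equivalence $w' \in S \iff c \in S'$, i.e.\ $\{c,c'\} \cap S \neq \emptyset \iff c \in S'$. The remaining case, where no clone is ever eliminated, is handled by the same phase-one sync: $P$ halts at a feasible $A \ni c,c'$ iff $A \setminus \{c'\}$ is feasible (the merged score is $\ge$ each clone's), giving $S \supseteq \{c,c'\}$ and $c \in S' = A \setminus \{c'\}$.

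The main obstacle is the feasibility-synchronization step: because STV's stopping rule couples the two clones' scores, the two runs need not halt at the same literal step, and one must rule out the $P'$-run halting while the $P$-run is still churning through the clones. The crux is exactly the transition argument above — showing that the first moment the runs could diverge is self-reconciling, since eliminating the smaller clone in $P$ reproduces (up to relabeling) the merged state that the reduced run is already in. I would therefore organize the write-up around the invariant and a clean case split (loser is a non-clone / loser is a clone / halt with both clones surviving), with the clone score identity stated first as a lemma.
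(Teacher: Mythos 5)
Your proposal is correct and follows essentially the same route as the paper: the negative claims are settled by explicit generic counterexamples (yours differ from the paper's single shared example for DO and GP, but both of yours check out), and the STV argument is the same coupling of the two elimination sequences, with the runs agreeing on non-clone eliminations until a clone is eliminated, after which the remaining profiles coincide up to relabeling. Your write-up is in fact somewhat more careful than the paper's on the feasibility-synchronization point (ruling out one run halting while the other continues), via the explicit clone score identity $\score_{A\setminus\{c'\}}(c)=\score_A(c)+\score_A(c')$.
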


\begin{proof}
	Let us first show that STV satisfies independence of clones. Let $P$ be a generic profile and $P'$ another profile equivalent to $P$ but in which some party $c \in C$ has been cloned into another party $c' \notin C$. In both profiles, we will eliminate parties one by one. 

	As long as neither $c$ nor $c'$ is the plurality loser in $P'$, the order of elimination will be the same in $P$ and $P'$, since cloning $c$ does not affect the relative order of all other parties in the rankings. Note that STV stops when it finds a feasible outcome, which happens if and only if every remaining party is ranked first by at least $\tau$ voters among the remaining parties.
	
	If after some eliminations (the same in both profiles), STV finds a feasible outcome in $P'$ before eliminating either $c$ or $c'$, then this means every remaining party is ranked first among the set of remaining party by at least $\tau$ voters, and thus this is also the case in $P$ after the same elimination steps, and $\STV(P',\tau) \subseteq \STV(P,\tau)\cup\{c'\}$. Moreover, STV could not have found a feasible outcome in $P$ before this step, as the score of parties other than $c$ and $c'$ are the same in both profiles. Thus $\STV(P',\tau) = \STV(P,\tau)\cup\{c'\}$, and the conditions of the axiom are satisfied.

	Now, assume that we reach a step such that $c$ or $c'$ is the plurality loser in $P'$ before finding a feasible outcome. 
    Without loss of generality, assume that $c'$ is the plurality loser, otherwise we exchange the names of $c$ and $c'$. Then, we eliminate $c'$ in $P'$ and do nothing in $P$ (because $P$ is generic, we can assume that $c'$ is eliminated in $P'$). We now reached a step such that the two profiles are perfectly equivalent as the set of remaining parties are identical, and thus after this point the outcomes of STV on $P$ and $P'$ will necessarily be the same (remember that $P$ is generic, so we will never encounter any tie). This concludes the proof that STV satisfies independence of clones. 

	To show that DO and GP do not satisfy independence of clones, consider the following generic profile $P = \{6:a, 4:c'\succ c \succ a, 3: c\succ c' \succ a\}$ with $\tau = 7$. In this profile, $\DO(P,\tau) = \emptyset$ and $\GP(P, \tau)  =  \{a\}$. However, $c$ and $c'$ are clones. Denote $P' = \{6:a,7:c \succ a\}$ the profile without the clone $c'$. In this profile, $\DO(P',\tau) = \GP(P',\tau) =  \{c\}$, contradicting independence of clones. Note that for GP, we could also deduce it from the fact that with $\tau = n$ and a full profile, it is equivalent to the plurality rule in the single-winner setting, which is not independent of clones.
\end{proof}

When using parallel-universe tie-breaking, STV satisfies independence of clones even without restricting the axiom to generic profiles. %

\subsection{Reinforcement for Winning Parties}

The next axiom connects the outcomes of a party selection rule on profiles defined on different sets of voters, and imposes a consistency condition. The axiom is inspired by the reinforcement axiom introduced by \citet{young1974axiomatization}. Our axiom says that if a party is winning in a profile $P_1$ with a threshold $\tau_1$ and in a profile $P_2$ with a threshold $\tau_2$, then it should also be winning in the profile $P_1 + P_2$ with threshold $\tau_1 + \tau_2$, where $P_1 + P_2$ is the profile obtained by ``concatenating'' $P_1$ and $P_2$.

\begin{definition}
	A party selection rule satisfies \emph{reinforcement for winning parties} if for all profiles $P_1$ and $P_2$ and all thresholds $\tau_1$ and $\tau_2$, if $c \in f(P_1, \tau_1)$ and $c \in f(P_2, \tau_2)$, then $c \in f(P_1 + P_2, \tau_1 + \tau_2)$.
\end{definition}

DO satisfies this. In fact, DO is the only party selection rule satisfying inclusion of direct winners and reinforcement for winning parties, thus providing an axiomatic characterization of this rule.%

\begin{theorem}
	DO is the only party selection rule that satisfies inclusion of direct winners and reinforcement for winning parties.
\end{theorem}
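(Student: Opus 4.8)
The plan is to first verify that $\DO$ itself satisfies the two axioms, and then establish uniqueness by showing that any rule $f$ satisfying both must select exactly the direct winners. Existence is immediate: $\DO$ returns the direct winners by definition, so inclusion of direct winners holds trivially, and for reinforcement, if $c$ is ranked first by at least $\tau_1$ voters in $P_1$ and by at least $\tau_2$ voters in $P_2$, then it is ranked first by at least $\tau_1 + \tau_2$ voters in $P_1 + P_2$, hence $c \in \DO(P_1 + P_2, \tau_1 + \tau_2)$. For uniqueness, I would fix $f$ satisfying both axioms and a pair $(P, \tau)$. Inclusion of direct winners already gives $\DO(P, \tau) \subseteq f(P, \tau)$, so the entire content lies in the reverse inclusion: I must show that $c \in f(P, \tau)$ forces $c$ to be a direct winner, i.e.\ ranked first by at least $\tau$ voters. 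The case $\tau = 0$ is trivial, since then every party is a direct winner, so I would assume $\tau \ge 1$.

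The heart of the argument is a proof by contradiction. Suppose $c \in f(P, \tau)$ but $c$ is ranked first by only $s < \tau$ voters, and call this set of voters $V_1$. The idea is to use reinforcement to force $c$ into the outcome of a larger profile in which $c$ provably cannot be feasible, while inclusion of direct winners pins down enough of that outcome to reach a contradiction. The key is to choose the parties that will ``block'' $c$: let $D$ be the set of all parties that are top-ranked by some voter outside $V_1$. By construction every voter not in $V_1$ ranks some party of $D$ above $c$, so within $P$ the only voters ranking $c$ above all of $D$ are exactly the $s$ voters of $V_1$. I would then build an auxiliary profile $P_2$ consisting of $\tau_2$ voters who rank only $c$ (so that $c$ is a direct winner in $P_2$ for threshold $\tau_2$, giving $c \in f(P_2, \tau_2)$), together with, for each $d \in D$, enough extra voters ranking only $d$ to push $d$'s first-place count in $P + P_2$ up to $\tau + \tau_2$; here $\tau_2$ is taken large enough that this is possible.

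Now set $Q = P + P_2$ and $\mu = \tau + \tau_2$. Reinforcement applied to $c \in f(P,\tau)$ and $c \in f(P_2, \tau_2)$ yields $c \in f(Q, \mu)$, while inclusion of direct winners yields $D \subseteq f(Q, \mu)$, so $\{c\} \cup D \subseteq f(Q, \mu)$. Since $f$ must return a feasible outcome and a subset of a feasible set is feasible, $\{c\} \cup D$ is itself feasible, hence $\rep_{\{c\} \cup D}(c) \ge \mu$. But this score can be counted directly: the only voters of $Q$ ranking $c$ above all of $D$ are the $s$ voters of $V_1$ (contributed by $P$) and the $\tau_2$ voters of $P_2$ ranking only $c$, giving $\rep_{\{c\} \cup D}(c) = s + \tau_2 < \tau + \tau_2 = \mu$, a contradiction. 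This forces $s \ge \tau$, completing the reverse inclusion and hence the proof.

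The main obstacle I anticipate is the design of $D$ and $P_2$: reinforcement only ever yields positive (``is a winner'') conclusions, so the only way to turn it against $c$ is to manufacture a profile where $c$ is simultaneously forced into the outcome and provably unable to reach $\mu$ supporters. Making this work requires tight control over the number of voters ranking $c$ above all of $D$, and the crucial idea is to let $D$ absorb the top choices of every voter outside $V_1$, so that $c$'s support stems only from its genuine first-place voters; consequently the padding voters added to make the parties of $D$ direct winners and to invoke reinforcement never inflate $c$'s score beyond $s + \tau_2$. Verifying the counting and handling degenerate cases, such as voters with empty rankings (who never support $c$), should be routine once this construction is fixed.
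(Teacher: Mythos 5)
Your proposal is correct and follows essentially the same strategy as the paper's proof: construct an auxiliary profile of padding voters that makes the competing parties direct winners, use reinforcement to force the non-direct-winner $c$ into the combined outcome, and contradict feasibility by counting $c$'s supporters. The only differences are cosmetic (you pad only the parties top-ranked by voters outside $V_1$ and use a general $\tau_2$, whereas the paper pads all of $C \setminus \{c\}$ with $\tau_1+1$ voters each and takes $\tau_2 = 1$).
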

\begin{proof}
Suppose for a contradiction that there is a party selection rule $f$ other than DO that satisfies these axioms. Since it differs from DO, there is some profile $P_1$ and threshold $\tau_1$ such that party $c$ is not an direct winner but $c \in f(P_1, \tau_1)$. Now build the profile $P_2$ where each party except $c$ gets $\tau_1 + 1$ voters ranking only that party, and $c$ gets 1 voter ranking only $c$. Set $\tau_2 = 1$. Then every party is an direct winner in $P_2$ and thus by inclusion of direct winners, $f(P_2, \tau_2) = C \ni c$. Hence by reinforcement for winning parties, $c \in f(P_1 + P_2, \tau_1 + \tau_2)$. Note that in $P_1 + P_2$, the party $c$ is not an direct winner (because in $P_1$ it had strictly fewer supporters than $\tau_1$, while in $P_1 + P_2$ it has one more supporter than before, which is strictly less than $\tau_1 + \tau_2 = \tau_1 + 1$). However, all other parties are direct winners since they have at least $\tau_1 + 1$ direct votes in $P_2$. Thus by inclusion of direct winners, $C \setminus \{c\} \subseteq f(P_1 + P_2, \tau_1 + \tau_2)$. But note that in $P_1 + P_2$, the set of all parties $C$ is not feasible, because every voter who is not a supporter of $c$ must be assigned to their favorite party, so $c$ can only be assigned voters who rank $c$ top, of whom there exist strictly fewer than $\tau_1 + \tau_2$ voters. Thus, $c \not\in  f(P_1 + P_2, \tau_1 + \tau_2)$, a contradiction.
\end{proof}

\begin{corollary}
	STV and GP do not satisfy reinforcement for winning parties.
\end{corollary}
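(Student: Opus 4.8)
The plan is to derive this immediately from the uniqueness half of the preceding characterization theorem, rather than building fresh counterexamples from scratch. That theorem establishes that DO is the \emph{only} party selection rule satisfying both inclusion of direct winners and reinforcement for winning parties. Since the text has already observed that STV and GP both satisfy inclusion of direct winners, the contrapositive delivers exactly what we need: if either STV or GP also satisfied reinforcement for winning parties, then by uniqueness it would have to coincide with DO on every profile and threshold.

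So the only remaining thing to verify is that neither STV nor GP actually equals DO. This is precisely what \Cref{ex:example_diff} provides: on the displayed profile with $\tau = 5$ we have $\DO(P,\tau) = \{d\}$, whereas $\STV(P,\tau) = \{d,b\}$ and $\GP(P,\tau) = \{d,a\}$, so all three rules return pairwise distinct outcomes. Hence each of STV and GP disagrees with DO on at least one instance, and therefore neither can satisfy reinforcement for winning parties.

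I do not anticipate any genuine obstacle here, since the mathematical content is entirely carried by the theorem. The only point requiring a moment of care is confirming that the two ingredients we invoke are legitimately available: that STV and GP satisfy inclusion of direct winners (noted in the text right after that axiom is defined) and that they are provably distinct from DO (witnessed by the worked example). Once both are in hand, the corollary follows in one line.
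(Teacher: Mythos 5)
Your proposal is correct and is exactly the argument the paper intends: the corollary is stated immediately after the uniqueness theorem precisely so that it follows from (i) STV and GP satisfying inclusion of direct winners and (ii) their differing from DO, as witnessed by \Cref{ex:example_diff}. Nothing further is needed.
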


\subsection{Monotonicity}

The next axiom is the classical monotonicity axiom which says that if a party is selected in the outcome, then it should remain selected if some voters place this party in a better position in their vote, leaving unchanged the relative ranking of the other parties. 

\begin{definition}
	A party selection rule satisfies \emph{monotonicity} if for every profile $P$ and threshold $\tau$, if $c \in f(P, \tau)$ and $P'$ is obtained from $P$ by increasing the rank of $c$ in one ranking without changing the relative ordering of the other parties, then $c \in f(P', \tau)$.
\end{definition}

Note that this axiom corresponds to monotonicity for single-winner irresolute voting rules, which is known to be failed by the single-winner version of STV. 

\begin{proposition}
	DO satisfies monotonicity, but STV and GP do not.
\end{proposition}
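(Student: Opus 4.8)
The plan is to prove the DO part by a one-line argument and to refute monotonicity for STV and GP by explicit counterexamples, with essentially all of the difficulty concentrated in GP. For \emph{DO} the statement is immediate from the definition: $c \in \DO(P,\tau)$ exactly when at least $\tau$ voters rank $c$ first. Raising $c$ in a single ranking (fixing the relative order of the other parties) either turns that voter into a new first-place supporter of $c$ or leaves their top choice unchanged, so the number of first-place votes for $c$ is nondecreasing. Hence $c$ remains a direct winner in $P'$ and $c \in \DO(P',\tau)$.

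For \emph{STV} I would use the reduction noted above: on a full profile with $\tau = n$ the only feasible outcomes are singletons and $\emptyset$, so $\STV(P,n)$ returns exactly the instant-runoff (single-winner STV) winner, and it then suffices to reuse the classical non-monotonicity of instant-runoff. Take the full profile with $2$ ballots $a \succ c \succ b$, $4$ ballots $a \succ b \succ c$, $5$ ballots $b \succ c \succ a$, $6$ ballots $c \succ a \succ b$, and $\tau = 17$. Here $b$ has the fewest first-place votes and is eliminated first; its ballots transfer to $c$, who then beats $a$, so $c \in \STV(P,17)$. Raising $c$ to the top in the two $a \succ c \succ b$ ballots yields $P'$, where $a$ now has the fewest first-place votes and is eliminated first, its ballots transfer to $b$, and $b$ beats $c$; thus $c \notin \STV(P',17)$. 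Because raising $c$ in two ballots decomposes into two single-ballot raises, monotonicity as stated (for one ballot) must already fail at one of these two steps, so no tie is needed here.

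The \emph{GP} case is the main obstacle, because the naive intuition points the wrong way: raising $c$ weakly increases both its plurality score and its number of supporters $\rep_S(c)$ in every set $S$, so $c$ only ever becomes easier to add. The failure must therefore be routed through the party that \emph{loses} a first-place vote. The mechanism I would engineer: let $c$ overtake a party $a$ in the single modified ballot, where $a$ was ranked first, so that $a$'s plurality score drops by one; if $a$ was tied with a party $b$ with the tie broken towards $a$, then after the change $b$ is considered before $a$. In $P$, GP adds $a$ first, which blocks $b$ (the set $\{a,b\}$ is infeasible), after which $c$ can still be added; in $P'$, $b$ is added first and ``covers'' $c$ in the sense that most would-be supporters of $c$ rank $b$ above $c$, so $\{b,c\}$ is infeasible and $c$ is skipped.

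The delicate part is to realise all the required relations simultaneously: $a$ and $b$ tied in plurality with $a$ ahead by tie-breaking, $\{a,b\}$ infeasible, $\{a,c\}$ feasible, and $\{b,c\}$ infeasible. With $\tau = 4$ and tie-breaking priority $a > b > c$, the full profile with $2$ ballots $a \succ b \succ c$, $1$ ballot $a \succ c \succ b$, $1$ ballot $b \succ a \succ c$, $2$ ballots $b \succ c \succ a$, and $2$ ballots $c \succ a \succ b$ works. One checks that $a$ and $b$ tie at plurality $3$ while $c$ has $2$, so GP considers $a,b,c$ in that order: it adds $a$, skips $b$ since $\rep_{\{a,b\}}(b)=3<4$, and adds $c$ since $\rep_{\{a,c\}}(c)=\rep_{\{a,c\}}(a)=4$, giving $\GP(P,4)=\{a,c\}$. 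Raising $c$ in the single $a \succ c \succ b$ ballot makes it $c \succ a \succ b$; now $b,c$ tie at $3$ and $a$ has $2$, so GP considers $b,c,a$, adds $b$, and skips both $c$ (as $\rep_{\{b,c\}}(c)=3<4$) and $a$, giving $\GP(P',4)=\{b\}$, so $c$ is dropped. Finally I would note that the plurality tie is unavoidable: an integer first-place count changes by exactly one, so without a tie $a$ cannot fall below $b$ after losing a single vote; this also explains why the instance still witnesses a failure under parallel-universe tie-breaking, since $c$ lies in some outcome for $P$ but in none for $P'$.
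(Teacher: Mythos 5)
Your proof is correct for the proposition as stated, and at the top level it follows the same strategy as the paper (a direct argument for DO, explicit counterexamples for STV and GP), but the counterexamples are built quite differently. For STV, the paper gives a direct profile with $\tau=13$ and $n=30$ ($\{5:a\succ c,\ 6:c,\ 13:d,\ 4:b\succ a,\ 2:b\succ c\}$, where raising $c$ in the last two ballots changes the elimination order and drops $c$ from the outcome), whereas you reduce to classical instant-runoff non-monotonicity via $\tau=n$ on a full profile; both work, and your decomposition argument for the two-ballot raise is a welcome piece of care that the paper itself omits (its STV and GP examples also modify two identical ballots at once, despite the axiom being stated for one ranking). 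For GP, the mechanisms genuinely differ: the paper's example ($\{5:a\succ c,\ 2:a\succ b\succ c,\ 6:c\succ b,\ 2:b\}$, $\tau=7$) has no plurality ties at all and routes the failure through a change in which party is considered first among parties with distinct scores, while yours forces the reordering through a tie resolved by the fixed priority $a>b>c$. Your single-ballot modification is cleaner with respect to the letter of the axiom; the paper's tie-free instance is more robust. I verified your GP computation: $\GP(P,4)=\{a,c\}$ and $\GP(P',4)=\{b\}$, so the counterexample is valid for the resolute rules with fixed tie-breaking.

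One remark in your last sentence is wrong, however: your GP instance does \emph{not} witness a failure under parallel-universe tie-breaking. In $P'$ the tie is between $b$ and $c$ at plurality $3$; breaking it in favor of $c$ makes GP add $c$ first, after which neither $b$ nor $a$ can be added, so $\{c\}$ is a possible outcome of $P'$ and $c$ does lie in some outcome for $P'$. Since the paper explicitly claims its results extend to parallel-universe tie-breaking, this matters: the paper's tie-free GP example survives that extension, yours does not. The claim that some tie (before or after the change) is unavoidable when only one ballot is modified is essentially right, and it explains why the paper resorts to a two-ballot modification to keep its example tie-free.
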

\begin{proof}
	For DO, consider a profile $P$ and a threshold $\tau$, and suppose that $c \in \DO(P, \tau)$. Now, consider a profile $P'$ obtained from $P$ by increasing the rank of $c$ in one ranking without changing the relative ranking of the other parties. Then, the voters who ranked $c$ in top position in $P$ still rank $c$ in top position in $P'$, and thus $c$ still has at least $\tau$ supporters in $P'$, so $c \in \DO(P', \tau)$.
	
	For STV, take the profile $P = \{5:a \succ c, 6: c, 13:d, 4:b \succ a, 2: b\succ c \}$, and $\tau = 13$.
	In $P$, $a$ is eliminated first, then $b$, and the outcome is $\{c,d\}$. Now, let $P' = \{5:a\succ c, 6:c, 13: d, 4: b \succ a, 2: \underline{c} \succ \underline{b}\}$ be the profile obtained from $P$ by increasing the rank of $c$ in the last two rankings. 
	In $P'$, $b$ is now eliminated first, then $c$, then $a$ and the outcome is $\{d\}$. Thus, $c \notin \STV(P', \tau)$, contradicting monotonicity.

	For GP, let $P = \{5: a \succ c, 2: a \succ b \succ c, 6: c \succ b, 2:b\}$, with $\tau =7$. $a$ is added to the committee first. Then $c$ is considered, but $\{a,c\}$ is not feasible. Then $b$ is added to the committee, as $\{a,b\}$ is feasible. Now, let $P' = \{5: a \succ c , 2: \underline{b} \succ \underline{a} \succ c, 6: c \succ b, 2 :b \}$ obtained by increasing the rank of $b$. In $P'$, $c$ is added first to the committee, then $a$ is considered, but $\{c,a\}$ is not feasible. Then $b$ is considered, but $\{c,b\}$ is not feasible. Thus, $b \notin \GP(P', \tau)$, contradicting monotonicity.
\end{proof}

\subsection{Incentive Issues}\label{subsec:sp}

A major drawback of uninominal voting is that it incentivizes voters to strategically misreport their preferences. 
In particular, in standard uninominal elections, voters whose favorite party will not reach the threshold may instead vote for a large party (and thereby increase its share of the parliament) or vote for a party near the threshold (and thereby potentially move it above the threshold). We refer to this type of strategic voting as \emph{tactical voting}. We will study the extent to which tactical voting can be avoided when using party selection rules.

A second distinct type of manipulation has been called \emph{coalition insurance voting} \citep{gschwend2007ticket,freden2017opinion,Shikano01052009}. Here, a voter whose preferred party is guaranteed to reach the threshold decides to instead vote for a less-preferred party that is in danger of missing the threshold. If that smaller party reaches the threshold, it may form a governing coalition with the voters preferred party. Thus, while the voter is now contributing their support to a worse representative, the voter will be more satisfied with the parliament as a whole. This manipulation is specific to parliamentary elections with thresholds and has no analogue in single-winner voting. We will consider this second type of manipulation separately.

\subsubsection{Tactical Voting} \label{subsec:tactical}

Consider the common tactical vote in a uninominal election: 
a voter that supports a sub-threshold party instead votes for a larger party in order to  have her vote counted towards the parliament's final composition.
By doing this, the voter 
increases the share of representation of the party she votes for without any cost to her true favorite party (which anyway was not going to meet the threshold). In the extreme case, a tactical vote can even cause a new party to enter the winning set.
In our model, we can define tactical voting in two natural ways, depending on how we measure the satisfaction of the voter with an outcome $S$:%
\begin{enumerate}
	\item The satisfaction corresponds \emph{solely} to the highest position of a party in $S$ in the voter's truthful ranking.\footnote{We note that this notion is particularly natural in a clustering-based context, where an agent prefers an outcome to another whenever they are assigned to a better representative.}
	\item The satisfaction corresponds to the highest position of a party in $S$ in the voter's truthful ranking, \emph{and} to the share of representation $\share_S(c)$ of that party, where %
	the share of representation of a party $c$ is defined as $\share_S(c) = \rep_S(c)/(\sum_{x \in S}\rep_S(x))$.
\end{enumerate}

Of course, other notions of satisfaction are possible and could in principle depend on the entire vector of shares of representation and the voter's truthful truncated ranking. However, the two notions above will be sufficient to capture typical manipulations while being permissive enough to allow for positive results.
Note that obtaining full strategyproofness with respect to either of these notions is hopeless. Assume that $\tau=n$ and that all voters submit complete rankings. Then this is almost exactly the single-winner case (since sets with more than one party are not feasible), and we know from the Gibbard--Satterthwaite theorem~\citep{gibbard1973manipulation,satterthwaite1975strategy} that any rule that is strategyproof is either a dictatorship (i.e., the outcome is always the favorite party of some voter $i \in N$) or imposing (i.e., some alternative is never selected). We prove a version of this result formally in \Cref{app:strategyproofness}. 

We will therefore consider strategyproofness in restricted cases, considering a sequence of politically plausible situations where some of our rules turn out to be immune to manipulations.
The notion of strategyproofness that we will first explore in such restricted cases is the following. It requires that no voter can improve her most-preferred party among those selected.

\begin{definition}
	For any profile $P$, any threshold $\tau$, any voter $i$, and any misreport $\succ_i'$, let $P'=(\succ_1, \ldots, \succ_i', \ldots, \succ_n)$, let $S=f(P,\tau)$, and let $S'=f(P',\tau)$. A party selection rule is \emph{representative-strategyproof} if $\best_{S'}(i) \not{\succ_i} \best_{S}(i)$.
\end{definition}

In search of positive results, we will make statements that distinguish three types of parties from the perspective of a voter $i$ on a particular profile, assuming that all other votes are fixed.

\begin{itemize}
	\item A party is \emph{safe} %
	if it is included in the outcome no matter how $i$ votes. %
	\item A party is \emph{risky} %
	if it might be included or not in the outcome depending on how $i$ votes. %
	\item A party is \emph{out} %
	if it is not included in the outcome no matter how $i$ votes. %
\end{itemize}  

Note that this partition of the parties depends on the rule that is used. However, for the three rules we consider, parties that are ranked first by more than $\tau+1$ voters are always safe. On the other hand, parties that are ranked by fewer than $\tau-1$ voters among the voters who did not rank a safe party first are always out. The remaining parties can be either safe, risky or out. Intuitively, risky parties are the ones that are neither clear winners nor clear losers. In a real-world scenario, this corresponds to the parties that are close to the threshold according to the polls.%

We begin by considering the case where there exists at most one risky party. This is often not an unrealistic assumption.\footnote{For example, in the 2023 New Zealand general election, with a threshold of 5\%, only one party's vote share fell in the interval $(3.08\%,8.64\%)$. The same is true in 2020 for the interval $(2.60\%,7.86\%)$ and in 2017 for $(2.44\%,7.20\%)$.}
However, uninominal voting fails representative-strategyproofness even in this case since a voter who prefers a party that is out can instead cast their vote for a risky party that is their second choice, causing the risky party to be included.

Interestingly, we can show that set-maximality implies representative-strategyproofness when there is at most one risky party.

\begin{proposition}
	\label{prop:threshold:maximality-strategyproofness}
	 If a party selection rule satisfies set-maximality, then it satisfies representative-strategyproofness when there is at most one risky party from the perspective of each voter.
\end{proposition}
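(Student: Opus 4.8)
The plan is to argue by contradiction, exploiting the rigidity that the single‑risky‑party assumption imposes on the possible outcomes. Fix the voter $i$ and all other ballots, write $\mathcal S$ for the set of safe parties, and let $r$ be the unique risky party (if there is no risky party, every outcome equals the constant set $\mathcal S$, and the manipulated representative is some party of $\mathcal S$, which cannot be $\succ_i$‑better than $i$'s truly best party of $\mathcal S$, so the claim is immediate). Since out parties are never selected and safe parties always are, for \emph{every} ballot $i$ may submit the outcome is either $\mathcal S$ or $\mathcal S\cup\{r\}$. I would then suppose, for contradiction, that from the truthful ballot $\succ_i$ with outcome $S=f(P,\tau)$ and representative $\best_S(i)$, voter $i$ can submit $\succ_i'$ to obtain $S'=f(P',\tau)$ with $\best_{S'}(i)\succ_i\best_S(i)$.

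First I would locate the improved representative. Let $x=\best_{S'}(i)$. Because $x\succ_i\best_S(i)$ and $\best_S(i)$ is $i$'s truly most‑preferred party of $S$ (or $\emptyset$ if $i$ is unrepresented there), $x$ cannot belong to $S$; yet $x\in S'$. Hence $x$ is selected under one ballot and not under another, so $x$ is risky, and by hypothesis $x=r$. Thus $r\notin S$, which forces $S=\mathcal S$ and $S'=\mathcal S\cup\{r\}$. Two consequences follow. On the manipulated side, since $r$ is $i$'s representative in $S'$, the submitted ballot $\succ_i'$ ranks $r$ above every other party of $S'=\mathcal S\cup\{r\}$, i.e.\ above all of $\mathcal S$. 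On the truthful side, $r=x\succ_i\best_S(i)=\best_{\mathcal S}(i)$ shows that $i$ genuinely prefers $r$ to her best safe party, hence to all of $\mathcal S$, so the truthful ballot $\succ_i$ likewise ranks $r$ above all of $\mathcal S$.

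Finally I would derive the contradiction from feasibility. Set $T=\mathcal S\cup\{r\}$. In both the truthful and the manipulated profile, $i$ ranks $r$ above every party of $T$ other than $r$, so $i$'s most‑preferred party within $T$ is $r$ in both cases; as all other voters are unchanged, the supporter counts $\rep_T(\cdot)$, and therefore the feasibility of $T$, are identical in the two profiles. But set‑maximality applied at the truthful profile makes $S=\mathcal S$ a maximal feasible set, so $T\supsetneq\mathcal S$ is infeasible there, whereas $f$ returns the feasible set $S'=T$ on the manipulated profile, so $T$ is feasible there — a contradiction. The step I expect to be most delicate is this feasibility‑invariance: it relies on the representative being read off the \emph{submitted} ballot, so that having $r$ as the manipulated representative pins down $i$'s support within $T$ to be exactly the support she already contributes truthfully. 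This is precisely what defeats the manipulation, and it is where both the single‑risky‑party hypothesis (used to force $x=r$ and $S,S'\in\{\mathcal S,\mathcal S\cup\{r\}\}$) and set‑maximality enter.
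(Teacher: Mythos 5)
Your proof is correct and follows essentially the same route as the paper's: identify the improved representative as the unique risky party, observe that the manipulator's support within the augmented set $\mathcal S\cup\{r\}$ is the same under the truthful and the misreported ballot so its feasibility is unchanged, and contradict set-maximality at the truthful profile. If anything, your write-up makes the feasibility-transfer step (and the safe/risky/out case analysis pinning $S=\mathcal S$ and $S'=\mathcal S\cup\{r\}$) more explicit than the paper does.
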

\begin{proof}
Let $f$ be a party selection rule that satisfies set-maximality, $P$ a profile in which there is one risky party from the prespective of every voter, and $i \in V$ a voter that successfully manipulate by misreporting, giving another profile $P'$. Let $S = f(P,\tau)$ and $S' = f(P', \tau)$. We assume $\best_{S'}(i) \succ_i \best_{S}(i)$, and thus $\best_{S'}(i) \notin S$. Thus, $c' = \best_{S'}(i)$ is a risky party from the perspective of $i$. However, since $c'$ is better ranked than $\best_{S}(i)$ in the ranking of voter $\succ_i$ and $S'$ is feasible in $P'$, it is also feasible in $P$. Thus, by set-maximality, $S' \ne S \cup \{c'\}$, but this means that $S \setminus S' \ne \emptyset$, and thus there exists a risky party in $S$ different than $c'$, a contradiction.
\end{proof}

A direct corollary is that GP, MaxP and MaxR satisfy representative-strategyproofness when there is at most one risky party from the perspective of each voter. We can show that it is not the case of DO and STV.

\begin{proposition} \label{prop:representative-strategyproofness}
GP satisfies representative-strategyproofness when there is at most one risky party from the perspective of each voter. DO and STV do not.
\end{proposition}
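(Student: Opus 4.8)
The plan is to settle the positive claim by a one-line appeal to the machinery already developed, and then to build two explicit counterexamples for the negative claims; the delicate part will be certifying the side condition ``at most one risky party from the perspective of each voter'' \emph{globally} while still exhibiting a profitable manipulation. For \GP{}, nothing new is required: \GP{} is set-maximal by the earlier proposition, so by \Cref{prop:threshold:maximality-strategyproofness} it is representative-strategyproof whenever every voter faces at most one risky party. This is exactly the direct corollary flagged before the statement.

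For \DO{}, I would exhibit a profile in which a single voter pulls a risky party above the threshold, improving their representative. A clean choice is $P=\{1\colon a\succ b\succ c,\ 2\colon b,\ 4\colon c\}$ with $\tau=3$, so that $\DO(P,\tau)=\{c\}$ and the lone $a\succ b\succ c$ voter is represented by $c$; reporting $b$ on top raises $b$'s first-place count to $\tau$, forcing $\DO=\{b,c\}$ and improving that voter's representative to $b\succ_i c$. Since \DO{} depends only on first-place counts, a party is risky for a voter precisely when it has $\tau-1$ first-place votes among the \emph{other} voters, so verification is immediate: the only such party from any voter's perspective is $b$ (the $a$-voter and the $c$-voters each see $b$ as their unique pivot, the $b$-voters see none), and the side condition holds.

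For \STV{}, the idea is to exploit that \STV{} is \emph{not} set-maximal -- which is exactly what made \Cref{prop:threshold:maximality-strategyproofness} inapplicable. I would use a profile where \STV{} eliminates a party $b$ early, yet one voter can reorder the eliminations so that a different party is removed first and its transferred vote rescues $b$. Concretely, take $\{4\colon a,\ 2\colon x\succ b\succ a,\ 1\colon b\succ a\}$ with $\tau=3$: here \STV{} eliminates $b$, then $x$, and outputs $\{a\}$, but one of the $x\succ b\succ a$ voters, by reporting $b$ on top, makes $x$ the plurality loser, so $x$ is eliminated first, its vote transfers to $b$, and $\STV=\{a,b\}$, improving that voter's representative from $a$ to $b$.

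The step I expect to be the main obstacle is verifying the global side condition for the \STV{} example, because here the safe/risky/out classification depends on the entire elimination order rather than on a single count, and perturbing one vote can create ties. I would control this by giving the sure winner $a$ a comfortable margin above $\tau$ (its four first-place votes), so that no deviation of an $a$-voter can trigger a cascade and $a$ stays safe for everyone, and by fixing the tie-breaking order so that $b$ is eliminated before $x$ whenever they tie (which never affects the runs above, since those involve no tie). With these choices I would check voter by voter that each voter can flip exactly one party -- $b$ for the $x\succ b\succ a$ voters and $x$ for the others -- so that at most one party is risky from anyone's perspective, completing the counterexample.
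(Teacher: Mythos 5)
Your proposal follows the same overall architecture as the paper's proof: the positive claim for \GP{} is discharged exactly as in the paper, as an immediate corollary of \Cref{prop:threshold:maximality-strategyproofness}, and the negative claims for \DO{} and \STV{} are settled by explicit counterexamples. Your counterexamples are different from (and larger than) the paper's, which uses two-voter profiles: $\{1\colon b\succ a,\ 1\colon a\succ c\}$ with $\tau=2$ for \DO{}, and $\{1\colon b\succ a,\ 1\colon c\succ a\}$ with $\tau=2$ for \STV{}, asserting there that only $a$ is risky. I checked your two profiles voter by voter: the safe/risky/out classifications are as you claim (for \DO{} via the first-place-count criterion, for \STV{} using your tie-breaking convention that $b$ is eliminated before $x$ on ties), and both manipulations succeed, so your examples are valid. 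Your insistence on certifying the side condition globally is in fact the more careful route: in the paper's \STV{} example, the first voter can also force $c$ into the outcome by reporting only $c$ (then $b$ and $a$ are eliminated and $\{c\}$ is feasible with two supporters), so $c$ is risky for that voter in addition to $a$, and the profile arguably fails the hypothesis ``at most one risky party per voter.'' Your seven-voter profile, in which the sure winner $a$ holds four first-place votes so that no single deviation can dislodge it and every voter is pivotal for at most one other party, avoids this issue; what the extra size buys is precisely the verifiability of the precondition that your plan correctly identified as the delicate step.
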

\begin{proof}
	For GP, this is a direct consequence of \Cref{prop:threshold:maximality-strategyproofness}. 
	For STV, consider the profile $P = \{1: b \succ a, 1: c \succ a\}$ with $\tau=2$. Parties $b$ and $c$ are out, while $a$ is risky. We have STV$(P,\tau) = \emptyset$, but either voter could manipulate the outcome by placing $a$ first in their vote, which would result in the outcome $\{ a \}$ (assuming that ties are broken in favor of $a$).
	For DO, consider the profile $P = \{1: b \succ a, 1: a \succ c\}$ with $\tau=2$. We have DO$(P,\tau) = \emptyset$, but the first voter can manipulate by voting $a \succ b$, resulting in outcome $\{ a \}$.
\end{proof}

Note that GP fails strategyproofness if we allow two risky parties. To see this, consider the profile $P = \{3: a, 2: b \succ c, 1: c \succ b \succ a\}$ with $\tau=3$. Party $a$ is safe but $b$ and $c$ are risky from the perspective of the last voter. GP selects $\{ a, b \}$ and is then unable to add $c$. However, if the last voter reports $c \succ a \succ b$, GP will first add $a$, then consider $b$ but not add it (since $\{ a, b \}$ is no longer feasible), and finally add $c$ and output $\{ a, c \}$, which is better for the manipulating voter.

We now turn to the second, stronger notion of strategyproofness, where the satisfaction of a voter depends on both the position and the share of representation of their favorite selected party. Recall that the share of representation of a party $c$ is defined as $\share_S(c) = \rep_S(c)/(\sum_{x \in S} \rep_S(x))$.

\begin{definition}
For any profile $P$, any threshold $\tau$, any voter $i$, and any misreport $\succ_i'$, let $P'=(\succ_1, \ldots, \succ_i', \ldots, \succ_n)$, let $S=f(P,\tau)$, let $S'=f(P',\tau)$, and let $c=\best_S(i)$. A party selection rule is \emph{share-strategyproof} if $\best_{S'}(i) \not \succ_i c$ and $\share_{S'}(c) \le \share_S(c)$.
\end{definition}

First, we consider the case where every voter has a safe party among their top two preferences. Clearly, uninominal voting fails share-strategyproofness in this case, as a voter can elevate a safe party that is their second choice ahead of a party that is out but their true favorite, thus increasing the share of the vote allocated to the safe party. Notably, this strategy is ineffective under DO.

\begin{proposition}
\label{thm:large-first-or-second}
DO satisfies share-strategyproofness whenever the most-preferred or second-most-preferred party of every voter is safe from the perspective of that voter. GP and STV do not.
\end{proposition}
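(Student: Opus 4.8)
The plan is to prove the positive claim for DO directly, exploiting the fact that, under DO, the selected set depends only on first-place votes. Fix a voter $i$ and hold all other ballots fixed, and let $s_{-i}(c)$ denote the number of voters in $N \setminus \{i\}$ who rank $c$ first. Then $c$ is \emph{safe} from $i$'s perspective iff $s_{-i}(c) \ge \tau$, \emph{risky} iff $s_{-i}(c) = \tau - 1$, and \emph{out} iff $s_{-i}(c) \le \tau - 2$. The hypothesis guarantees that $i$'s top party $a_1$ or second party $a_2$ is safe, so the honest representative $c = \best_S(i)$ falls into exactly one of three cases: (A) $a_1$ is safe, so $c = a_1$; (B1) $a_1$ is out and $a_2$ is safe, so $c = a_2$; or (B2) $a_1$ is risky (hence in $S$, since $i$ ranks it first) and $c = a_1$.

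First I would dispatch the representative part $\best_{S'}(i) \not\succ_i c$. In cases (A) and (B2), $c$ is $i$'s top party, so no party is ranked above it and the condition is immediate. In case (B1), the only party $i$ ranks above $c = a_2$ is $a_1$, which is out and therefore cannot appear in $S'$ no matter how $i$ votes; hence $\best_{S'}(i) \not\succ_i a_2$.

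For the share part, the key structural fact is that replacing $\succ_i$ by $\succ_i'$ changes only the first-place tallies of $i$'s old and new top parties, so the DO outcome can change only by removing $i$'s old top (if it was risky) and/or adding $i$'s new top $y$ (if it was risky). In cases (A) and (B1) the old top ($a_1$, safe or out) is never removed, so $S \subseteq S' \subseteq S \cup \{y\}$ with $y$ risky; I would then show $\rep_{S'}(c) \le \rep_S(c)$ (enlarging the outcome and moving $i$'s ballot can only pull supporters away from the safe party $c$, never toward it) and that the number of represented voters does not decrease (no voter represented under $(S,P)$ becomes unrepresented under $(S',P')$, since $S \subseteq S'$, only $i$'s ballot moved, and $i$ stays represented). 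Together these give $\share_{S'}(c) \le \share_S(c)$; the boundary situation where $i$ makes itself unrepresented is handled by the elementary inequality $(\rep_S(c)-1)/(D-1) \le \rep_S(c)/D$, where $D$ is the honest number of represented voters. In case (B2), to keep the risky party $c = a_1$ in the outcome $i$ must keep ranking it first, which leaves the entire outcome and all assignments unchanged; any other report drops $a_1$ from $S'$, giving $\share_{S'}(a_1) = 0$.

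The main obstacle is the share bookkeeping when a new risky party $y$ enters: I must verify simultaneously that $y$'s entry can only divert supporters away from $c$ (decreasing the numerator) while only adding represented voters (increasing the denominator), using the monotonicity of $\best$ under set enlargement. For the negative claims, I would exhibit small profiles satisfying the ``top-or-second safe'' hypothesis on which GP and STV are manipulable. The mechanism to exploit is that, unlike DO, both rules depend on more than first-place tallies---GP through the feasibility tests it performs using entire ballots and the plurality-based order in which parties are considered, and STV through the elimination order---so a voter can reorder parties below a retained safe party to flip the inclusion of some other party and thereby raise the share of their representative (or obtain a strictly better one). The fiddly part there is certifying the safe/risky/out classification for GP and STV in the constructed examples, since these classifications are rule-specific and must be checked against all of $i$'s possible reports.
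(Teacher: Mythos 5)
Your treatment of the DO half is correct and is essentially the paper's own argument, spelled out in more detail. The paper likewise splits on whether the voter's top party is selected (your cases (A) and (B2)) or whether the safe second choice is the representative (your case (B1)), and relies on the same structural fact that a misreport under DO can only move the manipulator's single first-place vote, hence can only shrink the support of the representative and never shrink the support of any other selected party. Your exact safe/risky/out characterization via $s_{-i}(c)$, the observation that $S \subseteq S' \subseteq S \cup \{y\}$ in cases (A) and (B1), and the explicit denominator bookkeeping (including the boundary case where $i$ unrepresents itself, settled by $(\rep_S(c)-1)/(D-1) \le \rep_S(c)/D$) are all sound and, if anything, more careful than the published version.

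The genuine gap is the second half of the statement. ``GP and STV do not'' is an existential claim that can only be established by exhibiting and verifying concrete profiles, and you have only described the mechanism you would exploit while explicitly deferring the ``fiddly part'' of certifying the safe/risky/out classification. That certification is precisely the substantive work here: the hypothesis must hold for \emph{every} voter against \emph{all} of their possible reports, and for GP and STV safety is not a simple tally condition as it is for DO. For reference, the paper uses the single profile $P=\{1: a\succ b,\ 5: a,\ 3: b\succ c,\ 3: c\succ a\}$ with $\tau=4$ for both rules: GP (and STV, with ties broken so that $b$ is eliminated) selects $S=\{a,c\}$ with $\share_S(a)=6/12$; one checks that $a$ is safe for everyone (it remains a direct winner even after losing a vote) and that $c$ is safe for the $b\succ c$ voters (no unilateral deviation by one of them can make $b$ feasible, since $b$ can never reach four supporters in a set containing $a$, so $c$ remains addable), whence every voter has a safe party in first or second position; yet if the lone $a\succ b$ voter reports just $b$, the outcome becomes $\{a,b\}$ and $\share(a)$ rises to $8/12$. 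Without an explicit, verified example of this kind, the negative claims remain unproven in your write-up.
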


\begin{proof}
	For DO, consider a voter $i$ with preference $c_j \succ_i c_k \succ_i \dots$.  If $c_j$ is selected then $i$ has no incentive to misreport, as it is clear that no manipulation will increase the share of representation of $c_j$ (in fact, they will all decrease it). If $c_j$ is not included, then $c_k$ is safe by assumption, and therefore selected. Moreover, it is the representative of voter $i$. Since any manipulation by $i$ can only decrease the number of first-place votes received by $c_j$, it is impossible for $c_j$ to be selected after the manipulation: the party is \emph{out}. Similarly, any manipulation by $i$ can only decrease the number of supporters of $c_k$ without decreasing the support of any other parties.
	Therefore, no manipulation can result in an increase of share of representation for $c_k$.

	For GP, we consider the profile $P = \{ 1: a \succ b, 5: a, 3: b \succ c, 3:c \succ a \}$ with $\tau = 4$. GP selects $S=\{ a,c \}$ and we have $\share_S(a) = 6/12$. Note that party $a$ is safe from the perspective of every voter since it is a direct winner even if it loses a vote. Furthermore, party $c$ is safe from the perspective of the voters with preference $b \succ c$, since no unilateral deviation from any of them can cause $b$ to be selected, and therefore $c$ will be feasible to add to the set $\{ a \}$ with at least five supporters (even if the deviator does not rank $c$). Therefore, every voter has a safe party in first or second position and this profile satisfies the conditions of the proposition. However, if the first voter changes their report to $b$ then GP selects $S'=\{ a, b \}$ and we have $\share_{S'}(a) = 8/12>6/12$. 
	
	For STV, the same example as for GP also demonstrates a violation of share-strategyproofness, provided that ties between $b$ and $c$ are broken in favor of $b$ being eliminated.
\end{proof}

Finally, we prove a result in which the set of possible misreports of a voter is restricted. In particular, we assume that voters will only misreport by promoting their most-preferred party that is selected under truthful voting into first position in their misreport. This restriction can be thought of as giving voters perfect knowledge of which parties are out (and therefore not worth voting for), but not enough sophistication to perform arbitrarily ``complex'' manipulations.

\begin{proposition}
\label{thm:share-sp-misreport-restriction}
	DO and GP are share-strategyproof under the restriction that $c = \best_S(i)$ is ranked first in $\succ_i'$, but STV is not. %
\end{proposition}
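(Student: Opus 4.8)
The plan is to treat the three rules separately: DO and GP are positive cases handled by structural arguments, while STV is refuted by an explicit profile. Throughout, let $d$ denote voter $i$'s truthfully top‑ranked party; by hypothesis the deviation $\succ_i'$ promotes $c=\best_S(i)$ into first place (if $d=c$ then $\succ_i'=\succ_i$ and there is nothing to show, so assume $d\neq c$, whence $d\succ_i c$ and, since $c=\best_S(i)$, $d\notin S$). The single useful reduction is this: because $i$ now ranks $c$ first, in \emph{any} outcome $T$ with $c\in T$ we have $\best_T(i)=c$. Hence as soon as we show $c\in S'$, the representative half of share‑strategyproofness holds automatically ($\best_{S'}(i)=c\not\succ_i c$), and only the share inequality $\share_{S'}(c)\le\share_S(c)$ remains.

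For DO the argument is elementary, since DO depends only on first‑place counts. Promoting $c$ raises $c$'s count by one and lowers $d$'s by one, leaving all other counts unchanged. As $c\in\DO(P,\tau)$ it already had at least $\tau$ first‑place votes and still does; as $d\notin\DO(P,\tau)$ it had fewer than $\tau$ and still does; every other party is untouched. Thus $S'=\DO(P',\tau)=\DO(P,\tau)=S$, which gives $c\in S'$ and the representative condition. Finally the only ballot that changed is $i$'s, and $i$ supports $c$ both truthfully (as $\best_S(i)=c$) and after deviating (as $c$ is now first), so every supporter set is unchanged, $\supp_{S'}=\supp_S$, and $\share_{S'}(c)=\share_S(c)$.

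For GP the two obligations are $c\in\GP(P',\tau)$ and the share bound. The key structural claim I would establish is that promoting the representative can only enlarge the winning set, $\GP(P,\tau)\subseteq\GP(P',\tau)$, so in particular $c\in S'$. Granting the inclusion, the share bound is clean: with $S\subseteq S'$ the number of represented voters only grows (the denominator of $\share(c)$ is weakly larger), while $\rep_{S'}(c)\le\rep_S(c)$ because the extra selected parties can only capture former $c$‑supporters — and $i$ itself stays with $c$ in both outcomes — so $\share_{S'}(c)=\rep_{S'}(c)/D_{S'}\le\rep_S(c)/D_S=\share_S(c)$. To prove the inclusion I would compare the two greedy runs step by step: the processing order of $P'$ is that of $P$ with $c$ moved up and $d$ moved down, and since $d\notin S$ it is skipped in both runs, so its position is irrelevant. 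The decisive point is that every winner $w\in S$ with $w\neq c$ satisfies $c\succ_i w$ (because $c=\best_S(i)$), so $i$ never supports $w$ in any set also containing $c$; hence the supporter set of each such $w$ is \emph{identical} under $P$ and $P'$. Using this supporter‑invariance together with the downward‑closedness of feasibility (a subset of a feasible set is feasible, so a superset of an infeasible set is infeasible), I would argue inductively that each $w\in S$ is still feasible, hence added, when GP reaches it. I expect the order‑tracking to be the main obstacle: one must rule out that admitting $c$ (or some additional, diluting party) earlier renders an original winner infeasible at the moment GP considers it, and this is exactly where the supporter‑invariance of the $w\neq c$ and the monotonicity of infeasibility under supersets are needed.

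For STV the statement fails, and I would exhibit an explicit profile together with a fixed tie‑breaking order. The mechanism to exploit is STV's path‑dependence: promoting $c$ changes the plurality scores of $c$ and of $i$'s favorite $d$ by $+1$ and $-1$, which reorders the sequence of eliminations. Unlike DO and GP, STV's terminal feasible set genuinely depends on this order, so a single promotion can cause a different low‑scoring party to be eliminated first and thereby redirect the subsequent transfers. I would design the instance so that the reordering eliminates a party that was splitting $c$'s bloc in the truthful run while keeping $c$ itself selected; then $c$ remains $i$'s representative but absorbs the freed voters, giving $\share_{S'}(c)>\share_S(c)$ and violating the share condition. This is the STV analogue of its well‑known non‑monotonicity — already visible in the monotonicity counterexample for STV given earlier — and a small four‑party instance suffices.
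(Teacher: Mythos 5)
Your treatment of DO and your overall architecture for GP follow the paper's proof closely. For GP, the paper runs the same induction over the greedy consideration order (using that only $c$ and $d$ change their plurality counts) but proves the stronger conclusion $S'=S$, after which the share condition is immediate because no supporter set changes; you instead aim for $S\subseteq S'$ and then close with the numerator/denominator comparison, which is a valid completion (given $S\subseteq S'$, every non-$i$ supporter of $c$ in $S'$ is a supporter of $c$ in $S$, and the set of represented voters can only grow). One caution on your key lemma: your supporter-invariance claim for winners $w\neq c$ only holds for sets \emph{containing} $c$, whereas GP's feasibility checks are performed on partial sets that need not yet contain $c$ — and the formal statement only pins down the \emph{first} entry of $\succ_i'$, so the ballot below $c$ may be arbitrary. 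You correctly flag this order-tracking as the crux but do not carry it out; to be fair, the paper's own proof is equally terse at exactly this point.

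The genuine gap is the STV part. You correctly identify the mechanism — promoting $c$ perturbs the plurality scores of $c$ and $d$ by $\pm 1$, reorders the eliminations, and can redirect transfers so that $c$ absorbs a freed bloc while remaining selected — and this is precisely the mechanism the paper exploits. But a disproof by counterexample \emph{is} the counterexample, and "a small four-party instance suffices" is an assertion, not an argument. The paper exhibits one: with $\tau=10$ and $P=\{10{:}\,b,\ 4{:}\,c\succ d,\ 3{:}\,d\succ c,\ 2{:}\,d\succ b,\ 3{:}\,a\succ b\succ d,\ 1{:}\,a\succ b\succ c\succ d\}$, the truthful run yields $S=\{b,d\}$ with the last voter represented by $b$, while promoting $b$ to first place in that voter's ballot changes which of $c$ and $d$ survives and raises $\share(b)$ from $11/23$ to $13/23$. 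Constructing and verifying such an instance is nontrivial (one must simultaneously keep $c=\best_S(i)$ selected in both runs, ensure the single-vote shift actually flips the first elimination, and check the resulting shares), so until you produce and check a concrete profile, the claim that STV fails remains unproven in your write-up.
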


\begin{proof}
For DO, it is clear that if a voter elevates $\best_S(i)$ to first position then neither the set of direct winners nor their supporters undergo any change.

For GP, any misreport allowed by the theorem results in $c$ receiving one additional plurality vote, while $i$'s true first choice $c_j$ receives one fewer. GP may now consider $c$ earlier and $c_j$ later than under truthful voting, but the relative order of consideration for all other parties remains unchanged. It is easy to see that $c \in S'$ and $c_j \not \in S'$. For any other party $c_k$, assume by induction that all parties considered before $c_k$ are selected in $S'$ if and only if they are selected in $S$. If $c_k \in S$ then it must be feasible to add $c_k$ to $S'$, since the outcome set at the time that $c_k$ is considered is a strict subset of $S$.  Similarly, if $c_k \not \in S$ then it must not be feasible to add $c_k$ to $S'$, since the outcome set at the time that $c_k$ is considered in the misreported instance is a superset of the outcome set at the time that $c_k$ is considered in the truthful instance.

	For STV, consider the profile $P = \{10:b, 4: c\succ d, 3: d\succ c, 2: d\succ b, 3: a \succ b \succ d, 1: a \succ b \succ c \succ d\}$ with $\tau = 10$.
	In this profile, %
	$c$ is eliminated first, then $a$, then the outcome $S = \{b,d\}$ is feasible with $\rep(b) = 11$ and $\rep(d) = 12$. Now, if the last voter changes her vote to $b \succ a \succ c \succ d$, then $a$ is eliminated first, then $d$ and the outcome is $S' = \{b,c\}$ with $\rep(b) = 13$ and $\rep(c) = 10$. Thus, the voter is more satisfied because $\share_{S'}(b) = 13/23 > 11/23 = \share_S(b)$.
\end{proof}

It should be observed that \Cref{thm:share-sp-misreport-restriction}
covers the typical manipulation that occurs with uninominal voting: voters put their favorite \emph{safe} party first instead of their favorite party, in order to increase the vote count of this party (otherwise, their ballot would not be considered). Consider for instance the profile $P = \{1:a \succ b, 3:b,3:c\}$ with $\tau = 3$. With our rules, the first voter can vote sincerely: her vote will support $b$ since $a$ is out. On the other hand, under uninominal voting, she has an incentive to vote for $b$. 

Finally, note that STV is not covered by any of these positive results. However, in cases where STV is manipulable, 
the voter causes the elimination order to change, and thus some parties not to be added to the outcome anymore, possibly increasing the vote share of her representative. This is arguably a very unnatural manipulation for voters, who need almost full knowledge of the preferences of the other voters to be able to predict the correct manipulation.

\subsubsection{Coalition insurance voting}
\label{sec:insurance}

A separate type of manipulation is \emph{coalition insurance voting} \citep{cox1997making}, where a supporter of a safe party $c$ instead votes for a risky party $d$ that she also likes in order to push $d$ over the threshold, thereby potentially allowing $c$ and $d$ to form a governing coalition (while in a parliament without $d$, there would be no majority for $c$).
Indeed, while $c$ loses one supporter, party $d$ gains $\tau$ supporters by virtue of being included in the outcome. In many cases, the voter will be more satisfied with the new outcome as a whole.
In many countries with proportional representation systems, parties announce intended coalitions in advance of the elections, and safe parties such as $c$ might even encourage their supporters to vote for $d$ instead.
coalition insurance voting has been observed in several countries including Germany and Sweden, and is well-studied using survey and lab experiments \citep[e.g.,][]{freden2017opinion,freden2024insurance}.

Can such voting behavior be avoided when using party selection rules?
Unfortunately, a simple example suffices to show that no rule that satisfies the inclusion of direct winners axiom is immune to manipulations of this type. Consider the profile $P = \{3:a, 3:b, 4:c, 	2:d, 1: c\succ d\}$ with $\tau = 3$, and suppose that the last voter likes both parties $c$ and $d$ and is close to indifferent between them, but dislikes $a$ and $b$.
The outcome under truthful voting is $S=\{a,b,c\}$ by inclusion of direct winners as all these parties have at least $\tau$ first-place votes, and including $d$ would violate feasibility. 
Now, the last voter only likes party $c$ from $S$, which makes up $5/11 \approx 45\%$ of the parliament, and $\{a,b\}$ may form the governing coalition.
Now, if she changes her vote to $d \succ c$, then the outcome will be $S'=\{a,b,c,d\}$ by inclusion of direct winners as all parties get at least $\tau$ first choices. After this manipulation, the last voter likes $7/13 \approx 54\%$ of the parliament, and $\{c,d\}$ may now be forming a governing coalition, thus including the most-preferred party of the manipulating voter. %

While party selection rules cannot completely avoid this effect, one might expect that in practice there is less motivation for this kind of manipulation in the case that voters can submit a ranking than if they can only submit a uninominal ballot. In particular, if the smaller party $d$ is not selected, 
then presumably many votes cast for $d$ will transfer to $c$ as a second or third choice.
For instance, in the previous example, the $d$ voters might have put $c$ in second place if $c$ and $d$ were running on similar platforms or had announced an intention of forming a coalition.
However, there is also a possibility that coalition insurance voting might \emph{increase} under rankings, since this strategy is less risky in this situation. Indeed, if voters preferring $c$ instead cast the ranking $d \succ c$, then either the manipulation is successful (and $d$ is selected), or it is unsuccessful and the vote is transferred to $c$, which does not hurt the manipulator. In contrast, under uninominal voting, a vote for $d$ carries a risk that the vote will be lost. Further experiments are needed to determine the actual impact of ranking ballots on this kind of manipulation.

\section{Experiments}
\label{sec:experiments}

In the classical uninominal system, voters either vote sincerely, at the risk of wasting their vote, or they choose to vote strategically. 
Inspired by our theoretical analysis, our goal is to check empirically if under a ranking-based systems, voters would vote less strategically, for instance by voting for less popular parties. We also want evaluate the extent to which these systems allow better representation by decreasing the number of unrepresented voters, without drastically increasing the number of parties included in the parliament (as the main argument for the threshold is to reduce the number of parties).

We base our study on a survey we ran in the context of the election of the French representatives to the European Parliament held in June 2024. The French representatives are elected by a nationwide party-list proportional representation system, using the D'Hondt apportionment method with a 5\% threshold. 
In the 2024 election, 38 lists took part in the election and 12.08\% of votes were cast on lists that did not reach the threshold. Seven lists reached it, two of which were just above the threshold (with 5.47\% and 5.5\% of vote share), so the proportion of wasted votes could have been much worse (which happened in 2019, when it was around 20\%).

\subsection{The Datasets}

Our datasets \citep{zenodo} were collected through a survey that invited participants to consider how they would vote in a system that allowed them either to rank at most two lists, or to rank an arbitrary number of lists. The survey was administered through a custom-built online platform in French language.
Participants were led through several steps (shown through screenshots in \Cref{app:screenshots}):
\begin{enumerate}
	\item Participants were briefly informed about the problem of ``wasted votes'', with the help of data visualizations of the results of the 2019 election where 19.8\% of votes had been lost due to the threshold. We then familiarized participants with the 38 lists participating in this election: for each list, we displayed its official posters and provided access to their campaigns manifestos. We sampled a random ordering of the lists for each participant, and we always displayed the lists in that order in the subsequent steps. 
	\item In the second step, participants were told that to avoid losing votes for lists that don't reach the threshold, they were allowed to specify a second choice if they wished so, which would be taken into account if their first-choice fell below the threshold of 5\%. We then asked them to indicate how they would vote under this system. Participants could rank 0, 1, or 2 lists. Paid participants had to rank at least 1 list.
	\item In the third step, we explained further that voters could now vote for as many lists as desired. We explained that if the first choice were to receive less than 5\% of the votes, the second-choice vote would be counted. If the second choice still receives less than 5\%, the third-choice vote would be counted, and so on. We then asked participants to indicate how they would vote under this system. Participants could rank between 0 and 38 lists. Paid participants had to rank at least 1 list.
	\item In the next step, participants were asked to say for which list they intended to vote on election day (for participants taking the survey prior to election day), or for which list they had voted (for those taking it after election day). Answering was optional.
	 \item Finally, participants were shown two questionnaires: one in which they could express their opinion about moving to more expressive voting, and a second one in which they could give some socio-demographic data.
\end{enumerate}

We ran this survey on two different samples of participants: 
\begin{itemize}
	\item \emph{Self-selected sample.} We recruited 3\,046 participants through social media and mailing lists. These participants tended to be interested in the political process and were very diligent in answering the survey. However, the resulting 
	sample is clearly not representative of the French population: it is very left-leaning (66\% intended to vote for one of the four biggest left-oriented parties LFI, PCF, PS and EELV, while they receive less than 32\% of the votes at the election), young (72\% are between 18 and 39, while 30\% of the French population is between 20 and 39), and highly educated (92\% indicated a university education level, while it is of 42\% in the general population). %
	Participants were not paid. This dataset only includes answers from those who indicated being registered to vote. It was gathered between May 30th and June 26th 2024, however for our analysis we will only keep the 2\,840 participants who completed the survey before election day (June 6th).
	\item \emph{Representative sample.} We recruited 1\,000 participants through the survey research company Dynata. Participants were paid a fixed amount of money to participate in the study. This sample is more representative of the French population with respect to demographics and voting behavior (56\% indicated a university education level, 33\% are between 18 and 39, and 25\% voted for one of the main four left-oriented parties). However, it is of lower quality as some participants appear to have filled out the form as quickly as possible. %
	This dataset was collected between June 17th and 25th, around two weeks after the election.
\end{itemize}

All the collected data were anonymized. The participants were informed before participating that the data would be used for research purposes only. Participants had the possibility to skip any question they did not wish to answer, except that paid participants in the representative sample had to answer the demographic questions, and rank at least one list in the second and third steps. The study received ethics approval (Université Paris Dauphine~-~PSL, décision du comité d’éthique de la recherche n°2024-01 and University of Virginia IRB protocol number 6756).
The dataset is available at \url{https://zenodo.org/records/13828295} \citep{zenodo}.

For both samples, we used the voting intentions (or actual votes if the election took place before the experiment) to assign weights to voters in order to reduce (but of course not eliminate) the representation bias of our samples. 
This weighting methodology is standard for surveys of alternative voting methods \citep{vanDerStraeten2013voteaupluriel,graeb2018ersatzstimme}.
Participants who did not provide any voting intention were assigned weight zero, leaving $n = 2\,838$ participants with non-zero weight for the self-selected sample and $n = 895$ for the representative sample. 
This gives us a total of four preference profiles: self-selected and representative samples, together with either two votes or with rankings.

\subsection{Analysis of the Results}

We conducted several analyses and experiments using these datasets. %
We first discuss the strategic behavior of the participants, then we present the results of our rules on the datasets. 

\subsubsection{Strategic Voting}
\setlength{\textfloatsep}{0pt plus 1.0pt minus 2.0pt}
\begin{table}[!t]
	\scalebox{0.9}{
	\begin{tabular}{l c c  c c} \toprule 
		& \multicolumn{2}{c}{Self-selected} & \multicolumn{2}{c}{Representative} \\
		& 2 votes & Ranking & 2 votes & Ranking \\\midrule
		Inconsistent ($c^*$ not ranked) & $3.6\%$ & $0.7\%$ &  $8.6\%$& $10.4\%$ \\ \midrule
		Sincere ($c = c^*$) & $73.5\%$  & $70.2\%$ & $84.3\%$ & $80.4\%$ \\\midrule
		Strategic ($\score(c) < \score(c^*)$) &  $22.6\%$ &$28.6\%$ & $5.0\%$ & $6.2\%$ \\\midrule
		$\quad\hookrightarrow$ out $\rightarrow$ safe & $16.5\%$ & $21.0\%$ & $2.7\%$ &$3.2\%$ \\
		$\quad\hookrightarrow$ out $\rightarrow$ risky & $2.1\%$ & $2.9\%$ & $0.3\%$ & $0.7\%$\\
		$\quad\hookrightarrow$ risky $\rightarrow$ safe & $2.6\%$ & $3.1\%$ &$0.8\%$ & $0.4\%$ \\
		$\quad\hookrightarrow$ others & $1.5\%$ & $1.5\%$ & $1.3\%$  & $2.0\%$ \\\midrule
		Strategic ($\score(c) > \score(c^*)$) & $0.3\%$ & $0.6\%$ & $2.1\%$ & $3.0\%$ \\%
		\bottomrule
	\end{tabular}}
	\smallskip
	\caption{Comparisons of the voting intention and the first ranked list in the ranking. }
	\label{tab:comparison_vote_favorite}
\end{table}
\setlength{\textfloatsep}{15pt plus 8.0pt minus 5.0pt}

To examine strategic behavior of voters in the actual election,
we compare the party $c$ they put in first position of their ranking to the party $c^*$ they actually voted for, or intended to vote for at the election.
Here, we are making the assumption that $c$ is the true top choice: even though the ranking-based rules are still manipulable, the voters have no real incentive to vote strategically, as there are no stakes to the survey. Moreover, participants were not familiar with the ranking-based rules, and would have difficulty knowing how to strategize. Thus, we assume a voter voted sincerely in the actual election if $c$ and $c^*$ are equal, and strategically otherwise.

For this analysis, we divide the parties into three groups according to their vote share in the actual election. Intuitively, we want this partition to reflect the safe/risky/out categorization that we introduced in \Cref{subsec:tactical}. We have
\begin{itemize}
	\item  5 \emph{safe} parties that received at least 7\% of the votes at the actual election,
	\item 2 \emph{risky} parties that received between 5\% and 6\% of the votes and were thus in danger of not reaching the threshold, and
	\item  31 \emph{out} parties that received less than 3\% of the votes.
\end{itemize}
This classification based on the vote shares matches expectations from polls: the least-voted safe party (LR) had been assigned at least 6\% (and usually 7--7.5\%) in all polls in the months preceding the election, while the two risky parties (EELV and REC) polled at 5--6\% \citep{wikiFranceElection}.
Then we divided the voters into four categories depending on their voting intention at the election $c^*$ and their presumed favorite party $c$. The percentage of voters in each category for the different datasets is given in \Cref{tab:comparison_vote_favorite}.

\begin{enumerate}
        \item The \emph{inconsistent} voters are those who did not include $c^*$ in their ranking. We call them inconsistent %
		as it seems irrational to choose a list when you can vote for only one, but not when you can vote for several, especially when the number of ranks is unlimited.\footnote{For the participants who completed the survey \emph{after} the election, they might have changed their mind about the list they voted for, but we assume that this is a minor effect.
		}
We can see in \Cref{tab:comparison_vote_favorite} that there are many fewer inconsistent voters in the self-selected sample than in the representative one. %
        \item The presumably \emph{sincere} voters who ranked $c^*$ in first position (i.e., $c^* = c$), and who correspond to the large majority of participants.
        \item The \emph{strategic} voters who rank in first position a party $c$ with a lower score (that is, vote count) than $c^*$. 
For a majority of them, 
$c$ is an out party while $c^*$ is a larger party (either risky or safe); this is the canonical example of ``tactical voting'' (cf. \Cref{subsec:sp}).
        \item The other \emph{strategic} voters who rank in first position a party $c$ with a higher score than $c^*$. This behaviour is less common. For a majority of these voters, $c$ is safe and $c^*$ is either risky 
        (which resembles patterns observed in coalition insurance voting, see \Cref{sec:insurance})
        or also safe. Coalition insurance voting may be less common in France compared to other countries which use proportional representation systems more frequently.
\end{enumerate}

The fact that voters are more likely to vote for an out party when they can cast rankings than when they have to vote for only one party can be shown to be statistically significant (for the representative sample, we have  $\chi^2(1,N=895) = 9.8, p = 0.002$ for length-two rankings and $\chi^2(1,N=895) = 16.2, p < 0.001$ for unlimited-length rankings).

From the questionnaire at the end of the survey we also know that the question ``Would you be more likely to vote for a small list if you could give additional choices?'' has been answered positively by 75$\%$  and 52$\%$ of the participants respectively in the self-selected and representative sample (and negatively by 19$\%$ and 29$\%$), while ``Would you be likely to vote for a small list closer to your interests even if there is a chance for your vote to be not taken into account?'' was answered negatively by 60$\%$  and 37$\%$ of participants respectively  in the self-selected and the representative sample (and positively by 33$\%$ and 47$\%$). These answers confirm that voters are indeed strategizing in uninominal elections and that the possibility to rank more parties limits the need to do so.

\subsubsection{Representativity}

We now compare the results of our different rules with that of the actual election, in particular the representativity of the outcomes.%

\paragraph{Lost votes}

\begin{figure}[t]
	\centering
	\begin{subfigure}{0.49\textwidth}
		\scalebox{0.85}{
		\begin{tabular}{r c c c c} \toprule
			& \multicolumn{2}{c}{Self-selected} & \multicolumn{2}{c}{Representative} \\
			& 2 votes & Ranking & 2 votes & Ranking \\\midrule 
		Actual & 12.1$\%$ &  12.1$\%$& 12.1$\%$ & 12.1$\%$ \\\midrule 
		Uninominal & 37.9$\%$& 34.4$\%$ & 20.5$\%$ & 21.4$\%$ \\\midrule 
		DO & 11.7$\%$ & 3.2$\%$ & 10.8$\%$& 8.7$\%$\\
		STV & 7.0$\%$& 2.3$\%$& 9.2$\%$& 7.2$\%$\\
		GP &7.0$\%$& 2.3$\%$& 9.2$\%$& 7.2$\%$\\\bottomrule
	\end{tabular}}
	\vspace{5pt}
	\caption{Percentage of unrepresented voters.}
	\label{tab:results_unrepresented}
	\end{subfigure}
	\hfill
	\begin{subfigure}{0.47\textwidth}
		\centering
		\scalebox{0.85}{
			\begin{tabular}{r c c c c} \toprule
				& \multicolumn{2}{c}{Self-selected} & \multicolumn{2}{c}{Representative} \\
				& 2 votes & Ranking & 2 votes & Ranking \\\midrule 
			Actual & 7 &  7& 7 & 7\\\midrule 
			DO & 6 & 7$^+$ & 6& 6\\
			STV & 7& 8$^+$ & 7& 7\\
			GP &7& 8$^+$ & 7& 7\\\bottomrule
		\end{tabular}}
	
		\smallskip
		{\footnotesize \emph{Note}: The `+' indicates that a party that did not get \\[-3pt] any seat in the actual election is selected.}
		\caption{Number of selected parties.}
		\label{tab:threshold:results_parties}
	\end{subfigure}
	\caption{Number of unrepresented voters and of selected parties in our datasets when applying different rules.}
	\label{tab:results}
\end{figure}

We first compare the share of voters that are unrepresented (i.e., that did not rank any party that is selected by the rule). \Cref{tab:results_unrepresented} shows these shares obtained from
\begin{enumerate}
    \item[(1)] the actual election (12.1\%),
    \item[(2)] the uninominal rule which deletes everything below voters' first choice, selects all parties meeting the threshold, but leaves voters unrepresented if their first choice is not selected,
    \item[(3)] the party selection rules DO, STV, and GP.
\end{enumerate}

The discrepancy between (1) and (2) is due to voters' strategic behaviour in the actual election. Indeed, as we just saw, many voters, especially in the self-selected sample, put a party that is \emph{out} in first position of their ranking, but voted for a \emph{safe} party at the election. Thus, applying the uninominal rule on the ``sincere'' rankings would lead these voters to be unrepresented in our dataset, while they were represented in the actual election. This partly explains why we observe a much higher percentage of wasted votes for the uninominal rule with our datasets than in the actual election. Another part of the explanation is that, in each dataset, one risky party did not reach the threshold in first-rank votes anymore, because participants voted less strategically, and thus their supporters became unrepresented. 

We are presenting the numbers in (2) as an extreme case of what the effect of the threshold would be if voters were to strategize less. But the numbers are not realistic, since participants decided on their rankings based on an understanding that a party selection rule would be used. Comparing the numbers in (1), (2), and (3), we see that while strategizing is crucial to make the current election work well, it is not necessary when using party selection rules. Indeed, even with the reduced strategic behavior, our rules decrease the share of unrepresented voters even compared to (1).

Let us next compare the performance of the different party selection rules.
Recall that the outcome of DO is always a subset of the outcomes of STV and GP. For each of our datasets, it turns out that STV and GP return the same set of parties, and DO returns one fewer party (see \Cref{tab:threshold:results_parties}). This explains why there are fewer unrepresented voters with STV and GP than with DO. %
The representation gain is higher with unlimited-ranking ballots than with 2-truncated ballots. This is not surprising: when participants rank several parties, they are more likely to rank one that will be selected than when they rank only two. In addition, for the ranking dataset of the self-selected sample, one additional party (namely the \emph{Pirate party}) is selected in the outcome for each rule when evaluated on the rankings, further decreasing the number of unrepresented voters. However, this is almost entirely due to a selection bias: the survey was shared among the supporters and people familiar with this party, leading to more people ranking it first than in a representative sample.\footnote{Note that this bias is only partially corrected by our weighting method, as many participants ranked this party first without indicating voting for it at the election, probably for strategic reasons.}

The representation gain of rankings is higher for the self-selected sample than for the representative sample. This is partly explained by the fact that in the representative sample, a significant number of voters ranked only one party, thus limiting their chances to be represented. This applies to around 40\% of the unrepresented voters in the representative sample, but less than 10\% in the self-selected sample.

\paragraph{Impact of ballot size}
\begin{figure}[t]
    \centering
    \begin{subfigure}[b]{0.4\textwidth}
        \includegraphics[width=\textwidth]{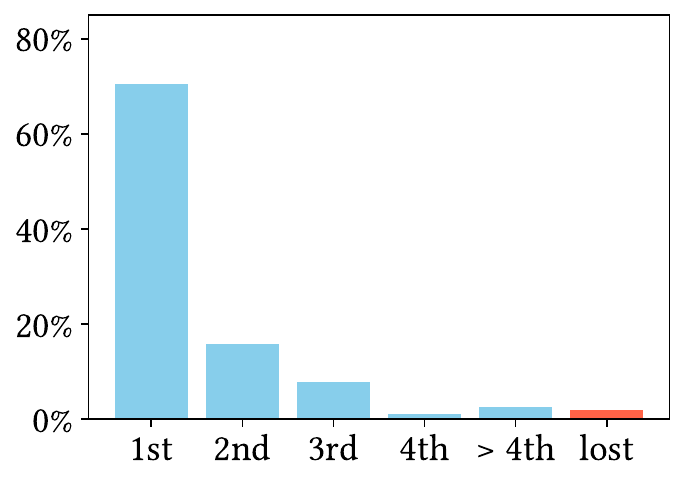}
        \caption{Self-selected sample.}
        \label{fig:rank_representation_autoselected}
    \end{subfigure}
    \qquad
    \begin{subfigure}[b]{0.4\textwidth}
        \includegraphics[width=\textwidth]{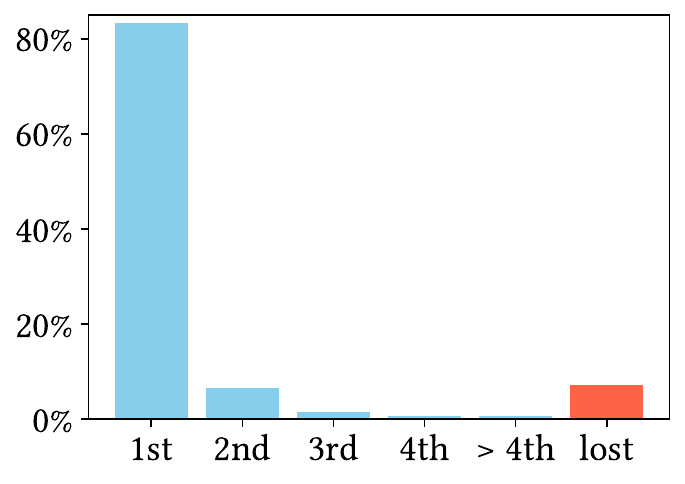}
        \caption{Representative sample.}
        \label{fig:rank_representation_representative}
    \end{subfigure}
    \caption{Distribution of the ranks of the representatives in the rankings of the voters for the STV and GP rules with a 5\% threshold, in the ranking datasets.}
    \label{fig:rank_representation}
\end{figure}

\iflatexml
\begin{figure}
	\includegraphics[width=5.8cm]{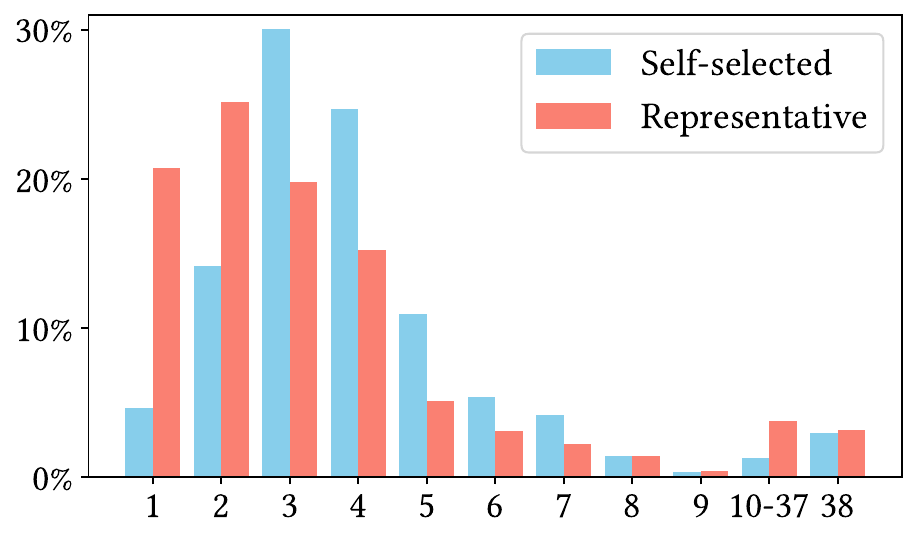}
	\vspace{-8pt}
	\caption{Distribution of how many candidates were ranked by the voters.}
	\label{fig:ranking_lengths}
\end{figure}
\else
\begin{wrapstuff}[r,type=figure,width=6cm,top=4]
	\includegraphics[width=5.8cm]{figs/max_ranking.pdf}
	\vspace{-8pt}
	\caption{Distribution of how many candidates were ranked by the voters.}
	\label{fig:ranking_lengths}
\end{wrapstuff}
\fi
Another interesting observation is that the representativity gain with ranked ballots is already quite high with short ballots. Indeed, a large majority of voters are represented by a party ranked very high in their ranking. For instance, we can see in \Cref{fig:rank_representation_autoselected} that for the ranking dataset of the self-selected sample, if we use STV or GP, around 70\% of voters are represented by their favorite party, and almost all voters have their representative in their top 3 choices. This is even clearer for the representative sample (see \Cref{fig:rank_representation_representative}) as voters in this sample ranked less parties on average (see \Cref{fig:ranking_lengths}).

To complete this analysis, \Cref{fig:unrepresented_ranking_length} shows the fraction of lost votes if all rankings of length greater than $k$ are truncated to rank $k$. For the representative sample, as soon as $k \geq 3$, increasing $k$ has almost no impact on representativity. For the self-selected sample, as voters tend to rank more parties (see \Cref{fig:ranking_lengths}), we need $k \geq 5$ to reach an almost maximal representativity level. This is important for the practical implementation of such rules: limiting voters to rank at most three parties would arguably limit the cognitive load for the voters to an acceptable level (and would thus be more likely to be adopted in real-world political settings) while still ensuring good representativity.

\begin{figure}[!t]
	\begin{subfigure}{.41\textwidth}
		\centering
		\includegraphics[width=\linewidth]{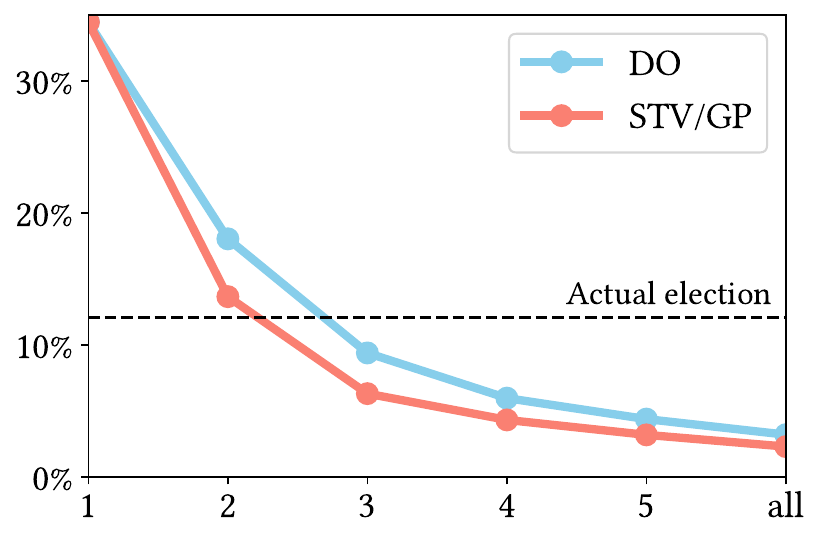}
		\caption{Self-selected sample}
	\end{subfigure}
	\qquad
	\begin{subfigure}{.41\textwidth}
		\centering
		\includegraphics[width=\linewidth]{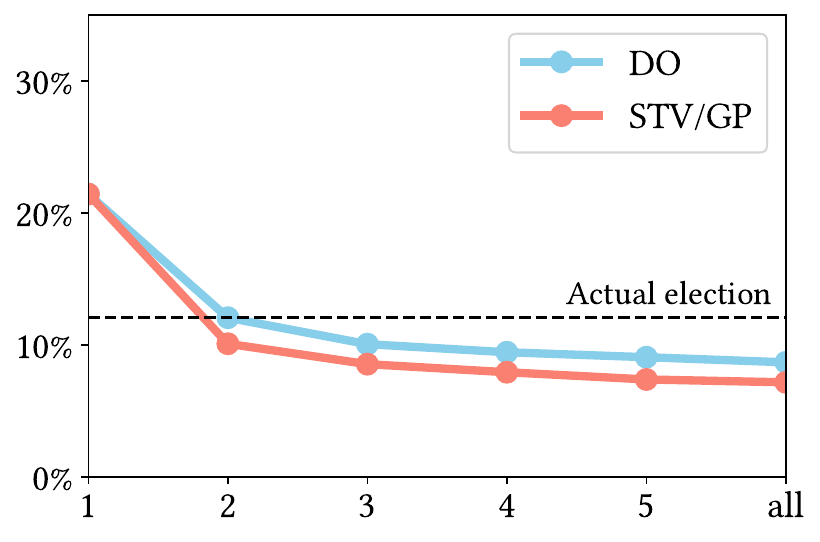}
		\caption{Representative sample}
	\end{subfigure}
	\caption{Percentage of unrepresented voters after truncating their rankings to a particular length, indicated on the horizontal axis.}
	\label{fig:unrepresented_ranking_length}
\end{figure}

\paragraph{Varying the threshold}

\begin{figure}[!p]
    \centering
    \includegraphics[width=\textwidth]{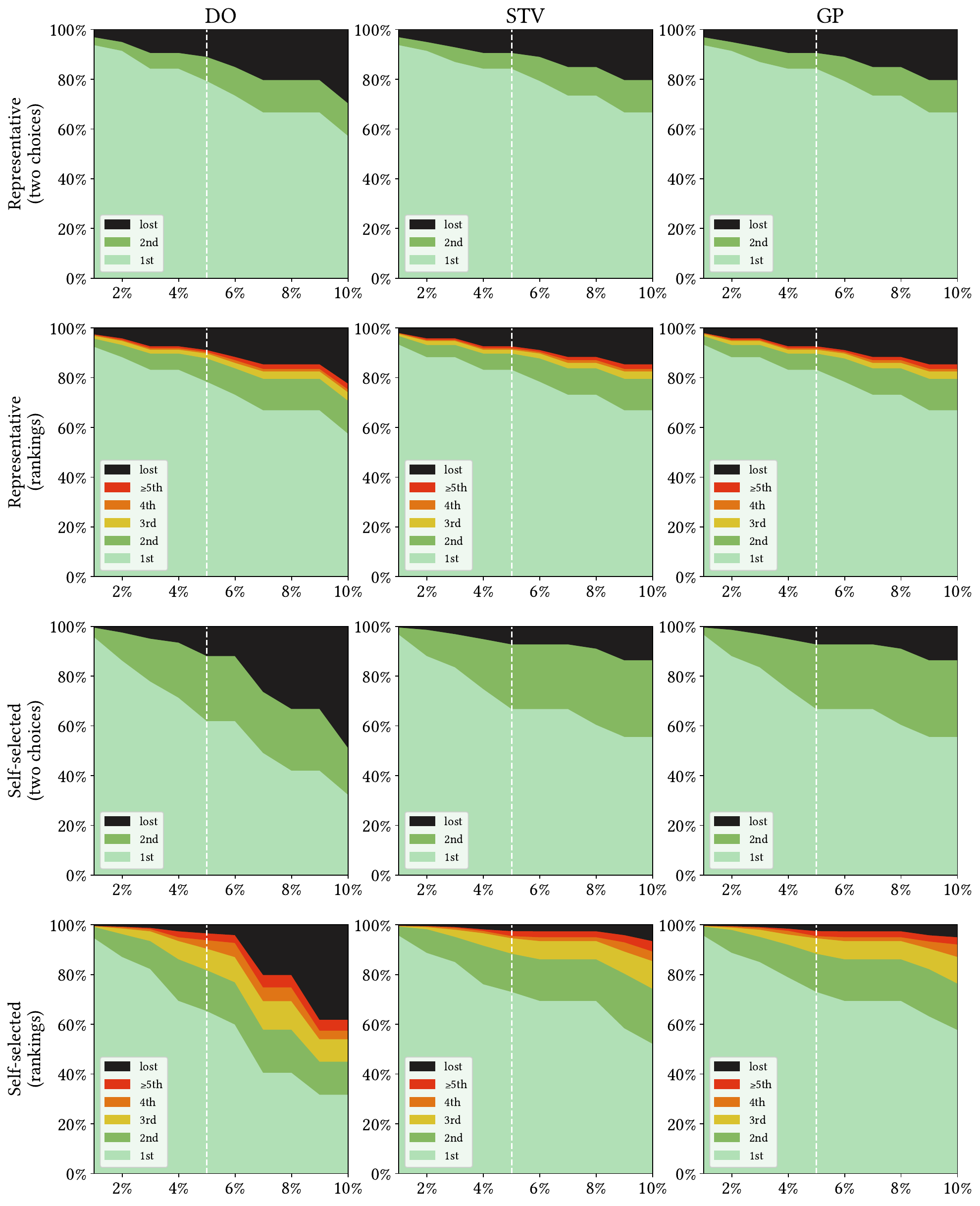}
    \caption{Distribution of the ranks of the representatives in the rankings of the voters with different thresholds. The threshold used is indicated on the horizontal axis. For each threshold, the vertical slice above it shows how the voters are divided into unrepresented voters (black area) and represented voters (colored areas, colored according to the rank of the party assigned to the voter).}
    \label{fig:thresholds_variation}
\end{figure}

The results above were obtained with a threshold of 5\%, the one used in this election. As a robustness check, we ran the analysis with other thresholds. In particular, we applied the rules to our datasets with thresholds varying between 1\% and 10\%, and computed the percentage of unrepresented voters, as well as the distribution of ranks of the representatives. The results are shown in \Cref{fig:thresholds_variation}. They are compatible with our previous observations: STV and GP consistently give similar results and a better representation than DO, especially for high threshold values, for which DO ignores a significant part of the voters. Moreover, we observe that even for high thresholds, most voters are represented by one of their top 3 choices.

\paragraph{Robustness under noise}

Finally, we conducted experiments with random noise added to the data. 
We applied the noise to the weights of the voters, and used the following random model: for every single simulation, we sampled a multiplier for each party $c$ from a Gaussian distribution $\sigma_c \sim \mathcal{N}(1,0.1)$, and one multiplier for each voter $i$ from another Gaussian distribution $\sigma_i \sim \mathcal{N}(1,0.1)$. We then multiplied the original weights $w_c$ of the voters (which depend on their voting intention or actual vote $c \in C$ at the election) by these multipliers. More formally, the new weights are $w^*_i = w_c \times \sigma_c \times \sigma_i$. Thus, the global weights of the different parties can change, and as a result the two risky parties sometimes reach the threshold, and sometimes not. We sampled 100 profiles using this model for each dataset and each rule, and computed the median percentage of unrepresented voters for different values of the threshold, as well as the 20 and 80 percentiles. 

The results are displayed in \Cref{fig:thresholds_variation_noise}. STV and GP continue to give much better results than DO when we add noise to the data. Moreover, we observe that STV and GP are more robust to noise, and that their outcomes are again very similar.

\begin{figure}[!p]
	\centering
	\includegraphics[width=\textwidth]{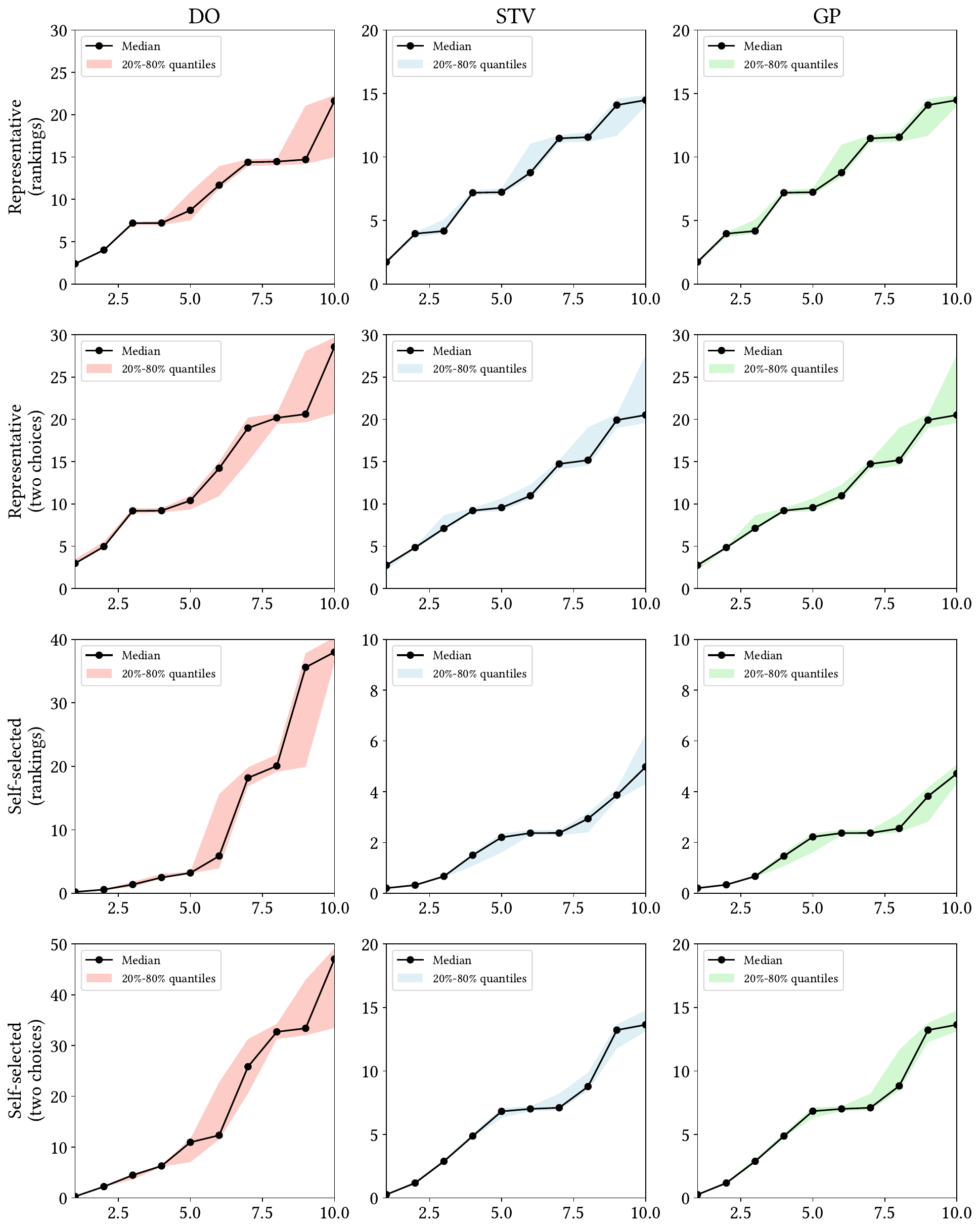}
	\caption{Median percentage of unrepresented voters (over 100 random profiles) for different thresholds (horizontal axis).}
	\label{fig:thresholds_variation_noise}
\end{figure}

\section{Discussion}

We studied whether allowing voters to rank party lists instead of voting for a single list could help obtain more representativeness in parliamentary elections by reducing the amount of unrepresented voters.
Both our theoretical and our empirical results 
suggest that rankings can indeed be helpful, with results varying by rule.
STV and GP allow more parties to be represented, and, relatedly, leave fewer voters unrepresented than DO. Axiomatically, the three rules are incomparable: DO and GP enjoy stronger strategyproofness guarantees than STV, while STV satisfies independence of clones and represents solid coalitions. All three rules are simple and easy to understand, with DO being closest to the current system and STV being closely related to election systems that are widely used, especially in English-speaking countries.

Our theoretical results operate within social choice theory and the axiomatic method. It would be interesting to study this setting from the perspective of strategic candidacy, evaluating the rules' impact on party formation and political innovation. Our experiment could also be productively repeated in other countries, to better understand the robustness of our conclusions, and to understand whether the electorate would accept or welcome the switch to one of these rules.
In addition, studying effects on coalition insurance voting would be interesting.

We hope that our work can support the discussion about proportional representation and thresholds in several countries that are facing issues with their election systems. In our view, the results of our work might even provide reassurance for countries currently using majoritarian first-past-the-post systems to switch to proportional representation. Countries like France that place substantial weight on governability could in principle use methods like the ones we studied to combine proportional representation with very high thresholds (perhaps as high as 10\%) without causing unacceptable amounts of wasted votes.

\bibliographystyle{ACM-Reference-Format}

\clearpage
\appendix

\addtocontents{toc}{\protect\setcounter{tocdepth}{1}}
\setcounter{tocdepth}{1}

\section{Omitted Results of the Axiomatic Analysis} \label{app:axioms}

\subsection{Incentive Issues} \label{app:strategyproofness}

Say that a party selection rule $f$ is a dictatorship for threshold $\tau$ if there exists a voter $i$ such that for every full profile $P$, the output $f(P, \tau) = \{\best_C(i)\}$ (i.e., the outcome is the party that is ranked first by $i$). Say that a party selection rule $f$ is imposing for threshold $\tau$ if there exists some alternative $c_j$ such that $c_j \not \in f(P,\tau)$ for any full profile $P$.

\begin{proposition}
\label{prop:sp-impossibility-large-tau}
For any $m \ge 3$, any representative-strategyproof party selection rule $f$, and any threshold $\tau > \frac{n}{2}$, it must either be the case that $f$ is a dictatorship for threshold $\tau$ or that $f$ is imposing for threshold $\tau$. %
\end{proposition}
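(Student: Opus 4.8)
The plan is to reduce to the classical Gibbard--Satterthwaite theorem by turning $f$, restricted to full profiles at the fixed threshold $\tau$, into an ordinary resolute single-winner rule. First I would invoke the observation from \Cref{sec:preliminaries} that when $\tau > \frac{n}{2}$ and $P$ is full, the only feasible outcomes are the empty set and singletons; hence $f(P,\tau) \in \{\emptyset\}\cup\{\{c\}:c\in C\}$, and I define an induced map $g(P)$ equal to the unique selected party when $f(P,\tau)$ is a singleton and to $\emptyset$ otherwise. On a full profile every voter ranks every party, so $\best_{f(P,\tau)}(i)=g(P)$ whenever $g(P)\in C$, while $\best_{f(P,\tau)}(i)=\emptyset$ when $g(P)=\emptyset$; moreover $c\succ_i\emptyset$ for every party $c$. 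Consequently, representative-strategyproofness of $f$ translates exactly into the statement that $g$ admits no profitable manipulation, with $\emptyset$ treated as a least-preferred dummy outcome.

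The main obstacle is the possibility that $g$ returns $\emptyset$, since Gibbard--Satterthwaite applies to rules that always select a genuine alternative. I would dispose of this with a propagation lemma: if $g(P_0)=\emptyset$ for some full profile $P_0$, then $g\equiv\emptyset$ on all full profiles. To see this, suppose $g(P)=\emptyset$ and $P'$ is obtained from $P$ by a single voter $i$ changing their report. Viewing $P$ as truthful, voter $i$'s representative under $P$ is $\emptyset$; if $g(P')$ were some party $c$, then since $c\succ_i\emptyset$ this misreport would strictly improve $i$'s representative, contradicting representative-strategyproofness. Hence $g(P')=\emptyset$. Since any full profile can be reached from $P_0$ by changing one voter at a time, and the argument reapplies at each step (the current profile always having outcome $\emptyset$), $g$ is identically $\emptyset$.

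With the lemma in hand the proof splits into two cases. If $g\equiv\emptyset$, then no party is ever selected on any full profile, so $f$ is imposing for $\tau$ and we are done. Otherwise $g$ never outputs $\emptyset$, so $g:\mathcal{L}(C)^n\to C$ is a resolute social choice function on the full (unrestricted) domain of strict rankings over the $m\ge 3$ parties, and the translation above shows it is strategyproof in the standard sense. The Gibbard--Satterthwaite theorem \citep{gibbard1973manipulation,satterthwaite1975strategy} then forces $g$ to be either dictatorial or not onto. A dictatorship of voter $i$ means $g(P)$ is always $i$'s top party $\best_C(i)$, i.e. $f(P,\tau)=\{\best_C(i)\}$ for every full profile, which is exactly a dictatorship for $\tau$; failure of surjectivity means some party $c_j$ is never selected, i.e. $f$ is imposing for $\tau$. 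In either case the proposition holds. The only slightly delicate points are the equivalence between representative-strategyproofness and ordinary strategyproofness of $g$, which hinges on $\best_{\{c\}}(i)=c$ on full profiles, and the connectivity of the profile space under single-voter changes used in the propagation lemma; both are routine.
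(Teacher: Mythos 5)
Your proposal is correct and follows essentially the same route as the paper: both exploit that for $\tau>\frac{n}{2}$ only singletons and $\emptyset$ are feasible, reduce the all-singleton case to Gibbard--Satterthwaite, and handle the empty-outcome case via a single-voter manipulation argument (your propagation lemma is just the contrapositive of the paper's ``boundary pair of profiles'' step). Your write-up is somewhat more explicit about the translation between representative-strategyproofness and ordinary strategyproofness of the induced rule $g$, but the underlying argument is identical.
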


\begin{proof}
Note that, by feasibility and since $\tau > \frac{n}{2}$, for every profile $P$ the outcome $f(P, \tau)$ can be only a singleton or the empty set. Suppose that $f( P ,\tau)$ is a singleton for all full profiles $P$; in this case $f( \cdot ,\tau)$ can be seen as a single-winner social choice function that takes full rankings as input. It follows immediately from the Gibbard-Satterthwaite theorem~\citep{gibbard1973manipulation,satterthwaite1975strategy} that if $f( \cdot ,\tau)$ is strategyproof then it is either imposing or a dictatorship for threshold $\tau$. Next, suppose that $f( P , \tau) = \emptyset$ for some full profile $P$. Then there must exist two profiles, $P_1=(\succ_1, \ldots, \succ_i, \ldots, \succ_n)$, and $P_2=(\succ_1, \ldots, \succ_i', \ldots, \succ_n)$, that differ only in the report of a single voter $i$ and for which $f(P_1,\tau) = \emptyset$ and $f(P_2,\tau) \neq \emptyset$.\footnote{If $f(P,\tau)=\emptyset$ for all full profiles $P$ then $f$ is imposing for threshold $\tau$.} Voter $i$ therefore can manipulate by switching their vote from $\succ_i'$ to $\succ_i$.
\end{proof}

We say that a party selection rule is \emph{efficient} if, whenever there are at least $\tau$ non-empty votes and all non-empty votes have the same top choice, then that party should be selected (and, by feasibility, should be the only party selected). We additionally define \emph{anonymity} in the usual way by requiring that a party selection rule treats voters symmetrically. If we strengthen non-imposition to efficiency and non-dictatorship to anonymity, we are able to generalize \Cref{prop:sp-impossibility-large-tau} to all threshold values.

\begin{proposition}
For any $m \ge 3$ and any threshold $\tau>1$, any representative-strategyproof party selection rule $f$ violates either anonymity or efficiency.
\end{proposition}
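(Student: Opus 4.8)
The plan is to split on the size of $\tau$ relative to $n$. When $\tau > \frac n2$, I would invoke \Cref{prop:sp-impossibility-large-tau}, which says that any representative-strategyproof rule is, at threshold $\tau$, either a dictatorship or imposing. Efficiency rules out imposition: the unanimous full profile $U_x$ in which all $n \ge \tau$ voters rank party $x$ first forces $f(U_x,\tau)=\{x\}$, so every party is selected in some full profile. Anonymity rules out dictatorship: if voter $i$ were a dictator, swapping the ballots of $i$ and a voter ranking a different party first would change the outcome, which anonymity forbids (using $m\ge 3$ and $n\ge\tau\ge 2$). This settles the large-threshold case.

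For $2 \le \tau \le \frac n2$ I would reduce to single-winner voting via a padding gadget. I fix the reports of $n-(2\tau-1)\ge 1$ voters to be empty ballots (always unrepresented, contributing to no party's score), and let the remaining $2\tau-1$ voters submit full rankings of the $m$ parties. With only $2\tau-1$ active voters, any two-party outcome would require $2\tau > 2\tau-1$ supporters and is infeasible, so the induced map $g(Q):=f((Q,\text{empties}),\tau)$ on full active profiles $Q$ always returns a singleton or $\emptyset$. Efficiency gives $g(U_x)=\{x\}$ for every party $x$ (the $2\tau-1\ge\tau$ active ballots all top-rank $x$ and are the only non-empty ballots), so $g$ attains every party as a value; anonymity of $f$ restricts to anonymity of $g$ among the active voters; and representative-strategyproofness of $f$ specializes to ordinary strategyproofness of $g$, since for $S=\{x\}$, $S'=\{y\}$ the condition $\best_{S'}(i)\not\succ_i\best_S(i)$ is exactly $y\not\succ_i x$.

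The main obstacle is that $g$ might return $\emptyset$ on some non-unanimous full profile, preventing me from treating it as a genuine single-winner social choice function. I would close this gap with a one-voter-at-a-time argument. Assuming $g(Q)=\emptyset$ for some full profile $Q$, I morph $Q$ into $U_a$ by switching the active voters' ballots one by one to the $U_a$-ranking, obtaining $Q=Q^0,\dots,Q^k=U_a$. Since $g(Q^0)=\emptyset$ and $g(Q^k)=\{a\}$, there is a first switch at which $g(Q^{j-1})=\emptyset$ becomes $g(Q^j)=\{y\}\ne\emptyset$. At $Q^{j-1}$ the switching voter $j$ is unrepresented under her true (full) ranking, yet deviating to the $U_a$-ranking yields outcome $\{y\}$; because her true ranking contains $y$, this deviation strictly improves her representative from $\emptyset$ to $y$, contradicting representative-strategyproofness. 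Hence $g$ is never empty on full profiles, so it is a resolute, onto, anonymous, strategyproof single-winner rule on $m\ge 3$ alternatives with $2\tau-1\ge 3$ voters. Gibbard--Satterthwaite then forces $g$ to be a dictatorship, contradicting anonymity and completing the proof.
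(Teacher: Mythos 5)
Your proof is correct and follows essentially the same route as the paper's: pad the profile with empty ballots so that only singleton (or empty) outcomes are feasible, rule out the empty outcome via a strategyproofness argument, and conclude with Gibbard--Satterthwaite. The only cosmetic differences are that the paper pads down to exactly $\tau$ active voters and reuses \Cref{prop:sp-impossibility-large-tau} as a black box (which also covers your explicitly-treated $\tau>\frac{n}{2}$ case), whereas you keep $2\tau-1$ active voters and re-derive the non-emptiness step inline.
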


\begin{proof}
Suppose that there exists a representative-strategyproof, efficient, and anonymous party selection rule $f$ for $m \ge 3$, $n$ voters, and some $1<\tau \le \frac{n}{2}$. Then we can define a representative-strategyproof, non-imposing, and non-dictatorial party selection rule $f'$ for $n'=\tau$ voters. To see this, consider a profile $P'$ consisting of $\tau$ voters. Now, construct a profile $P$ consisting of $n$ voters: the $n'$ voters from profile $P'$ plus $n-\tau$ empty votes. Define $f'(P',\tau)=f(P,\tau)$. Party selection rule $f'$ inherits strategyproofness from $f$ since any successful manipulation under rule $f'$ would also be a successful manipulation under $f$. Non-imposition of $f'$ follows from efficiency of $f$, since efficiency dictates that whenever all non-empty voters have the same first choice, then that party is selected. Finally, non-dictatorship of $f'$ follows from anonymity of $f$: if $f'$ was dictatorial then swapping the roles of a pair of voters would sometimes change the outcome, which would violate anonymity of $f$.
\end{proof}

\section{Optimization Rules}
\label{app:max-rules}

In this appendix section, we define two optimization-based rules and analyse their computational complexity and some of their axiomatic properties.

\subsection{Definition}

\begin{itemize}
	\item \emph{Maximal Plurality (MaxP):} This rule return the set $S$ of parties that is feasible and that maximizes the number of voters that rank at least one party of the set first. In case of ties, the rule maximizes the number of voters that rank at least one party of the set in the first two positions, and so on. %
	\item \emph{Maximal Representation (MaxR):} This rule return the set $S$ of parties that is feasible and that maximizes the number of voters that include at least one party from the set in their ranking (at any position). %
\end{itemize}

If we have to break ties between sets of parties, we do it by selecting the set that is lexicographically maximal according to the fixed order on the parties. 

Using the profile from \Cref{ex:example_diff}, the MaxP rule will return $\{d,a\}$, with 10 voters having their top choice selected. The MaxR rule will select $\{d,b\}$, with all voters being represented by one of their top two choices.

\subsection{Computational Complexity}
 We can show that the problem of computing the outcome of MaxP and MaxR is NP-hard, as we need to optimize over all possible sets of parties.

\begin{theorem}
	The problem of computing the outcome of MaxP and MaxR is NP-hard.
\end{theorem}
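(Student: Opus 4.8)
The plan is to reduce from the \textsc{Independent Set} problem on cubic (3-regular) graphs, which is NP-hard, and to build a single party-selection instance that encodes the problem simultaneously for both MaxP and MaxR. Given a cubic graph $G = (V,E)$, I create one party $p_v$ for each vertex $v$, set the threshold $\tau = 6$, and for every edge $\{u,v\} \in E$ I add two truncated-ranking voters: one voting $p_u \succ p_v$ and one voting $p_v \succ p_u$. There are no other voters, so the profile has $2|E| = 3|V|$ voters and $|V|$ parties, and the construction is clearly polynomial.

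The key step is a ``vote-stealing'' analysis establishing that a set $S = \{p_v : v \in I\}$ is feasible if and only if $I$ is an independent set. First I would observe that, for $p_v \in S$, each incident edge $\{u,v\}$ contributes $2$ supporters to $p_v$ when $p_u \notin S$ (both edge-voters have $p_v$ as their best selected party) but only $1$ when $p_u \in S$ (the voter $p_u \succ p_v$ is then captured by $p_u$). Writing $s_v$ for the number of selected neighbours of $v$, this gives $\rep_S(p_v) = 2(3 - s_v) + s_v = 6 - s_v = \tau - s_v$, so $p_v$ is feasible exactly when $s_v = 0$. Hence $S$ is feasible iff no two selected vertices are adjacent, i.e.\ iff $I$ is independent (and in particular every singleton is feasible, since then $\rep_S(p_v) = 6 = \tau$).

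Next I would compute the two objectives on a feasible set, using regularity to make them exactly proportional to $|I|$. Because $G$ is cubic and $I$ is independent, every vertex of $I$ is incident to $3$ distinct edges, none of which is internal to $I$, so the represented edge-voters number $2 \cdot 3|I| = 6|I|$; thus the MaxR objective equals $\tau|I|$. Likewise $p_v$ is the overall top choice of exactly its $3$ edge-voters of the form $p_v \succ \cdot$, so the MaxP objective (the number of voters whose first-ranked party lies in $S$) equals $3|I|$. Both objectives are strictly increasing in $|I|$, and every feasible set is an independent set, so the optimum of each rule is attained exactly at a maximum independent set. Reading $|S^*|$ off the returned outcome therefore recovers $\alpha(G)$, so a polynomial-time algorithm computing the outcome of MaxP or MaxR would decide \textsc{Independent Set}, proving NP-hardness.

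The main obstacle — and the part needing the most care — is the feasibility characterization: one must verify that the symmetric pair of edge-voters makes the support of a selected party drop by exactly one per selected neighbour (and not more, and independently of whether the neighbours are themselves adjacent, which matters if $G$ contains triangles), since this is what pins feasibility precisely to independence. Once this is in place, the regularity of $G$ is what turns both the MaxP and MaxR objectives into clean multiples of $|I|$, so the same instance serves for both rules.
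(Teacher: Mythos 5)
Your proof is correct and is essentially identical to the paper's: both reduce from \textsc{Independent Set} on $3$-regular graphs using the same gadget (two opposite edge-voters per edge, threshold $\tau=6$) and the same feasibility-equals-independence characterization, with your computation $\rep_S(p_v)=6-s_v$ spelling out the supporter counts slightly more explicitly than the paper does. No gaps.
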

\begin{proof}
	We prove it by reduction of the independent set problem. Given a 3-regular graph $G$ (in which every vertex has degree 3), the problem is to find a set of vertices of maximal size such that no two vertices are connected.

	Let $G = (U, E)$ be 3-regular graph. Construct a profile $P$ with party set $C = U$ and such that for each $(u,v) \in E$, we add one voter with preference $u \succ v$ and one voter with preference $v \succ u$. We set the threshold $\tau = 6$. We claim that there is an independent set of size $k$ in $G$ if and only if there is a feasible outcome of size $k$ in $P$. Moreover, an outcome is feasible if and only if no two vertices of it are connected by an edge (as there are exactly six voters ranking each vertex). Thus, since every vertex appears in 6 rankings, and 3 times in first position, then the outcome of maximal size is necessarily the outcome of MaxP and MaxR.

	Now, assume that there exists an independent set of size $k$ in $G$, and note it $S \subseteq U$. Then, since vertices in $S$ are independent, no voter in $P$ rank two vertices of $S$. Thus, every vertex represents six voters, and the outcome $S$ is feasible. Conversely, if the outcome if feasible, this means that no voter rank two vertices of the outcome, thus there are no edges between the vertices of the outcome, and the outcome is an independent set of $G$. 

	Therefore, the problem of computing the outcome of MaxP and MaxR is NP-hard.
\end{proof}

In addition to being hard to compute, the MaxP and MaxR rules are probably not appropriate to be used in practice for parliamentary election, as the outcome, and how it was selected, can hardly be explained. However, they can be useful in other contexts than parliamentary elections, such as group activity selection or food item selection, where it can be more important for voters to be satisfied with the outcome than to understand how it was selected.

\subsection{Axiomatic Properties}

\begin{proposition}
	MaxP and MaxR satisfy set-maximality.
\end{proposition}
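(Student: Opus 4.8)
The plan is to exploit the fact that both rules optimize \emph{monotone} coverage objectives and that the lexicographic tie-break favors larger sets. First I would fix a profile $P$ and threshold $\tau$, let $S = f(P,\tau)$ be the output of MaxR (respectively MaxP), and assume for contradiction that there is a feasible outcome $S' \supsetneq S$.

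The key observation is monotonicity. For MaxR the objective counts the voters who rank at least one party of the chosen set somewhere in their ranking; since $S \subseteq S'$, every voter counted for $S$ is still counted for $S'$, so the objective value of $S'$ is at least that of $S$. For MaxP the same holds for the entire hierarchy of tie-breaking objectives: for each $k$, the number of voters who rank some party of the set within their top $k$ positions can only grow as the set grows. Hence $S'$ is weakly at least as good as $S$ with respect to every objective used in the definition of the rule.

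I would then split into cases. If $S'$ strictly beats $S$ on the primary objective (for MaxP, on the first objective in the hierarchy where they differ), then $S$ was not an optimal feasible set, contradicting the definition of the rule. Otherwise $S'$ ties $S$ on all objectives and the outcome is decided by the lexicographic tie-break on sets. Here I would establish the small combinatorial claim that a strict superset is always lexicographically larger: writing each set as its indicator string over the parties listed in the fixed order, let $c$ be the first party in that order lying in $S'\setminus S$. Then $S$ and $S'$ agree on every party preceding $c$ (because $S\subseteq S'$ and $c$ is the first new element), while at $c$ the indicator of $S'$ is $1$ and that of $S$ is $0$. Thus $S'$ is lexicographically larger, so the tie-break would return $S'$ rather than $S$, a contradiction. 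In either case no feasible strict superset of $S$ exists, which is precisely set-maximality.

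The main point to get right is the interplay between optimality and tie-breaking: monotonicity alone only shows that \emph{some} maximal feasible set is optimal, not that the \emph{returned} set is maximal, so the argument genuinely needs the fact that the lexicographic rule prefers supersets. The only modeling subtlety is confirming that ``lexicographically maximal according to the fixed order'' is read so that including an earlier party makes a set larger; under this natural reading the superset claim holds, and it remains true under the alternative convention of comparing the sorted element sequences, so the conclusion is robust to the exact interpretation.
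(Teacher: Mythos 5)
Your proof is correct and follows the same idea as the paper's (a feasible strict superset would weakly improve the monotone objective, contradicting optimality), but it is actually more careful than the paper's one\-/line argument, which tacitly assumes the objective strictly increases --- this can fail for MaxR when every supporter of the newly added party already ranks some party of $S$, and your explicit appeal to the lexicographic tie\-/break is exactly what closes that case. One small quibble: your closing robustness remark is not quite right, since under the convention of comparing sorted element sequences a strict superset need not be lexicographically larger (e.g.\ $\{c_2\}$ beats $\{c_1,c_2\}$ when later parties count as larger), but this does not affect the main argument under the indicator\-/string reading.
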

\begin{proof}

	The result is clear since we are maximizing some form of representation among voters. So if there exists a superset of the outcome that is feasible, that means that we can increase the representation of some voters, which is a contradiction.
\end{proof}

Perhaps surprisingly, MaxP and MaxR fail inclusion of direct winners. Indeed, consider for instance the profile $ P = \{2: a\succ b, 2:a\succ c, 3: c, 3: b \}$ with $\tau = 4$. The only direct winner is $a$, however both MaxP and MaxR will return $\{c,b\}$. A direct consequence is that these rules fail representation of solid coalitions.

\begin{proposition}
	MaxP and MaxR fail threshold monotonicity.
\end{proposition}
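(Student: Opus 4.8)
The plan is to refute threshold monotonicity by explicit counterexamples, exploiting that both MaxP and MaxR are optimization rules: as $\tau$ grows, the family of feasible outcomes shrinks, but not merely by deleting parties from the previous optimum. The phenomenon I want to engineer is that a ``large'' feasible set that is optimal at a low threshold becomes \emph{infeasible} at a higher threshold (because support gets split and some selected party drops below $\tau'$), and the best \emph{remaining} feasible set is then forced to contain a party $z$ that the low-threshold optimum had excluded. Since $z \in f(P, \tau')$ but $z \notin f(P, \tau)$ for $\tau < \tau'$, this contradicts $f(P,\tau) \supseteq f(P,\tau')$.

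For MaxP I would use the two-party profile $P = \{3: a \succ b,\ 2: b\}$ with $\tau = 3$ and $\tau' = 4$. First I check feasibility and plurality coverage of every subset: at $\tau = 3$ the set $\{a\}$ is feasible with three top-votes while $\{a,b\}$ is infeasible (since $b$ keeps only its two pure supporters), so MaxP returns $\{a\}$; at $\tau' = 4$ the party $a$ no longer has enough support, both $\{a\}$ and $\{a,b\}$ become infeasible, and the only non-empty feasible set is $\{b\}$, so MaxP returns $\{b\}$. As $b \notin \{a\}$, monotonicity fails.

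The same profile does not work for MaxR (there $\{b\}$ already wins at both thresholds), so I would use a three-party profile $P = \{4: b,\ 5: z \succ a,\ 1: z \succ b\}$ over $C = \{a,b,z\}$, taking $\tau = 5$ and $\tau' = 6$. The verification proceeds by tabulating, for each of the seven non-empty subsets, the induced scores $\rep(\cdot)$ and the number of represented voters. At $\tau = 5$ the covering set $\{a,b\}$ is feasible ($\rep(a) = \rep(b) = 5$) and represents all ten voters, strictly beating every feasible set: the only feasible set containing $z$ is $\{z\}$ itself (representing just six voters), since $\{a,z\}$, $\{b,z\}$, and $\{a,b,z\}$ each make $a$ or $b$ drop below $5$. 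Hence MaxR returns $\{a,b\}$. At $\tau' = 6$, every set containing $a$ or $b$ becomes infeasible, because $\rep(a)$ and $\rep(b)$ never exceed $5$ in any configuration, so the unique non-empty feasible set is $\{z\}$ (with $\rep(z) = 6$). Thus MaxR returns $\{z\}$, and again $z \notin \{a,b\}$.

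The main obstacle is the MaxR construction: because MaxR optimizes over all subsets, I cannot check only a few candidate sets but must confirm that the covering set is feasible and \emph{uniquely} optimal at the low threshold (so that tie-breaking is irrelevant) while simultaneously arranging that raising the threshold by one eliminates the covering set together with all its feasible competitors, leaving only a set that contains the previously-excluded party. Calibrating the group sizes so that $\rep(a)$ and $\rep(b)$ sit exactly at the low threshold, while $\rep(z)$ sits exactly at the high threshold, is the delicate part; once the numbers are fixed, the rest is the routine subset-by-subset tabulation sketched above.
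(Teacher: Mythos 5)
Your proof is correct and takes essentially the same approach as the paper: an explicit counterexample where raising the threshold kills the low-threshold optimum and forces in a previously excluded party. The only difference is that the paper uses a single $15$-voter profile that refutes both MaxP and MaxR simultaneously, whereas you use two separate (and slightly smaller) profiles; both of yours check out.
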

\begin{proof}
	For MaxP and MaxR, consider the profile $P = \{5:a, 1:a \succ b, 1:a \succ c, 4: b , 4: c \}$. With $\tau = 5$, the outcome is $\{b,c\}$ and with $\tau' = 7$, the outcome is $\{a\}$.
\end{proof}

\begin{proposition}
	MaxP and MaxR fail monotonicity.
\end{proposition}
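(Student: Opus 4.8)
The plan is to refute monotonicity by exhibiting, for each of MaxP and MaxR, an explicit profile $P$, a threshold $\tau$, and a party $c$ in the outcome such that raising $c$ in a single ballot removes $c$ from the outcome. The common mechanism I would exploit is a ``swing'' voter who ranks $d\succ c$ and props up a party $d$ that sits exactly at the threshold inside the currently optimal set $\{c,d\}$: moving $c$ above $d$ in this one ballot transfers the voter from $d$ to $c$ in every set containing both parties, so $\{c,d\}$ becomes infeasible; crucially, in any set \emph{not} containing $c$ the same voter still falls back onto $d$, which keeps that set feasible. The remaining voters are then arranged so that after the flip the best feasible set is a $c$-free set $\{d,e\}$.

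For MaxR I would take $P=\{3{:}\,c,\ 1{:}\,c\succ e\succ d,\ 2{:}\,d,\ 2{:}\,e,\ 1{:}\,d\succ c\}$ with $\tau=3$. Here $\{c,d\}$ is feasible, with $\rep_{\{c,d\}}(c)=4$ and $\rep_{\{c,d\}}(d)=3$, and it represents $7$ voters, strictly more than any other feasible set: $\{d,e\}$ represents $6$, while $\{c,e\}$ and $\{c,d,e\}$ are infeasible because the lone ``$c\succ e\succ d$'' voter deserts $e$ for $c$. Thus MaxR returns $\{c,d\}\ni c$. Raising $c$ in the ``$d\succ c$'' ballot to ``$c\succ d$'' (a legitimate monotone change, since only $d$ is ranked among the other parties) drops $\rep_{\{c,d\}}(d)$ to $2<\tau$, making $\{c,d\}$ infeasible; the only feasible set still containing $c$ is $\{c\}$, representing $5$, whereas $\{d,e\}$ remains feasible and represents $6$. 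Hence MaxR now returns $\{d,e\}\not\ni c$.

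The main obstacle, and the reason this instance does not also settle MaxP, is that the singleton $\{c\}$ is feasible whenever $c$ is, and its value only \emph{grows} under the manipulation: its coverage for MaxR, and the number of voters whose top choice lies in it for MaxP, each increase by exactly the swing voter. So I must ensure the $c$-free winner still beats $\{c\}$ after the flip. For MaxP, whose objective is the total number of first-place votes won by the selected set, this forces a larger first-place base for $e$ and a higher threshold. I would take $P=\{3{:}\,c,\ 1{:}\,c\succ e\succ d,\ 3{:}\,d,\ 3{:}\,e,\ 1{:}\,d\succ c\}$ with $\tau=4$: before the flip $\{c,d\}$ is feasible and captures $8$ first-place votes, beating $\{d,e\}$'s $7$ and all other feasible sets, so MaxP returns $\{c,d\}\ni c$; after the flip $\{c,d\}$, $\{c,e\}$ and $\{c,d,e\}$ are all infeasible, the only feasible $c$-set $\{c\}$ captures $5$, and this is beaten by the feasible $c$-free set $\{d,e\}$, which captures $6$. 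In both constructions I would finally check that each winning set is a \emph{strict} maximizer, so the fixed lexicographic tie-break plays no role and $c$ is genuinely excluded after the monotone change, completing the two counterexamples.
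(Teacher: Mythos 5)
Your counterexamples are correct (I verified the feasibility computations and the objective values before and after each flip, and both maximizers are strict so tie-breaking is moot), and the mechanism is essentially the paper's: promoting the winning party in one swing ballot destroys the feasibility of the optimal coalition containing it, forcing the rule onto a feasible set that excludes it. The only difference is that the paper engineers a single profile ($\{5{:}\,b,\ 1{:}\,b\succ c,\ 5{:}\,c,\ 1{:}\,c\succ b,\ 3{:}\,a\succ b,\ 3{:}\,a\succ c,\ 4{:}\,a\}$ with $\tau=9$) that refutes both rules at once, whereas you use two tailored profiles — a cosmetic rather than substantive divergence.
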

\begin{proof}
	Consider the profile $P = \{5: b, 1:b \succ c, 5:c, 1:c\succ b, 3: a \succ b, 3: a \succ c, 4:a\}$ with $\tau = 9$. If $a$ is in the outcome, then neither $b$ nor $c$ can be because they will represent at most 8 voters. Thus, the two possible outcomes of maximal size are $\{a\}$ and $\{b,c\}$, and $\{b,c\}$ is the one maximizing both the number of voter represented and the number of voters with a first choice. Now, consider the profile $P' = \{5: b, 1:b \succ c, 5:c, 1:\underline{b}\succ \underline{c}, 3: a \succ b, 3: a \succ c, 4:a\}$ in which we increased the ranking of $b$ in a ranking. In this profile, $a$ is still incompatible with $b$ and $c$, but now $b$ and $c$ are also incompatible together, as $c$ will only represent 8 voters. Thus, the only possible outcomes of maximal size are singletons, and $\{a\}$ is the one that maximizes the objectives of MaxP and MaxR. This contradicts monotonicity.
\end{proof}

\begin{proposition}
	MaxR satisfies independence of clones, MaxP does not.
\end{proposition}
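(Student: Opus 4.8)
The plan is to treat the two rules separately. For MaxR I will show that cloning is essentially invisible to the maximisation, so that deleting a clone leaves both the optimal value and the family of attainable non-clone outcomes unchanged; for MaxP I will exhibit a single generic profile on which the axiom fails, exploiting the fact that MaxP's objective counts \emph{first-choice} coverage and is therefore sensitive to the splitting of a party's top-votes between two clones.

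For MaxR, write $P' = P|_{C \setminus \{c'\}}$ and fix any $T \subseteq C \setminus \{c,c'\}$. The first step is a transfer lemma with three parts. (i) A voter is represented by $T \cup \{c\}$, by $T \cup \{c'\}$, or by $T \cup \{c,c'\}$ under exactly the same condition (they rank a party of $T$, or they rank the clone pair), so these three sets represent the same voters. (ii) Because $c$ and $c'$ are clones, the highest-ranked member of $T \cup \{c\}$ in any voter's ranking is found by skipping $c'$, which sits adjacent to $c$; hence $\supp_{T \cup \{c\}}$ and the entire score profile coincide whether computed in $P$ or in $P'$, so $T \cup \{c\}$ is feasible in $P$ iff it is feasible in $P'$, with identical represented voters (and likewise for $T$ alone). (iii) If $T \cup \{c,c'\}$ is feasible in $P$, then merging the supporters of $c'$ onto $c$ shows $T \cup \{c\}$ is feasible in $P$ and represents the same voters. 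Combining (i)--(iii), the maximum number of represented voters $R^{*}$ is the same in $P$ and $P'$, and, crucially, the family of pairs (non-clone part $T$, ``contains a clone'' bit) realised by representation-optimal feasible sets is identical in the two profiles: a clone-containing optimum of $P$ projects to the optimum $T \cup \{c\}$ of $P'$, and conversely.

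It remains to check that the lexicographic tie-break selects matching outcomes, and this is the delicate part. The key observation is that the tie-break's choices among the parties of $C \setminus \{c,c'\}$ are driven only by which non-clone parts are attainable at value $R^{*}$, and by the previous paragraph this family agrees for $P$ and $P'$; the extra freedom in $P$ (taking $c$, $c'$, or both) is internal to the clone pair and, by part (iii), never enlarges the attainable non-clone parts nor the attainable value. Running the greedy ``include the next party if some optimal feasible set still extends the current commitments'' construction in parallel on $P$ and $P'$, the two runs therefore agree on every non-clone party and agree on whether at least one clone is ultimately included, which is exactly conditions~(1) and~(2). The main obstacle is making this alignment fully rigorous, since in $P$ the tie-break can ``see'' the duplicated clone; the argument above isolates why this extra visibility cannot perturb the non-clone part.

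For MaxP, I claim the following profile is a counterexample, with threshold $\tau = 7$: $8$ voters $a$, $5$ voters $b$, $4$ voters $c \succ c' \succ a$, and $3$ voters $c' \succ c \succ b$. Here $c$ and $c'$ are clones, and one verifies that every restriction $P|_S$ has a unique plurality loser, so $P$ is generic. Since $3\tau = 21 > 20 = n$, no three-party set is feasible. In $P$ the feasible two-party sets and their first-choice coverage are $\{a,b\}$ (coverage $13$), $\{a,c\}$ ($12$) and $\{a,c'\}$ ($11$), while $\{b,c\}$, $\{b,c'\}$ and $\{c,c'\}$ are infeasible; hence MaxP returns $\{a,b\}$. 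Deleting $c'$ concentrates the seven top-votes of the clone pair onto $c$: in $P|_{C \setminus \{c'\}}$ the set $\{a,c\}$ is feasible with coverage $8 + 7 = 15$, beating $\{a,b\}$ (coverage $13$), so MaxP returns $\{a,c\}$. The outcomes thus disagree: $\{c,c'\} \cap \{a,b\} = \emptyset$ yet $c \in \{a,c\}$, violating condition~(1) (and $b$ violates condition~(2)). The mechanism is general: MaxP values a party by how many voters rank it first, so splitting it into two clones halves that value, and under the feasibility constraint this can flip the optimal set.
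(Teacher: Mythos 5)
Your proposal is correct and follows essentially the same route as the paper: for MaxR, the paper's entire argument is the one-line observation that representation counts (and hence feasibility and the optimum) transfer between $P$ and $P|_{C\setminus\{c'\}}$, which is exactly your transfer lemma, just stated less carefully than you state it (the paper does not address the lexicographic tie-break at all, so your flagged ``delicate part'' is if anything more scrupulous than the original). For MaxP, the paper simply reuses the counterexample from \Cref{prop:clones} ($\{6:a,\ 4:c'\succ c\succ a,\ 3:c\succ c'\succ a\}$ with $\tau=7$, giving $\{a\}$ versus $\{c\}$); your profile is different but I have checked the arithmetic and it is an equally valid witness of the same mechanism.
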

\begin{proof}

	For MaxR, simply observe that the number of voters represented by a set $S \subseteq C$ in $P'$ is the same than in $P$, and the number of voters represented by $S \cup \{c'\}$ in $P'$ is the same than in $P$ if $c \in S$, and is equal to the one of $S \cup \{c\}$ otherwise. Thus, MaxR will satisfy the conditions of the axiom.
	
	For MaxP, we can use the same profile than for GP and DO.
\end{proof}

\section{Representation of Unrepresented Voters}\label{app:unrepresented}

In this Appendix section, we discuss the representation of unrepresented voters axiom, show that it is not satisfied by DO, GP, and STV, and propose variants of the rules that satisfy it. Then, we check which axioms are satisfied by these variants.

\subsection{Failure of the Axiom}

\begin{proposition}
	\label{prop:representation_unrepresented}
	DO, GP, STV, MaxP and MaxR do not satisfy representation of unrepresented voters.
\end{proposition}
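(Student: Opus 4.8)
First I would unwind the definition: a rule $f$ \emph{violates} representation of unrepresented voters on $(P,\tau)$ exactly when, writing $S=f(P,\tau)$, there is a party $c\notin S$ ranked by at least $\tau$ voters who are unrepresented in $S$ (since an unrepresented voter ranks no party of $S$, the requirement $c\succ_i S$ just says ``$i$ ranks $c$''). The key preliminary observation is that if at least $\tau$ voters rank $c$, then the singleton $\{c\}$ is already feasible; hence for the three \emph{set-maximal} rules (GP, MaxP, MaxR) a violation is possible only if $S\cup\{c\}$ is \emph{infeasible}. So the construction must create a genuine conflict that keeps $c$ out while still stranding $\tau$ of its supporters.

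\textbf{The gadget.} I would build a single profile defeating all five rules at once, using two ``anchor'' parties $a_1,a_2$ held exactly at the threshold, a ``shared'' party $c$, and private dummy parties. Concretely, with $\tau=3$, take
$P=\{\,2\!:a_1,\ 2\!:a_2,\ 1\!:d_1\succ c\succ a_1,\ 1\!:d_2\succ c\succ a_2,\ 1\!:e_1\succ c,\ 1\!:e_2\succ c,\ 1\!:e_3\succ c\,\}.$
The design principles are: (i) each anchor $a_j$ has exactly $\tau$ supporters, one of whom (the voter $d_j\succ c\succ a_j$) prefers $c$ to $a_j$, so adding $c$ drains that supporter and drops $a_j$ below $\tau$; this makes $\{a_1,c\}$, $\{a_2,c\}$ and $\{a_1,a_2,c\}$ infeasible, leaving $\emptyset,\{a_1\},\{a_2\},\{c\},\{a_1,a_2\}$ as the only feasible sets. (ii) The three voters $e_i\succ c$ rank $c$ but neither anchor, so whenever the outcome is $\{a_1,a_2\}$ they are exactly the $\tau$ stranded $c$-supporters. (iii) Every voter ranking $c$ ranks it \emph{below} a private dummy ($d_1,d_2,e_1,e_2,e_3$), each ranked first by fewer than $\tau$ voters; this forces the plurality score of $c$ to be $0$ and every dummy to be an infeasible singleton.

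\textbf{Verification and the main obstacle.} With this profile I would check each rule. DO finds no party with $\ge\tau$ first-place votes and returns $\emptyset$; STV first eliminates $c$ (score $0$) and then peels off the dummies, after which $d_1,d_2$ are gone and $a_1,a_2$ climb to score $3$, so it halts at $\{a_1,a_2\}$ (independently of how ties among the score-$1$ dummies break); GP and MaxP, scanning by plurality, add $a_1,a_2$ and then cannot include $c$ without dropping an anchor, so they output $\{a_1,a_2\}$; MaxR compares feasible sets and selects $\{a_1,a_2\}$ (representing $6$ voters) over the best $c$-containing set $\{c\}$ (representing $5$). In every case $c$ is excluded while $e_1\succ c,\ e_2\succ c,\ e_3\succ c$ are unrepresented and rank $c$, so the axiom fails. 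I expect the main obstacle to be precisely the tension this gadget resolves: the set-maximal rules demand that including $c$ break feasibility (forcing the anchor-at-threshold conflict), whereas the plurality-based rules (DO, GP, MaxP) would simply grab $c$ if it carried appreciable first-place support—so $c$ must be ``wanted'' by $\tau$ unrepresented voters yet have zero plurality, which is exactly what routing its supporters through throwaway dummy parties achieves.
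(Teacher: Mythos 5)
Your proof is correct: the feasible sets of your gadget are exactly $\emptyset,\{a_1\},\{a_2\},\{c\},\{a_1,a_2\}$, each of the five rules lands on $\emptyset$ (DO) or $\{a_1,a_2\}$ (the rest), and in either case the three voters with ballots $e_i\succ c$ are unrepresented and all rank $c$, giving $\tau=3$ witnesses against the excluded party $c$. The paper proceeds by the same basic method (explicit counterexamples) but uses two much smaller, separately tailored profiles: $\{3\!:a,\ 2\!:b\succ c,\ 1\!:c\}$ with $\tau=3$ for DO and STV, and $\{3\!:a,\ 2\!:b\succ a,\ 2\!:c\succ b,\ 2\!:d\succ b\}$ with $\tau=4$ for GP; notably, its printed proof gives no example for MaxP and MaxR at all, and indeed the first of those profiles would not work for them, since $\{a,c\}$ is feasible there and both optimization rules would select it. Your single gadget is heavier to verify but is the more complete argument, precisely because you engineered $c$ to have zero plurality score (so the plurality-driven rules ignore it) while simultaneously making every $c$-containing superset of $\{a_1,a_2\}$ infeasible (so the set-maximal rules cannot rescue it) -- the tension you correctly identify as the crux. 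The only stylistic cost is that one must also check the tie-breaking among the score-one dummies in STV and GP, which you do: the outcome is invariant under any such tie-breaking.
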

\begin{proof}
	To show that DO and STV fail this axiom, consider the following profile $P= \{3:a, 2:b\succ c, 1: c\}$ with $\tau = 3$. In this profile, $\DO(P,\tau) = \STV(P,\tau) = \{a\}$. But the last three voters all ranked $c$ and are unrepresented, thus breaking the axiom.

	For GP, consider the profile $P = \{3:a, 2:b \succ a, 2: c \succ b, 2: d \succ b\}$ with $\tau = 4$. The first party added to the outcome is $a$, since it is ranked first by the most voters. Then, $c$ and $d$ cannot be added because they are ranked by only two voters each, and $b$ cannot be added because it will reduce the number of supporters of $a$ to only three, which is below the threshold, making the set $\{a,b\}$ unfeasible. Thus, $GP(P,\tau) = \{a\}$, but the last $\tau = 4$ voters all ranked $b$ and are unrepresented, breaking the axiom.

\end{proof}

\subsection{Variants of the Rules}

In this section, we describe the algorithm to generate outcomes satisfying the representation of unrepresented voters axiom. In the following, we write $c \succ \emptyset$ to mean that $c$ is ranked in the truncated ranking $\succ$. The algorithm is as follows:

\begin{enumerate}
	\item Start with any feasible set of parties $S_0$ (e.g., the outcome of DO, STV or GP).
	\item Now, repeat the following until there is no party $c$ such that $|\{ i \in N : \best_{S_k}(i) = \emptyset \text{ and } c\succ_i \emptyset\}| \ge \tau$:
	\begin{enumerate}
		\item Let $c$ be a party such that $|\{ i \in N : \best_{S_k}(i) = \emptyset \text{ and } c\succ_i \emptyset\}| \ge \tau$.
		\item Identify all parties $c'$ which do not get enough support anymore if we add $c$. More formally, this corresponds to the set $S_{-}$ such that $c' \in S_{-}$ if  $|\{ i \in N : \best_{S_k}(i) = c' \text{ and } c' \succ c\}| < \tau$. %
		\item Add $c$ to the outcome, and remove all parties from $S_{-}$: $S_{k+1} = S_k \cup \{c\} \setminus S_{-}$.
	\end{enumerate}
\end{enumerate}

We can show that this rule terminates. If it terminates, this means we cannot find any party satisfying the condition of the axiom in step (2), and thus the representation of unrepresented voters axiom is satisfied. Note that we left some freedom in the rule, as we can choose the initial set of parties $S_0$ and the way we break ties in the selection of the party to add. We denote DO$^+$, STV$^+$ and GP$^+$ the rules that are obtained by starting with the outcome of DO, STV, and GP respectively, and breaking ties by selecting the party that is supported by the most unrepresented voters $c = \arg\max_{c' \notin S_k} |\{ i \in N : \best_{S_k}(i) = \emptyset \text{ and } c'\succ_i \emptyset\}|$.

\begin{proposition}
	The algorithm described above terminates.
\end{proposition}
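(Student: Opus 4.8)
The plan is to exhibit a bounded, integer-valued potential function on feasible outcomes that strictly increases at every iteration of the loop; since it is bounded, the loop can only run finitely often. Writing $r(S) = |\{i \in N : \best_S(i) \neq \emptyset\}|$ for the number of represented voters and $|S|$ for the number of selected parties, the potential I would use is
\[
  \Phi(S) = r(S) - (\tau-1)\,|S|.
\]

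Before settling on this, it is worth recording why the two obvious candidates fail, since this is the main obstacle. The count $r(S)$ is \emph{not} monotone on its own: when a step removes a party $c' \in S_-$, the supporters of $c'$ that rank $c'$ but not the newly added party $c$ may rank nothing else in the new outcome and so become unrepresented, and there can be up to $\tau-1$ such voters per removed party, which may exceed the $\tau$ voters newly served by $c$. Symmetrically, $|S|$ is not monotone either, since a step adds one party but may delete several. The key realization is that these two effects are quantitatively linked: each removed party can strand at most $\tau-1$ represented voters, and the term $-(\tau-1)|S|$ is calibrated to refund exactly that loss, so that the linear combination $\Phi$ becomes monotone.

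Then I would carry out the following bookkeeping for a single step that adds $c$ and deletes $S_- \subseteq S_k$, writing $S = S_k$ and $S' = S_{k+1} = (S \cup \{c\}) \setminus S_-$. First, the set of \emph{newly represented} voters is exactly $A = \{i : \best_S(i) = \emptyset \text{ and } i \text{ ranks } c\}$: every such voter is represented in $S'$ (since $c \in S'$), and conversely a voter unrepresented in $S$ but represented in $S'$ must rank the only new party $c$, as $S' \subseteq S \cup \{c\}$. The loop precondition gives $|A| \ge \tau$; note that since $|A| \ge \tau \ge 1$ the added party is genuinely new, i.e.\ $c \notin S$. Second, any \emph{newly unrepresented} voter $i$ must have had $\best_S(i) = c' \in S_-$ and must not rank $c$; since $i$ ranks $c'$ but not $c$ we get $c' \succ_i c$, so $i$ lies in $\{j : \best_S(j) = c',\ c' \succ_j c\}$, a set of size $< \tau$ by the definition of $S_-$. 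As these sets are disjoint across distinct $c'$, the number $b$ of newly unrepresented voters satisfies $b \le (\tau-1)\,|S_-|$.

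Finally I would combine these counts. Since $c \notin S$ and $S_- \subseteq S$, we have $|S'| - |S| = 1 - |S_-|$, while $r(S') - r(S) = |A| - b$. Hence
\[
  \Phi(S') - \Phi(S) = \big(|A| - b\big) - (\tau-1)\big(1 - |S_-|\big) \ge \big(\tau - (\tau-1)|S_-|\big) - (\tau-1)\big(1 - |S_-|\big) = 1,
\]
so $\Phi$ strictly increases by at least $1$ at each iteration. As $\Phi(S) \le r(S) \le n$ for every outcome, the loop terminates after at most $n + (\tau-1)m$ iterations (taking $\tau \ge 1$, the relevant case; for $\tau = 0$ the procedure is degenerate). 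I expect the only delicate points to be the two counting claims — that the newly represented voters are exactly $A$, and that each deleted party can strand strictly fewer than $\tau$ voters — after which the arithmetic collapses cleanly to the constant $1$.
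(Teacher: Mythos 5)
Your proof is correct and follows essentially the same route as the paper's: the paper establishes the identical per-step inequality $r(S_{j+1}) \ge r(S_j) + \tau - (\tau-1)\,|S_-|$ (each added party gains at least $\tau$ represented voters, each removed party strands at most $\tau-1$) and then telescopes it around a hypothetical cycle $S_k = S_{k'}$ to reach a contradiction, which is exactly your potential $\Phi(S) = r(S) - (\tau-1)\,|S|$ in disguise. Making the potential explicit and bounding it between $-(\tau-1)m$ and $n$ is a slightly cleaner packaging that additionally yields the explicit iteration bound $n + (\tau-1)m$, but the substance of the argument is the same.
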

\begin{proof}
To see why the algorithm that successively adds parties to the outcome terminates, observe that if it does not, this means that it cycles. Thus, if we denote $S_k$ the outcome obtained at the step $k$, then there exists $k' > k$ such that $S_{k'} = S_k$. For a subset $S \subseteq C$, denote $r(S)$ the number of voters who ranked at least one alternative from $S$:
\[
r(S) = |\{ i \in N \mid best_S(i) \ne \emptyset \}
\]

We have $r(S_{k'}) = r(S_k)$. Note that at each step we add a party to the outcome such that there exists at least $\tau$ unrepresented voters who have ranked this party. If this cause some other parties to be removed from the outcome, this means that these parties only had strictly less than $\tau$ supporters putting them above all other parties from the outcome. Thus, by removing a party, we add at most $\tau-1$ unrepresented voters. Therefore, we have the following for all $j$:
\begin{align*}
	r(S_{j+1}) \ge r(S_j) + \tau - (|S_j|+1 - |S_{j+1}|)\cdot(\tau-1) \\
	r(S_{j+1}) -  r(S_j) \ge 1 - (|S_j| - |S_{j+1}|)\cdot(\tau-1)
\end{align*}
Thus, we have 
\begin{align*}
	r(S_{k'}) -  r(S_k) 
=&\sum_{j=k}^{k'-1} r(S_{j+1}) -  r(S_j) 
\ge \sum_{j=k}^{k'-1} 1 - (|S_j| - |S_{j+1}|)\cdot(\tau-1) \\
=&(k'-1-k) - |S_k| + |S_{k'}| 
=(k'-1-k)
\end{align*}
since $S_k = S_{k'}$. Moreover, we clearly have $k' > k+1$, otherwise this means that $S_{k+1} = S_k$, which is impossible since we add a new party at each step. Thus, we have $r(S_{k'}) -  r(S_k) \ge 1$, which is a contradiction with the fact that $r(S_{k'}) = r(S_k)$. Therefore, there exists no cycle and the algorithm terminates.
\end{proof}

\begin{corollary}
	DO$^+$, STV$^+$ and GP$^+$ satisfy representation of unrepresented voters.
\end{corollary}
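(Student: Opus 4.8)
The plan is to derive the corollary almost directly from the preceding termination proposition, adding only two ingredients: a verification that the iterative procedure always outputs a \emph{feasible} set, and the observation that the loop's stopping condition is precisely the negation of the representation-of-unrepresented-voters condition. Since DO$^+$, STV$^+$ and GP$^+$ are just instances of the general algorithm with a fixed choice of $S_0$ and tie-breaking, it suffices to prove both facts for the generic procedure.

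First I would establish feasibility as a loop invariant by induction on the step $k$. The initial set $S_0$ is feasible by construction. For the inductive step, suppose $S_k$ is feasible and the procedure forms $S_{k+1} = S_k \cup \{c\} \setminus S_{-}$. For any $c' \in S_{k+1}$ with $c' \neq c$, we have $c' \in S_k \setminus S_{-}$, so by the defining condition of $S_{-}$ at least $\tau$ voters $i$ with $\best_{S_k}(i) = c'$ satisfy $c' \succ_i c$. Each such voter keeps $c'$ as its favourite in $S_{k+1}$: every deleted party of $S_{-}$ was ranked below $c'$ by $i$ (since $c'$ was already $i$'s top choice in $S_k$), and the newly inserted $c$ is also below $c'$ for these voters; hence $\score_{S_{k+1}}(c') \ge \tau$. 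For $c$ itself, it was added because at least $\tau$ voters $i$ are unrepresented in $S_k$ (and thus rank no party of $S_k$) yet satisfy $c \succ_i \emptyset$; since $S_{k+1} \subseteq S_k \cup \{c\}$, the only party of $S_{k+1}$ that any such voter ranks is $c$, so they all support $c$ and $\score_{S_{k+1}}(c) \ge \tau$. Therefore $S_{k+1}$ is feasible, and the final set $S^* = f(P,\tau)$ is a genuine feasible outcome.

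Next I would connect the halting condition to the axiom. By the preceding proposition the procedure terminates, so upon halting there is no party $c$ with $|\{i \in N : \best_{S^*}(i) = \emptyset \text{ and } c \succ_i \emptyset\}| \ge \tau$. For a voter $i$ who is unrepresented in $S^*$, no party of $S^*$ is ranked by $i$, so for any $c \notin S^*$ the condition $c \succ_i \emptyset$ (that $c$ occurs in $i$'s truncated ranking) is equivalent to $c \succ_i S^*$; moreover a party lying in $S^*$ is never ranked by an unrepresented voter, so the property can only hold for $c \notin S^*$. Rewriting the halting condition through this equivalence yields exactly: there is no $c \notin S^*$ with $|\{i \in N : \best_{S^*}(i) = \emptyset \text{ and } c \succ_i S^*\}| \ge \tau$, which is the definition of representation of unrepresented voters.

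The only genuinely non-routine step is the feasibility invariant, and within it the point to check carefully is that the voters counted for $c'$ in the definition of $S_{-}$ really remain supporters of $c'$ after both the deletion of $S_{-}$ and the insertion of $c$. Everything else is bookkeeping, and termination itself is already supplied by the previous proposition.
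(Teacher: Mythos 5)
Your proof is correct and takes essentially the same route as the paper, which obtains the corollary directly from the termination proposition together with the remark that the loop's halting condition is exactly the negation of the axiom's condition. You additionally verify the feasibility invariant and the equivalence of $c \succ_i \emptyset$ with $c \succ_i S$ for unrepresented voters --- details the paper leaves implicit --- and both checks are sound.
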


Note moreover that these rules also satisfy inclusion of direct winners, and that STV$^+$ satisfies representation of solid coalitions, as it is not possible to eliminate all parties of a solid coalition (since they have at least $\tau$ supporters, there will always be at least one party of the coalition in the outcome). However, DO$^+$ and GP$^+$ do not satisfy representation of solid coalitions, since the counter example provided in the proof is a full profile. However, note that these variants will fail some axioms satisfied by the original rules. In particular, they fail all the following axioms, except independence of clones which is satisfied by STV$^+$.

\section{Screenshots} \label{app:screenshots}
In this appendix, we display screenshots of our survey, together with a version translated into English by Google Translate. The screenshots are shown in Figures \ref{fig:page1} to \ref{fig:page8}. Page 3 is omitted to save space; it shows the posters of all 38 lists.

\begin{figure}[h]
	\includegraphics[width=0.48\linewidth,frame=0.2pt,trim=1cm 12.5cm 1cm 1cm,clip]{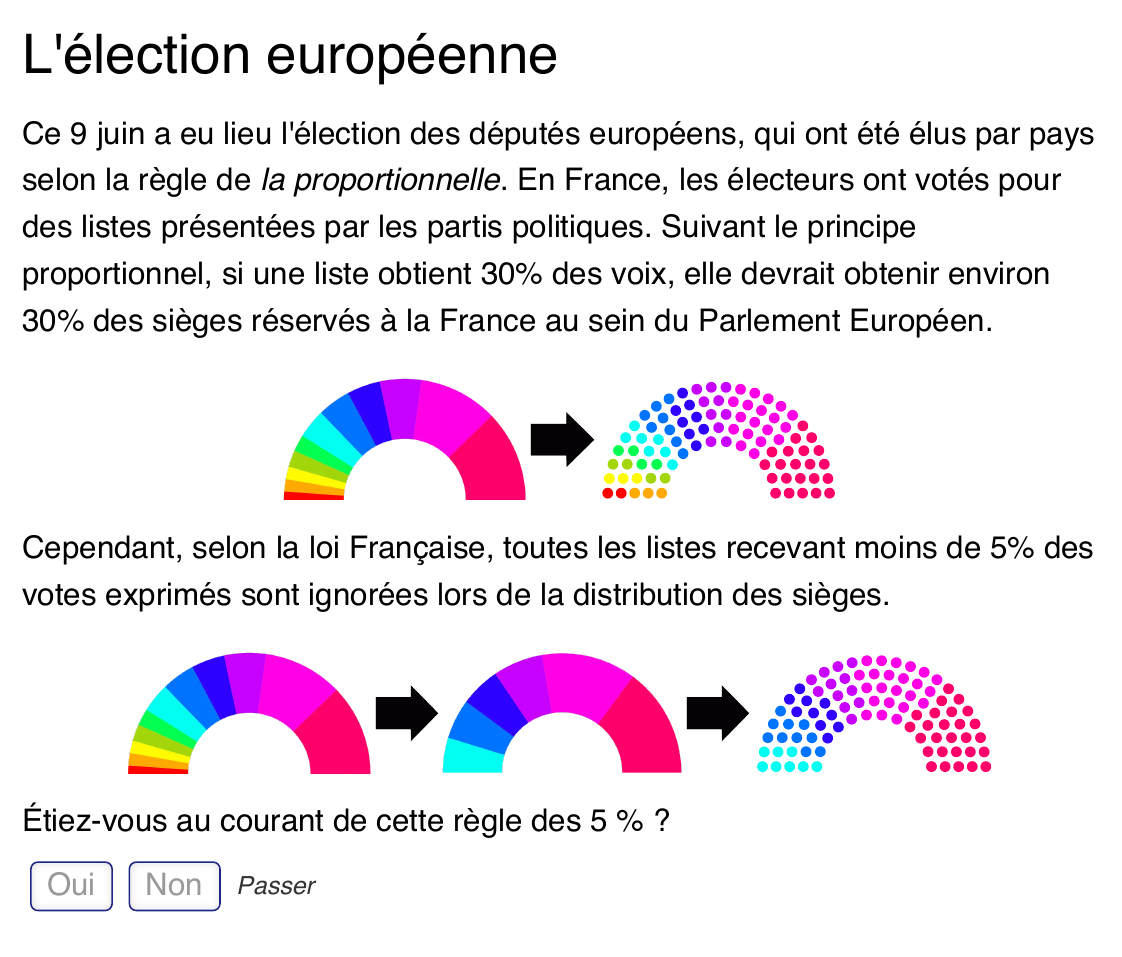}
	\includegraphics[width=0.48\linewidth,frame=0.2pt,trim=1cm 12.5cm 1cm 1cm,clip]{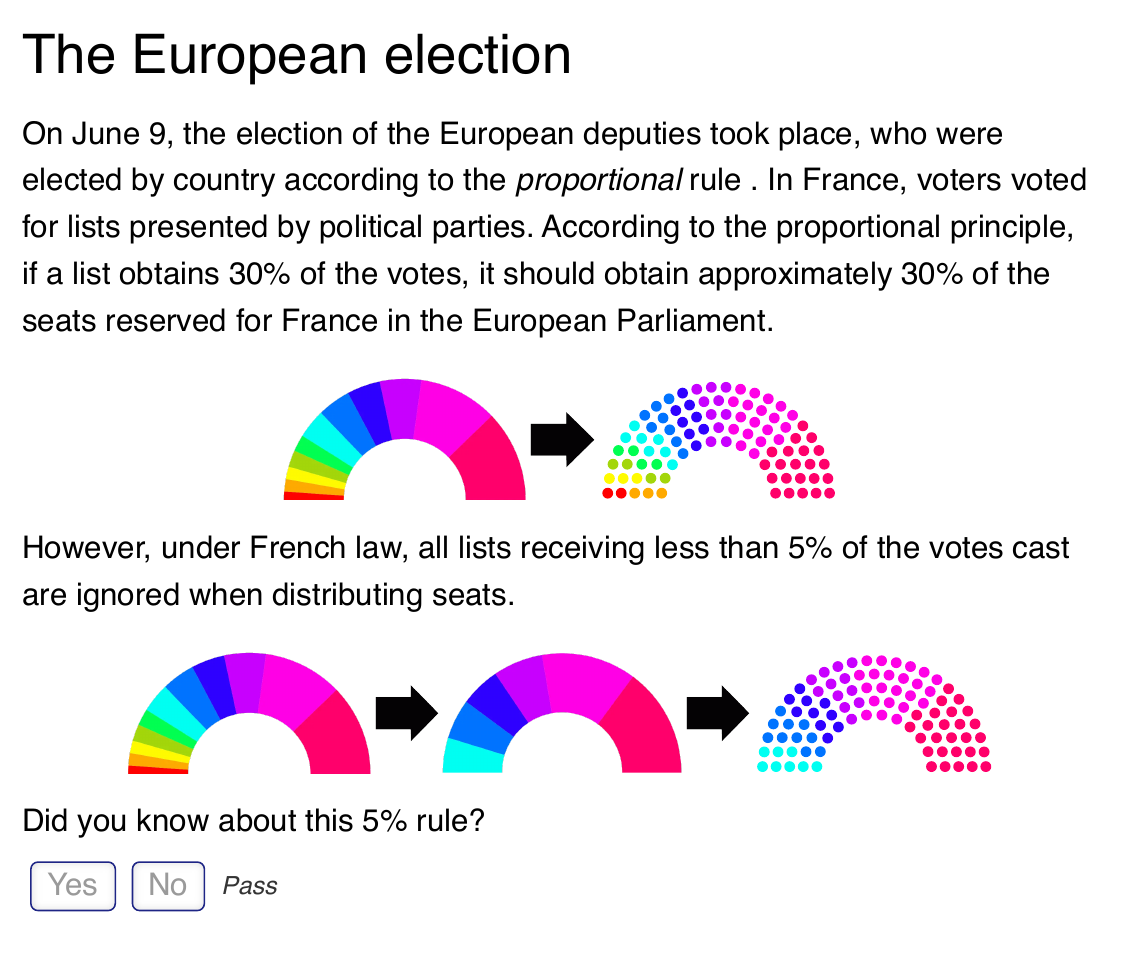}
	\caption{Page 1}
	\label{fig:page1}
\end{figure}

\begin{figure}[h]
	\includegraphics[width=0.48\linewidth,frame=0.2pt,trim=1cm 1cm 1cm 1cm,clip]{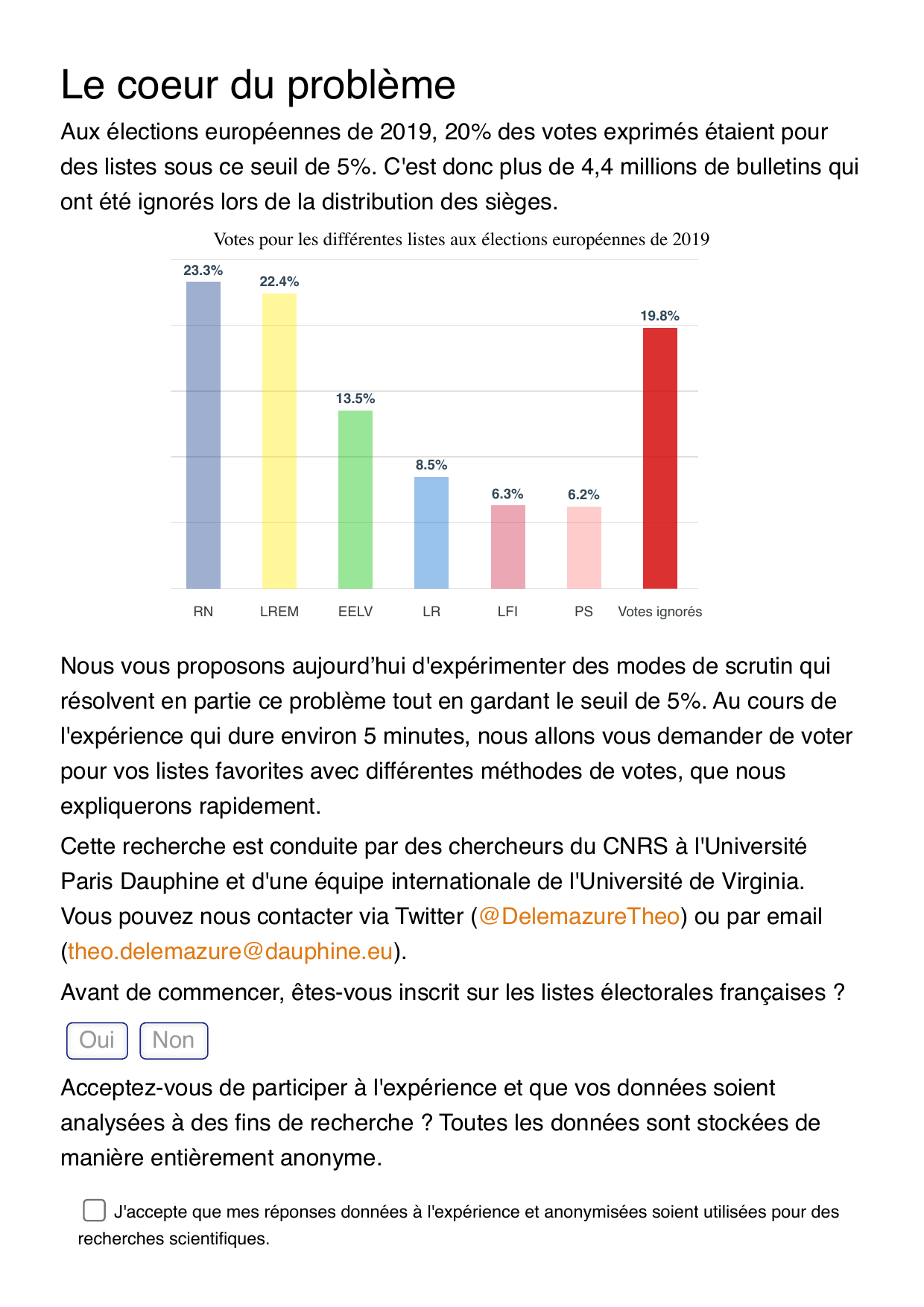}
	\includegraphics[width=0.48\linewidth,frame=0.2pt,trim=1cm 1cm 1cm 1cm,clip]{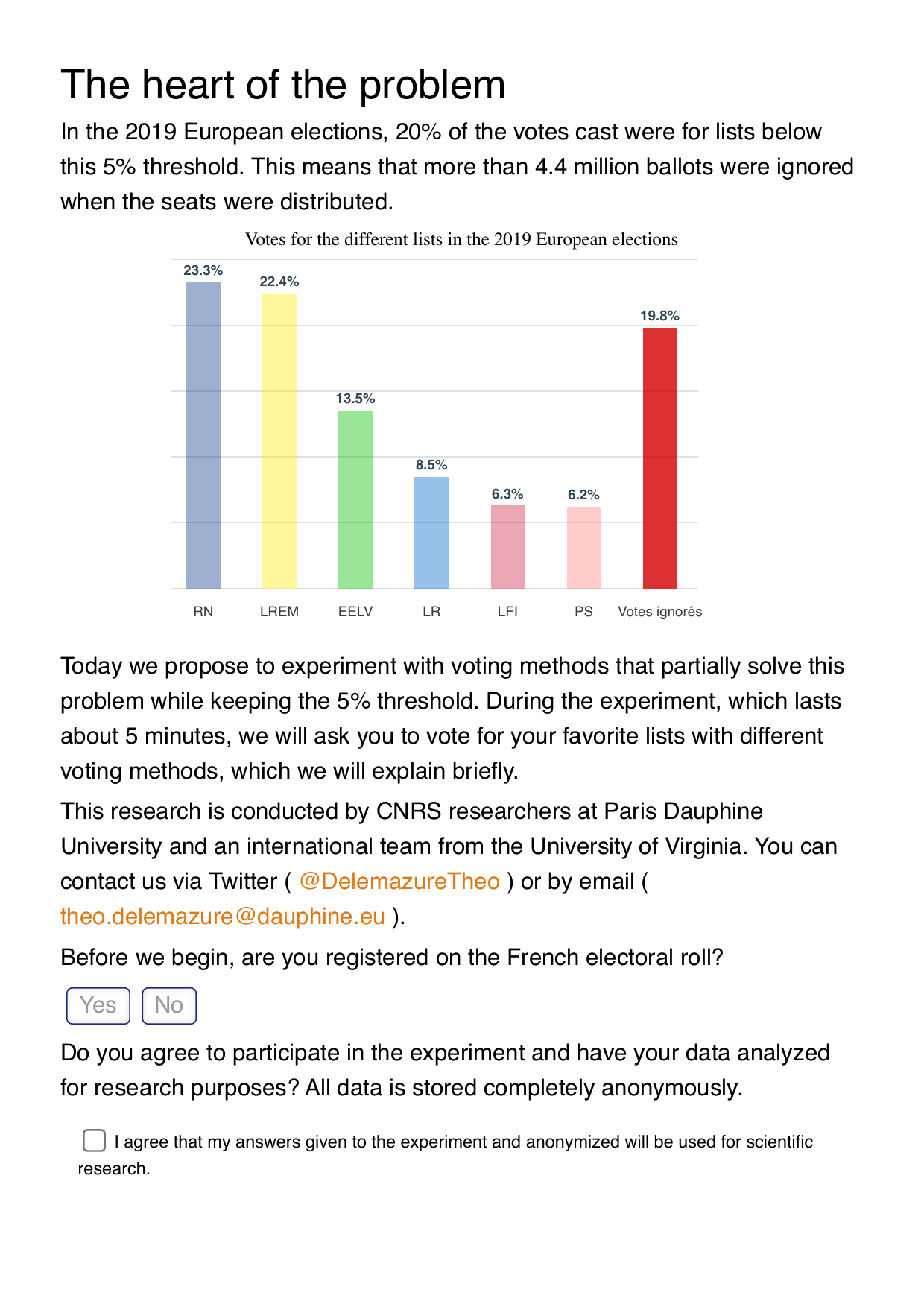}
	\caption{Page 2}
	\label{fig:page2}
\end{figure}

\begin{figure}[h]
	\includegraphics[width=0.48\linewidth,frame=0.2pt,trim=1cm 5cm 1cm 1cm,clip]{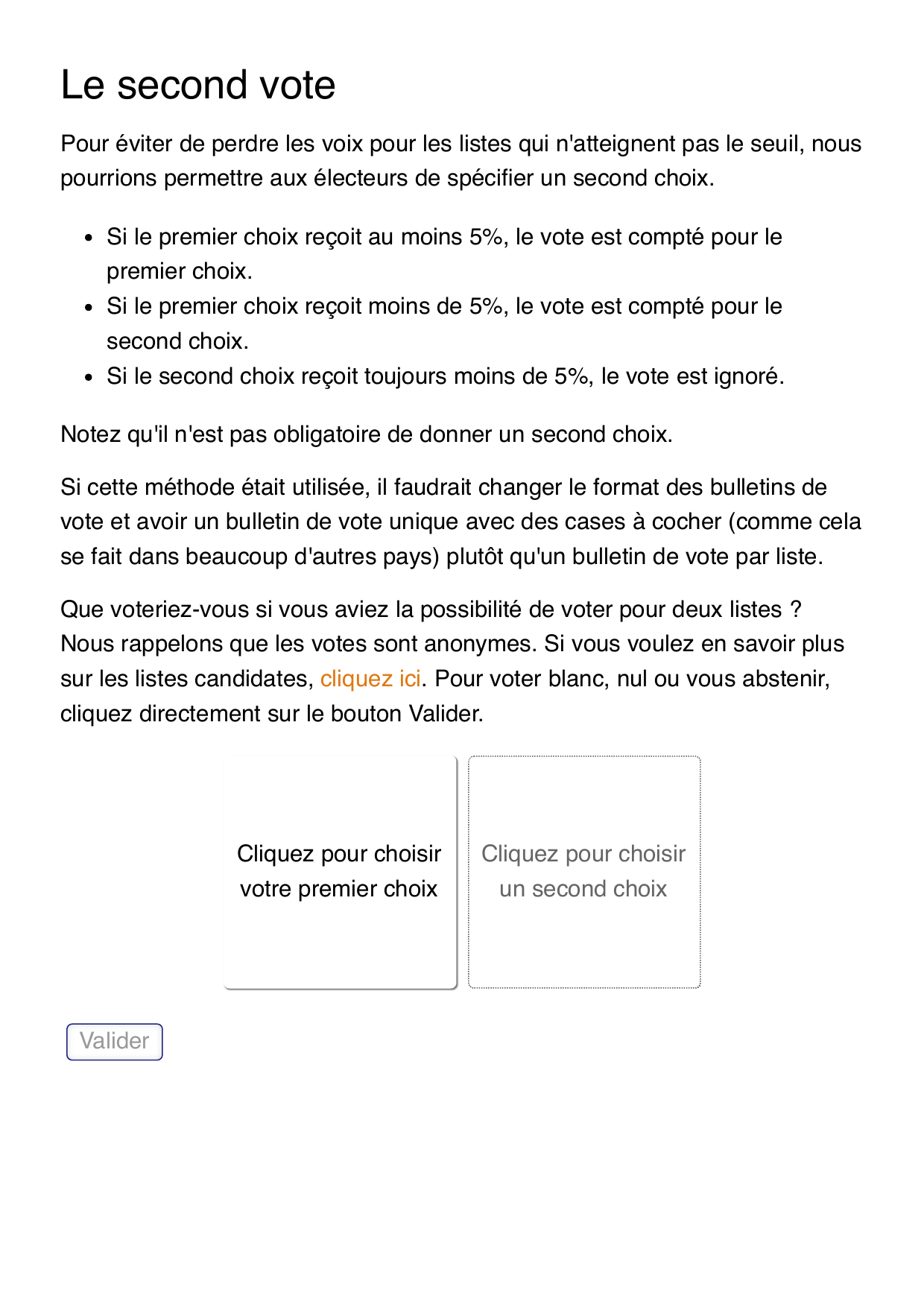}
	\includegraphics[width=0.48\linewidth,frame=0.2pt,trim=1cm 5cm 1cm 1cm,clip]{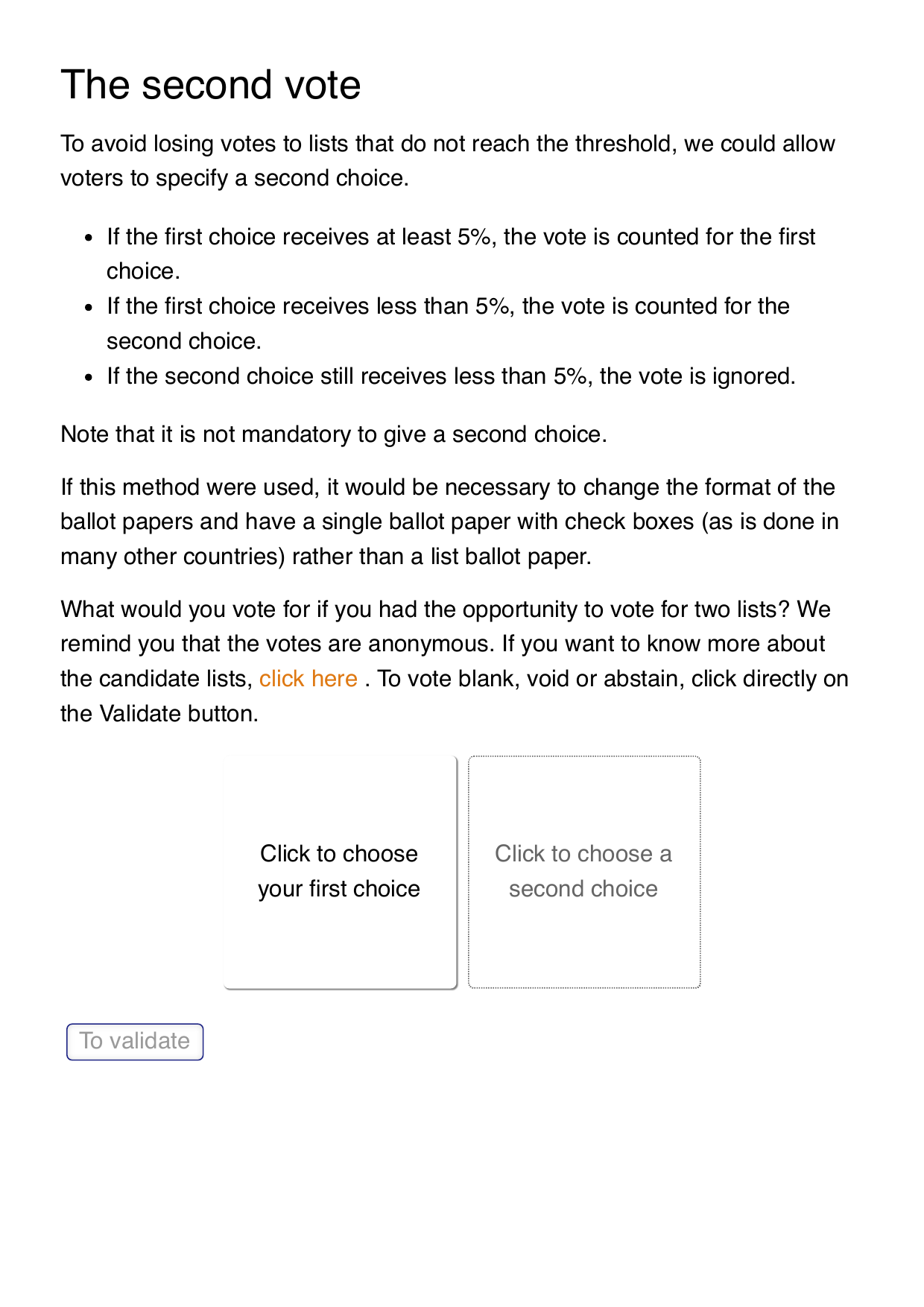}
	\caption{Page 4}
	\label{fig:page4}
\end{figure}

\begin{figure}[h]
	\includegraphics[width=0.48\linewidth,frame=0.2pt,trim=1cm 5.2cm 1cm 1cm,clip]{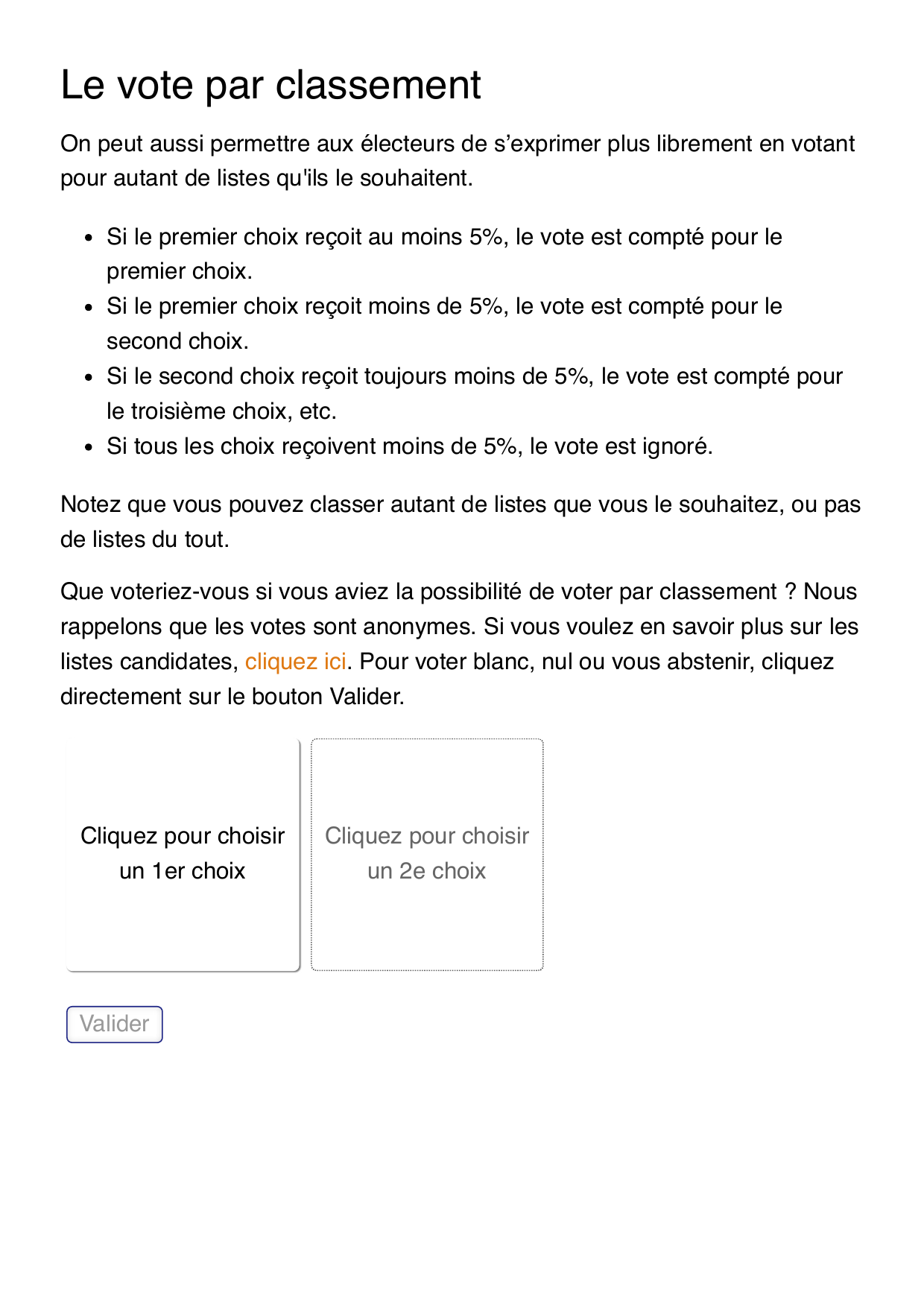}
	\includegraphics[width=0.48\linewidth,frame=0.2pt,trim=1cm 5.2cm 1cm 1cm,clip]{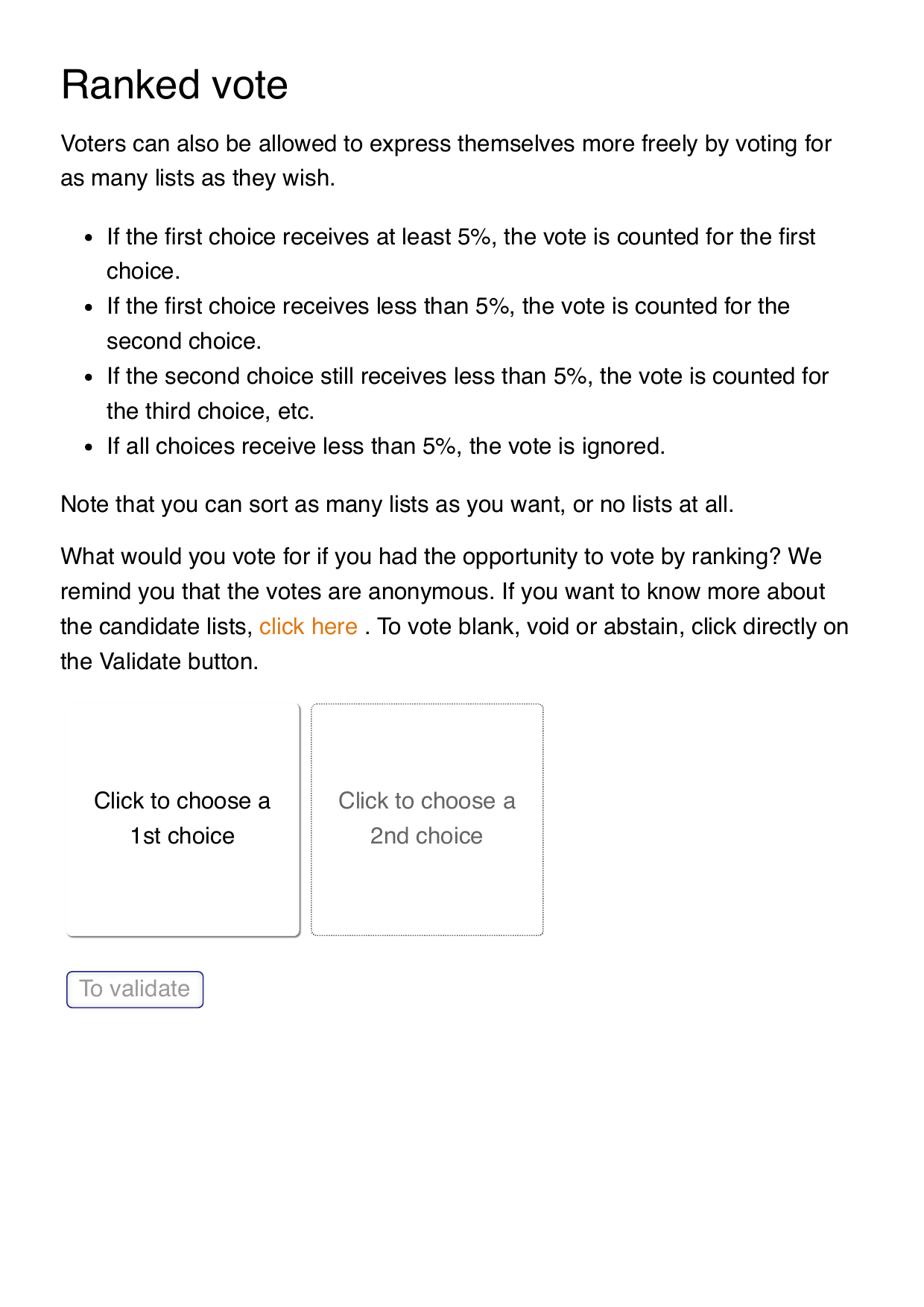}
	\caption{Page 5}
	\label{fig:page5}
\end{figure}

\begin{figure}[h]
	\includegraphics[width=0.48\linewidth,frame=0.2pt,trim=1cm 14.5cm 1cm 1cm,clip]{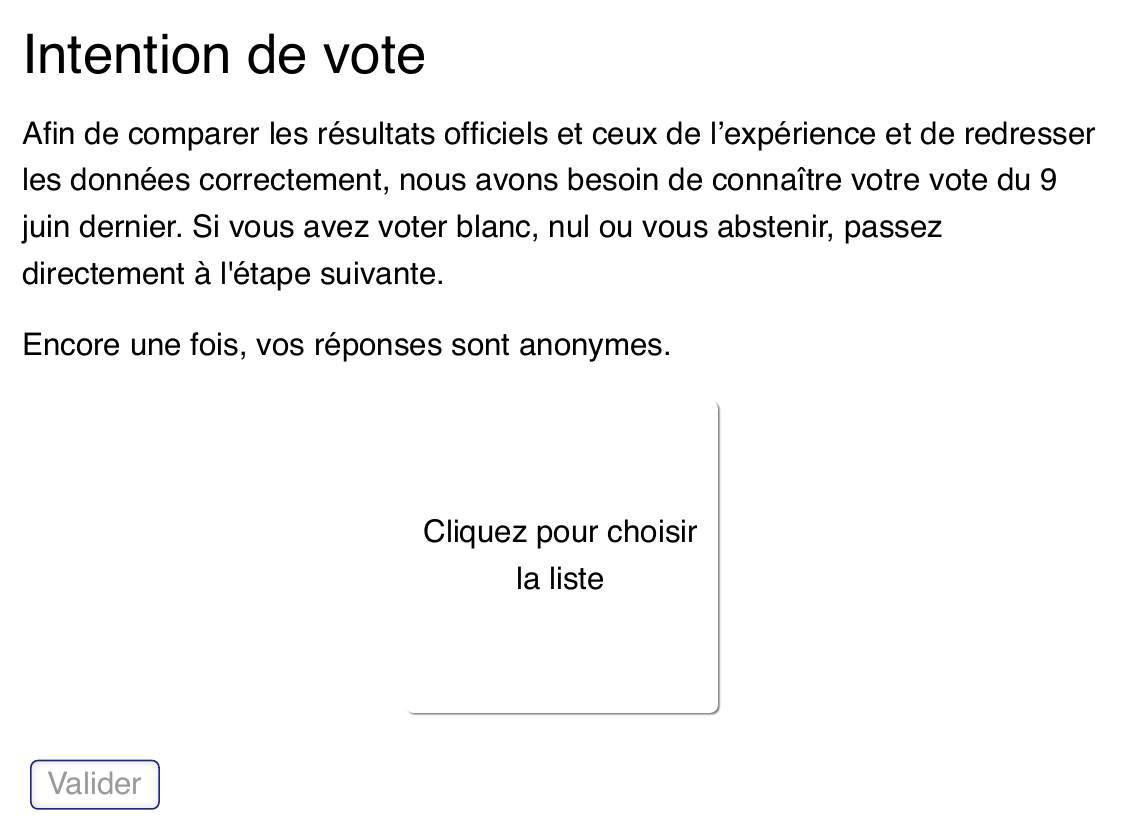}
	\includegraphics[width=0.48\linewidth,frame=0.2pt,trim=1cm 14.5cm 1cm 1cm,clip]{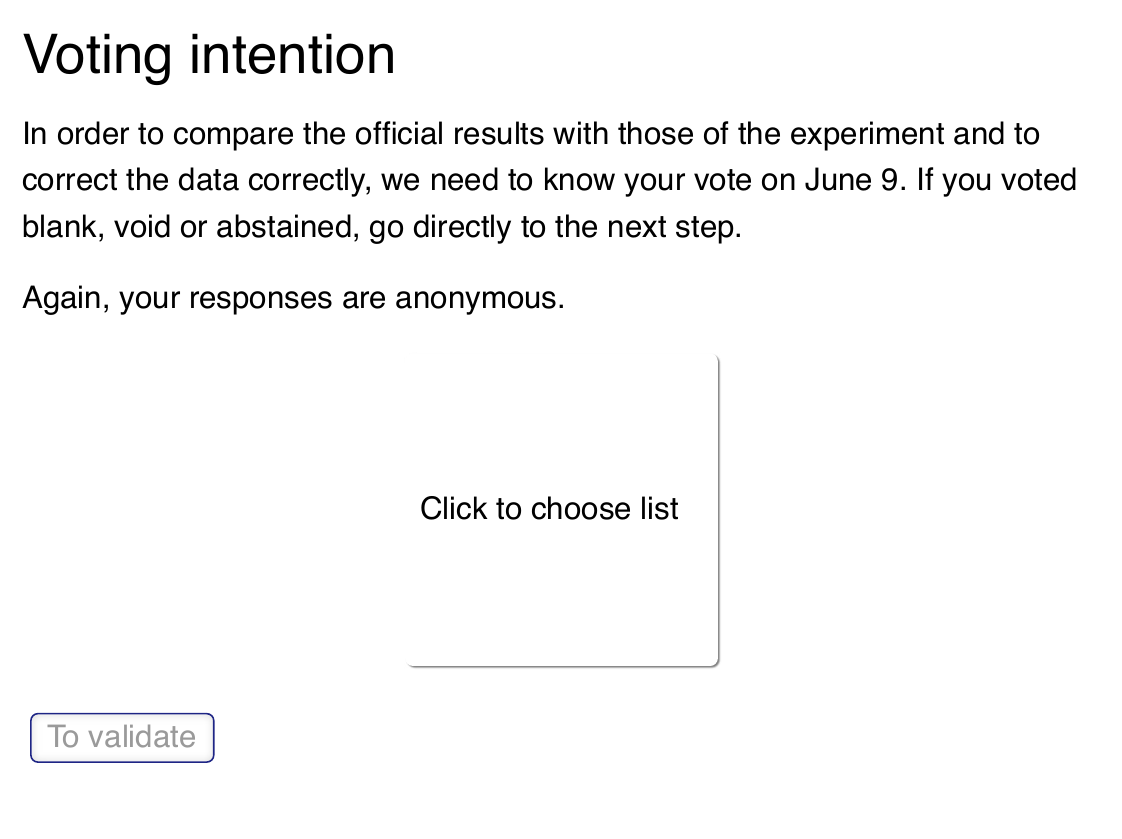}
	\caption{Page 6}
	\label{fig:page6}
\end{figure}

\begin{figure}[h]
	\includegraphics[width=0.48\linewidth,frame=0.2pt,trim=1cm 15cm 1cm 1cm,clip]{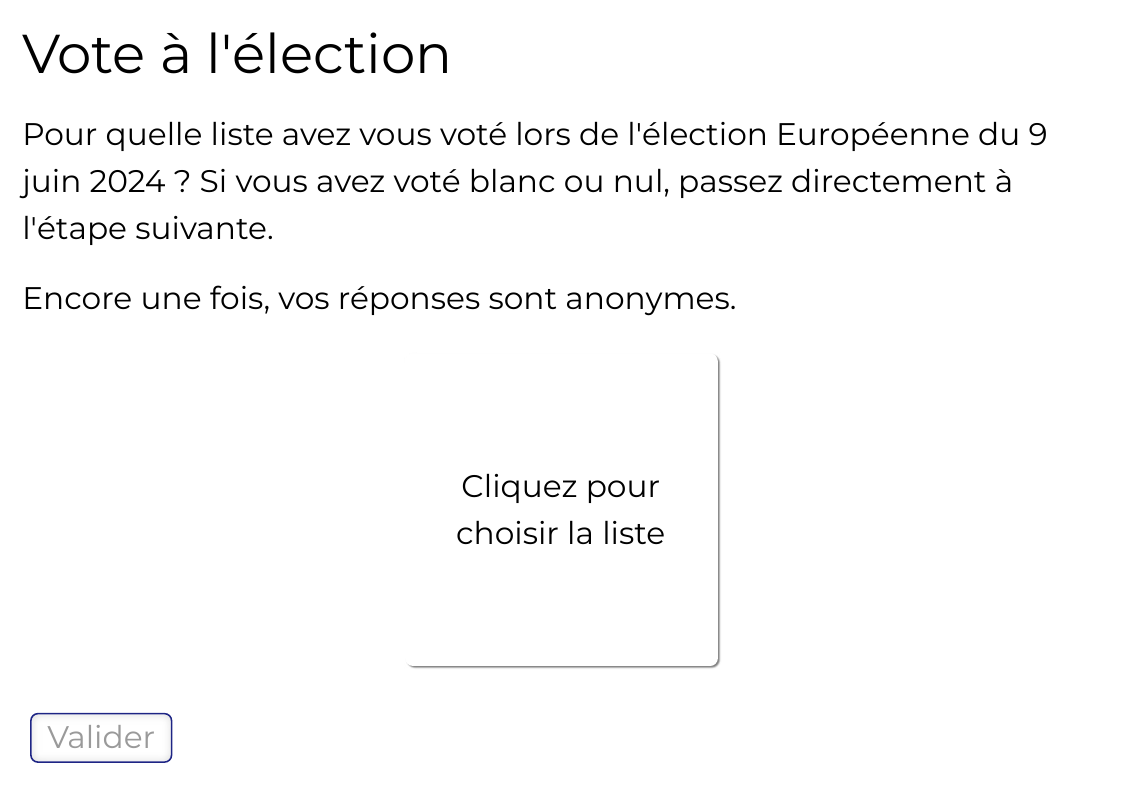}
	\includegraphics[width=0.48\linewidth,frame=0.2pt,trim=1cm 15cm 1cm 1cm,clip]{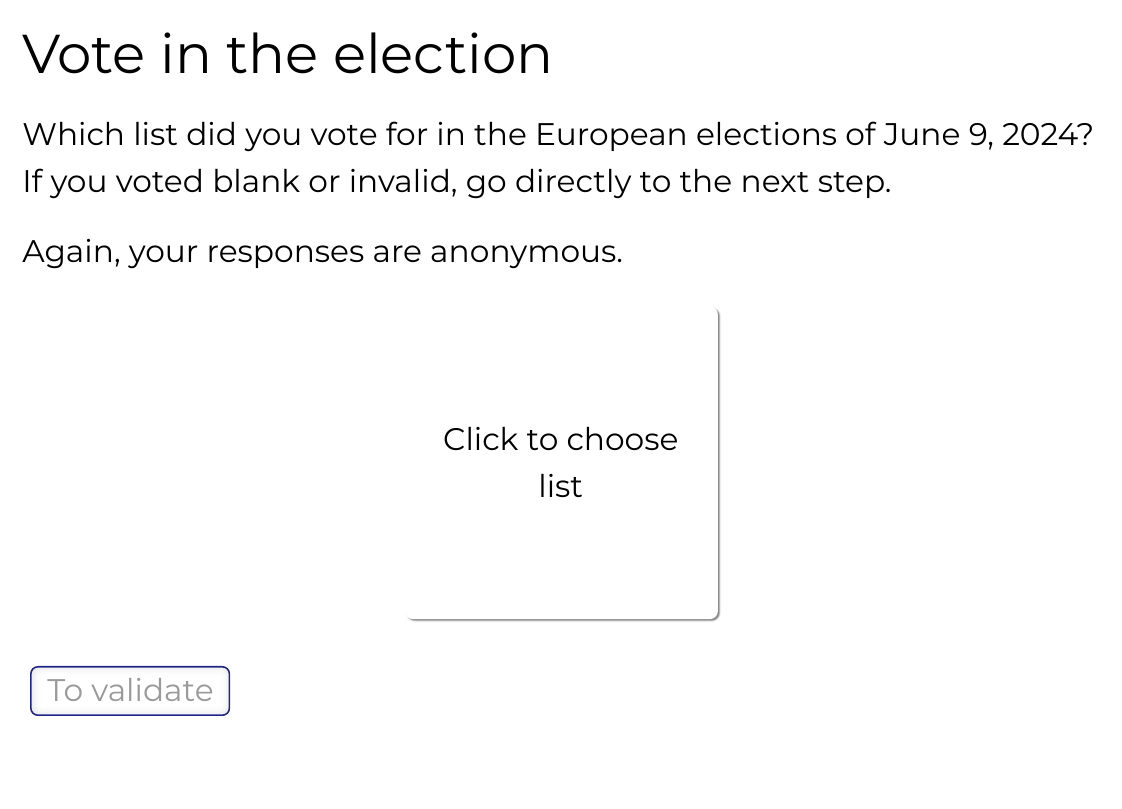}
	\caption{Page 6 (after election day)}
	\label{fig:page6dynata}
\end{figure}

\begin{figure}[h]
	\includegraphics[width=0.48\linewidth,frame=0.2pt,trim=1cm 5cm 1cm 1cm,clip]{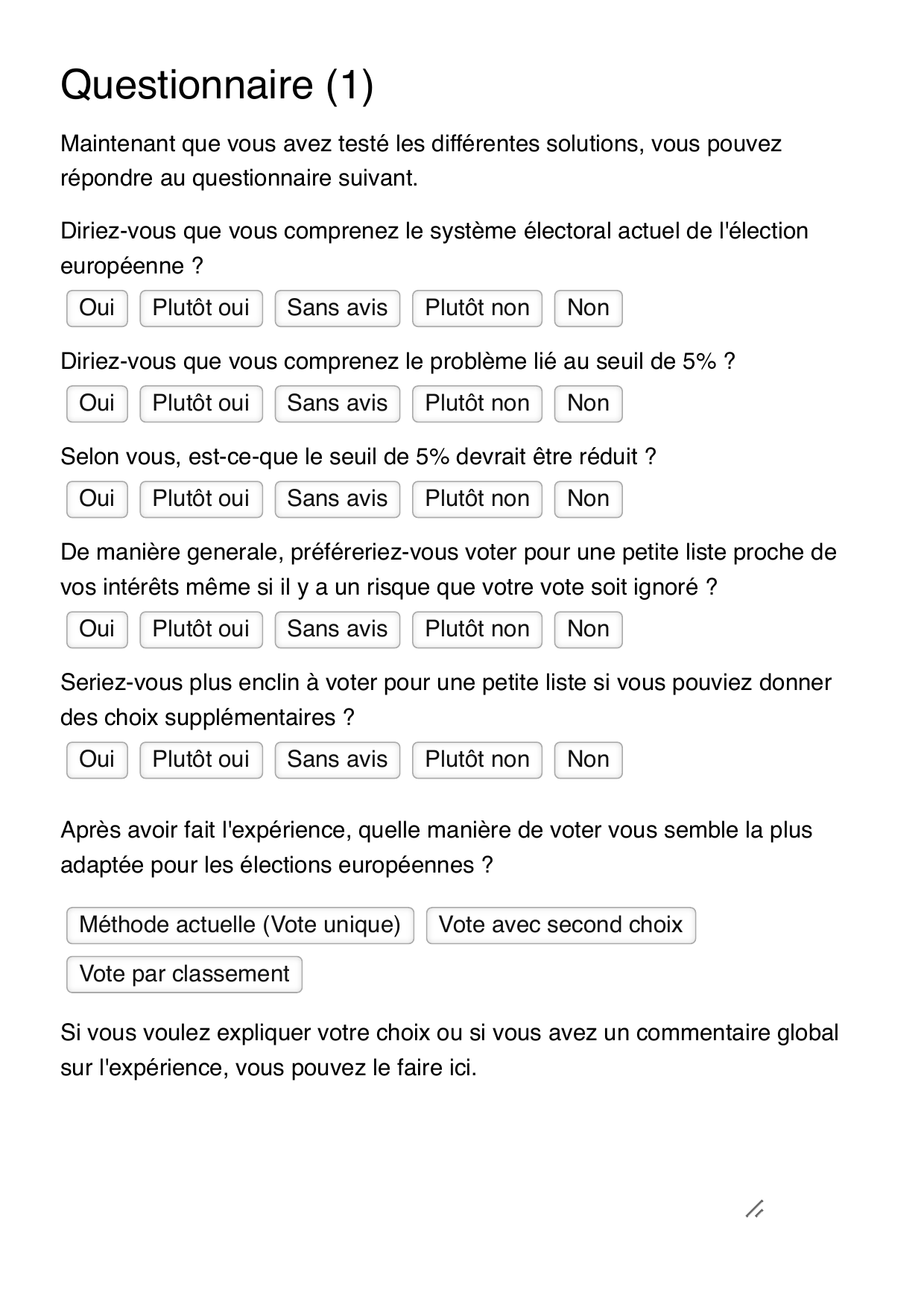}
	\includegraphics[width=0.48\linewidth,frame=0.2pt,trim=1cm 5cm 1cm 1cm,clip]{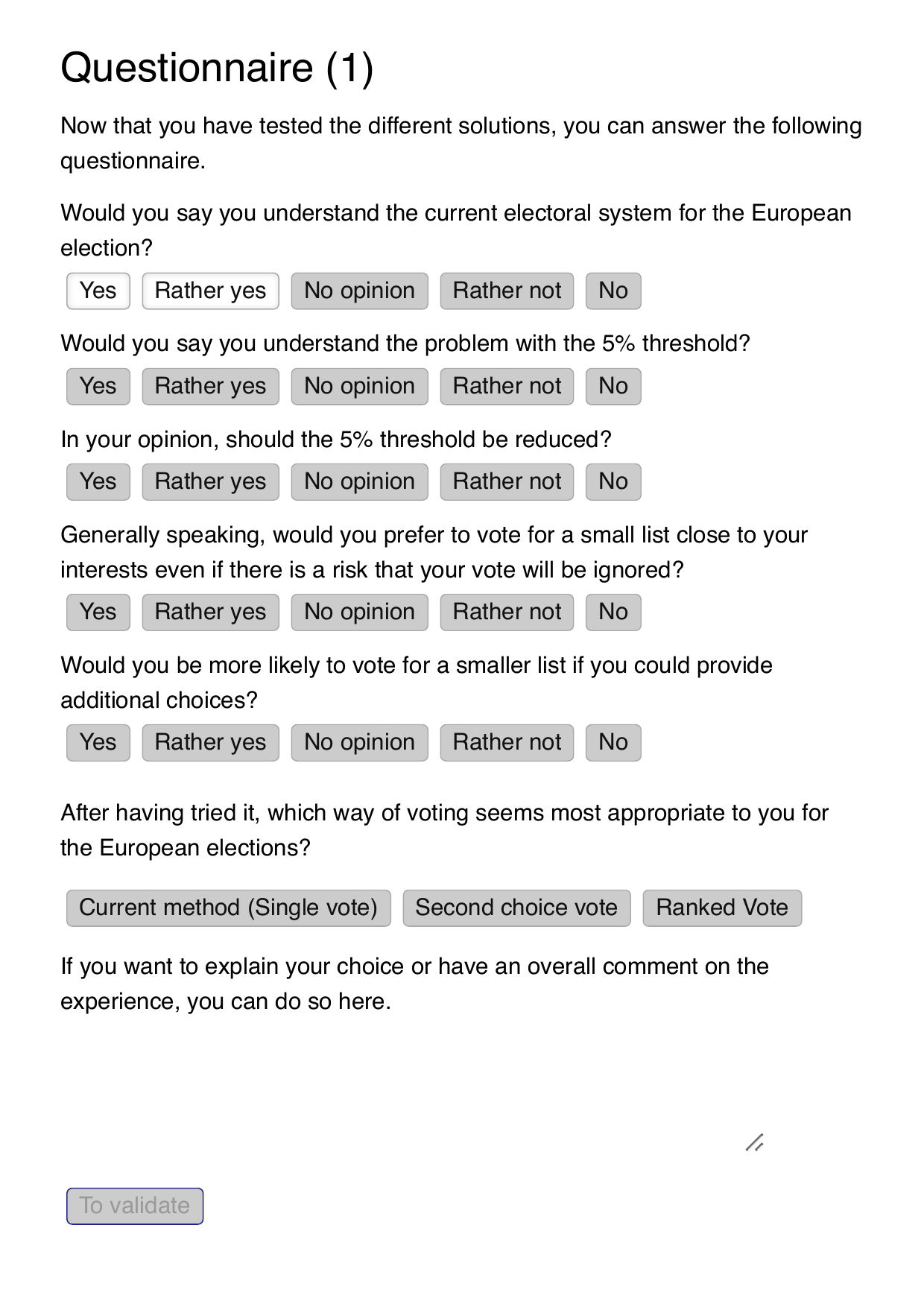}
	\caption{Page 7}
	\label{fig:page7}
\end{figure}

\begin{figure}[h]
	\includegraphics[width=0.48\linewidth,frame=0.2pt,trim=1cm 18cm 1cm 1cm,clip]{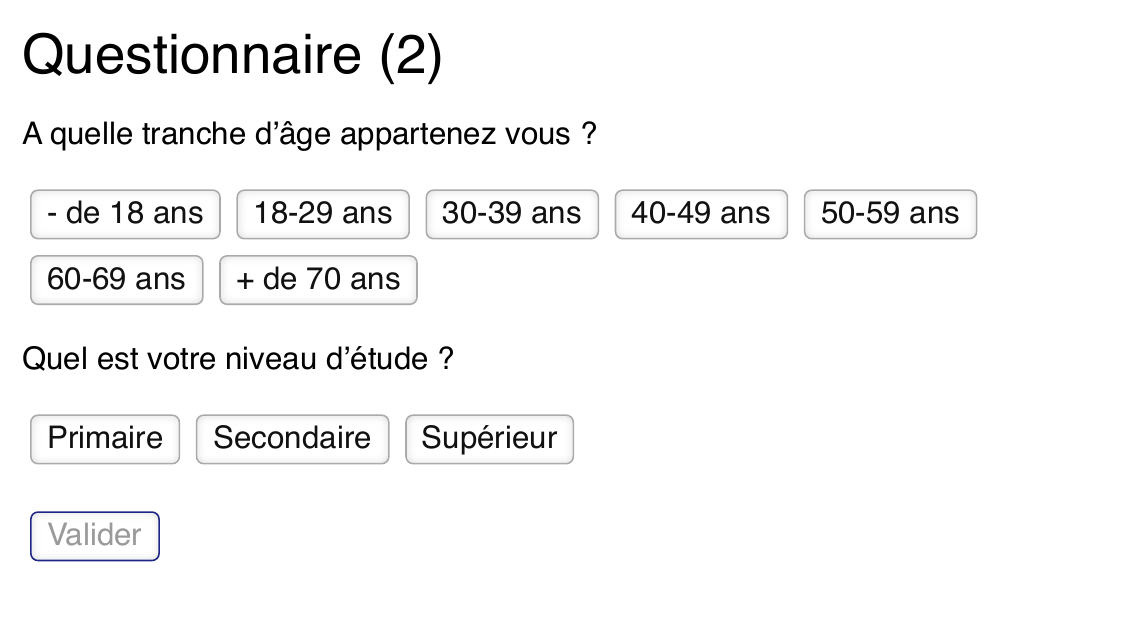}
	\includegraphics[width=0.48\linewidth,frame=0.2pt,trim=1cm 18cm 1cm 1cm,clip]{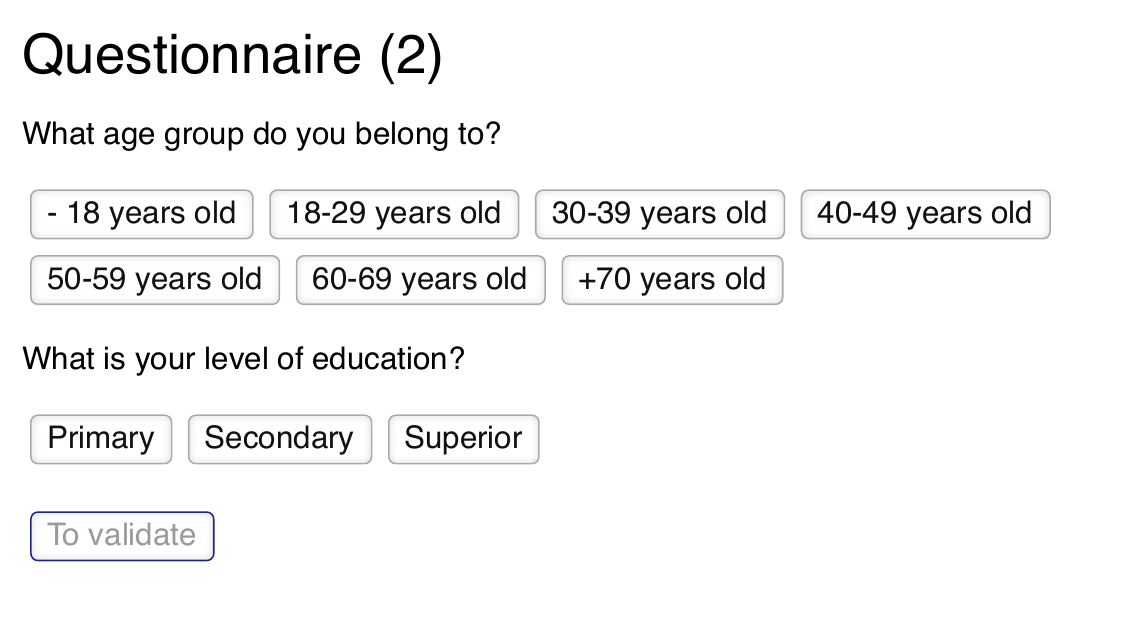}
	\caption{Page 8}
	\label{fig:page8}
\end{figure}

\clearpage

\section{Comparison with the Uninominal System} \label{app:uninominal}

The rule actually used in most proportional elections with thresholds only selects parties whose plurality score is above the threshold. Therefore, seen as a party selection rule, it coincides with DO, but its representation and score functions are determined differently: voter $i$ is represented by its top party  $\best^1_S(i)$ if it belongs to $S$, and by $\emptyset$ otherwise. Thus, the score of a party in $S$ is the number of voters who rank it in top position in their initial vote. This can lead to very different distributions of representation degree, and thus distributions of seats in the parliaments, as a large number of votes may be wasted. For example, in the profile
\begin{align*}
	\text{100: } & a &
	\text{100: } & b &
	\text{99: } & c \succ b 
\end{align*}
with threshold $\tau = 100$, the selected parties are $\{a,b\}$. Under the uninominal system, both parties have 100 supporters, while under DO, party $b$ has almost twice as many supporters as $a$.

We give a second, more realistic example.

\begin{example} Five parties compete: Red (left), Green (ecologist), Pink (center-left), Blue (center-right), Brown (far right)
	\begin{align*}
		\text{8: } & Red \succ Pink \succ Green&
		\text{6: }  & Green \succ Pink \succ Red&
		\text{5: } & Pink \succ Green \succ Red\\
		\text{7: } & Red \succ Green \succ Pink&
		\text{5: }  & Green \succ Red \succ Pink&
		\text{5: } & Pink \succ Red \succ Green\\
		\text{10: }  & Blue \succ Pink&
		\text{15: } & Brown \succ Blue&
		\text{4: } & Pink \succ Blue \succ Green\\
		\text{15: }  & Blue \succ Brown&%
	\end{align*}
	Let $\tau = 15$. With DO we select $\{Red, Blue, Brown\}$ with $\rep(Red) = 36$, $\rep(Blue) = 29$ and $\rep(Brown) = 35$. With the uninominal system however, we have $\rep(Red) = 15$, $\rep(Blue) = 25$ and $\rep(Brown) = 35$, leading to a completely different distribution of seats between the parties. For instance, 
 with the D'Hondt apportionment method and 10 seats in the parlement, we elect four Red, three Blue and three Brown if the party selection rule is DO, while we elect two Red, three Blue and five Brown with the uninominal system.
 
With STV, Green is eliminated first, and the outcome is $\{Red, Pink, Blue, Brown\}$ with $\rep(Red) = 20$, $\rep(Pink) = 20$, $\rep(Blue) = 25$ and $\rep(Brown) = 35$. GP gives the same outcome. The 10 seats would then be distributed as follows: two seats for Red, two for Pink, two for Blue and four for Brown. 
\end{example}

\end{document}